\newcommand\reallywidehat[1]{%
\widehat{#1}
}
\providecommand{\algorithmname}{Algorithm}
\newcommand{\bbR}{\mathbb R}
\newtheorem{theorem}{Theorem}[section]
\newtheorem{lem}{Lemma}[section]
\newtheorem{rem}{Remark}[section]
\newtheorem{prop}{Proposition}[section]
\newtheorem{cor}{Corollary}[section]
\newcounter{hypA}
\newenvironment{hypA}{\refstepcounter{hypA}\begin{itemize}
  \item[({\bf A\arabic{hypA}})]}{\end{itemize}}
\newcounter{hypB}
\newcounter{hypD}
\newcounter{hypH}
\newenvironment{hypH}{\refstepcounter{hypH}\begin{itemize}
  \item[({\bf H\arabic{hypH}})]}{\end{itemize}}
\date{}
\newcommand{\calN}{\mathcal{N}}
\newcommand{\cU}{\mathcal{U}}
\newcommand{\bbE}{\mathbb{E}}
\newcommand{\bbP}{\mathbb{P}}
\newcommand{\bbZ}{\mathbb{Z}}
\begin{document}

\begin{center}

{\Large \textbf{On Unbiased Estimation for Discretized Models}}

\vspace{0.5cm}

BY JEREMY HENG$^{1}$, AJAY JASRA$^{2}$, KODY J.~H. LAW$^{3}$ \& ALEXANDER TARAKANOV$^{3}$

{\footnotesize $^{1}$ESSEC Business School, Singapore, 139408, SG.}
{\footnotesize E-Mail:\,} \texttt{\emph{\footnotesize heng@essec.edu}}\\
{\footnotesize $^{2}$Computer, Electrical and Mathematical Sciences and Engineering Division, King Abdullah University of Science and Technology, Thuwal, 23955, KSA.}
{\footnotesize E-Mail:\,} \texttt{\emph{\footnotesize ajay.jasra@kaust.edu.sa}}\\
{\footnotesize $^{3}$School of Mathematics,
University of Manchester, Manchester, M13 9PL, UK.}
{\footnotesize E-Mail:\,} \texttt{\emph{\footnotesize kodylaw@gmail.com; tarakanov517@gmail.com}}

\begin{abstract}
In this article, we consider computing expectations w.r.t.~probability measures which are subject to discretization error.
Examples include partially observed diffusion processes or inverse problems, where one may have to
discretize time and/or space, in order to practically work with the probability of interest. Given access only to these
discretizations, we consider 
the construction of unbiased Monte Carlo estimators of expectations w.r.t.~such target probability distributions. 
It is shown how to obtain
such estimators using a novel adaptation of randomization schemes and Markov simulation methods.
Under appropriate assumptions, these estimators 
possess finite variance and finite expected cost. 
There are two important consequences of this approach: 
(i) unbiased inference is achieved at the canonical complexity rate, 
and (ii) the resulting estimators can be generated 
independently, thereby allowing strong scaling to arbitrarily many parallel processors.
Several algorithms are presented, and applied to some examples of
Bayesian inference problems, with both simulated and real observed data. \\

\noindent \textbf{Key words}: Randomization Methods; Markov chain Monte Carlo; Bayesian Inverse Problems.
\end{abstract}

\end{center}

\section{Introduction}

Consider a probability measure $\pi$ on measurable space $(\mathsf{X},\mathcal{X})$ for which one wants to compute $\pi(\varphi):=\int_{\mathsf{X}}\varphi(x)\pi(dx)$
with $\varphi:\mathsf{X}\rightarrow\mathbb{R}$, $\pi-$integrable. Suppose one can only deal with a sequence of biased probability measures
$(\pi_l)_{l\in\mathbb{Z}^+}$ on $(\mathsf{X},\mathcal{X})$, with $\pi_l(|\varphi|)<+\infty~\forall l\in\mathbb{Z}^+$,  such that $\lim_{l\rightarrow\infty}\pi_l(\varphi)=\pi(\varphi)$ and 
$|\pi_{l+1}(\varphi)-\pi(\varphi)| \leq |\pi_{l}(\varphi)-\pi(\varphi)|$; 
examples include partially observed diffusion processes e.g.~\cite{mlpf} or inverse problems e.g.~\cite{beskos}. These latter models have a wide range of real applications such as engineering, finance and applied mathematics; see for instance \cite{stuart}.

In many applications of interest, one often resorts to constructing a $\pi_l-$invariant and ergodic Markov chain Monte Carlo (MCMC) kernel $K_l$ to estimate the expectation 
$\pi_l(\varphi)=\int_{\mathsf{X}}\varphi(x)\pi_l(dx)$. 
It is often the case that as $l$ grows, the cost of applying $K_l$ will also increase, often exponentially in $l$. Therefore one would often fix $l$ to achieve a given bias, and run the Markov chain for long enough to obtain a pre-specified variance which balances the bias.
In this article, we consider the task of producing unbiased estimators with finite variance. In particular,
using a stochastic simulation scheme based upon a family of Markov kernels $(K_l)_{l\in\mathbb{Z}^+}$, one can construct an estimator $\widehat{\pi(\varphi)}$ such that $\mathbb{E}[\widehat{\pi(\varphi)}]=\pi(\varphi)$ and $\mathbb{V}\textrm{ar}[\widehat{\pi(\varphi)}]<\infty$, 
where $\mathbb{E}$ and $\mathbb{V}\textrm{ar}$ denote expectation and variance w.r.t.~the law of the stochastic scheme to be developed, respectively.  
This scheme 
is of interest for several reasons:
\begin{enumerate}
\item{One can produce unbiased estimators of score functions which can be employed within stochastic gradient algorithms to perform parameter inference.}
\item{One can simulate i.i.d.~replicates of such unbiased estimators in parallel and combine them 
to construct lower variance estimators in a static context (sometimes referred to as strong parallel scaling).}
\item{The method provides a benchmark for other computations.}
\end{enumerate}
In terms of the first point, it is often simpler to verify the validity of stochastic gradient algorithms when the estimate of the noisy gradient is unbiased. 
The second point means that the variance can be reduced proportionally to the number of available processors,
for the same fixed expected cost per processor.
The third point means that one can check the precision of biased methods against the results. 

The approach that we follow is based upon an idea that was outlined in \cite{rssb_disc} and belongs to the class of \emph{doubly-randomized estimators} -- more specifically, estimators which arise from applying randomization of the type \cite{mcl,rhee} (see also \cite{vihola}) {\em twice}, in a nested fashion. The baseline version of the randomized estimators 
of \cite{mcl,rhee} place 
a probability distribution $\bbP_L$ over the level of discretization $l$. 
Given a simulation from this probability distribution, one way to obtain unbiased estimates of $\pi(\varphi)$ 
is to unbiasedly estimate $\pi_l(\varphi)-\pi_{l-1}(\varphi)$ for $l\in\mathbb{N}$, 
or an unbiased estimate of $\pi_0(\varphi)$ if one samples $l=0$. 
Denoting these estimators by $\xi_l$, the so-called ``single-term'' estimator is given by $\xi_L/\bbP_L(L)$, 
where $L\sim \bbP_L$.
In the inference context, it is challenging to obtain unbiased estimators, 
and this is where the second randomization comes into the picture.
For $\xi_0$, one can use the recently introduced
unbiased MCMC scheme of \cite{jacob1} (see also \cite{glynn2}). 
This estimator is built by truncating an 
infinite series of increments of coupled MCMCs once the chains meet.
The main complication is then to 
unbiasedly estimate $\pi_l(\varphi)-\pi_{l-1}(\varphi)$.
It will typically not suffice to estimate
$\pi_l(\varphi)$ and $\pi_{l-1}(\varphi)$ independently, 
because the resulting estimator would often have infinite variance. Therefore an additional technique is required.
The main contribution of this article is to develop several novel coupled MCMC schemes that can achieve 
unbiased estimates $\xi_l$ for $l>0$
such that the resulting estimator of $\pi(\varphi)$ is unbiased and of finite variance. 
The latter properties are proved mathematically under assumptions. 
We also implement our proposed algorithms on several challenging statistical applications.

The idea of using doubly-randomized estimators has appeared in several recent works. 
The work \cite{ub_bip} utilizes a ``coupled-sum'' estimator over sample size to debias multilevel
estimators of the type introduced in \cite{beskos}, which are then utilized in the framework described above. 
That method is applicable to the static/non-dynamic problems where one can evaluate the target distribution,
up to a normalizing constant, like the method we introduce here.
The work 
\cite{ub_pf} uses a ``single-term'' estimator over sample sizes in order debias estimators of the type
introduced in \cite{mlpf}. 
Those estimators are designed for online inference in dynamic problems, such as state space models,
and partially observed diffusion processes in particular.
The methodology in \cite{ub_bip} has infinite expected cost, 
whereas this is not always the case for the method introduced in this paper.
In a companion paper \cite{ub_grad},
we show how to extend the framework of this article to the context of partially observed diffusion processes. 
A possible alternative to our approach would be that of \cite{sergios}. 

This article is structured as follows. In Sections \ref{sec:mot_ex} and \ref{sec:strategy}, the precise problem is stated and our strategy outlined. 
We show in Section \ref{sec:theory}, under assumptions, that our general approach can produce unbiased and finite variance estimators 
with finite expected costs. In Section \ref{sec:algos}, we present some specific Markov kernels which fall under our general framework. 
We illustrate our methodology on several numerical examples in Section \ref{sec:numerics}. 
The proofs of our mathematical results are given in Appendix \ref{sec:proofs}.

\section{General Framework}\label{sec:frame}

\subsection{Notations}

Let $(\mathsf{X},\mathcal{X})$ be a measurable space.
For $\varphi:\mathsf{X}\rightarrow\mathbb{R}$ we write $\mathcal{B}_b(\mathsf{X})$, 
to denote the collection of bounded measurable functions and, if $\mathsf{X}\subseteq\mathbb{R}^d$, $\mathbb{L}^2(\mathsf{X})$ as the collection of square Lebesgue-integrable functions. For $\varphi\in\mathcal{B}_b(\mathsf{X})$, we write the supremum norm as $\|\varphi\|_{\infty}=\sup_{x\in\mathsf{X}}|\varphi(x)|$. 
We denote the Borel sets on
$\mathbb{R}^d$ as $B(\mathbb{R}^d)$. The $d-$dimensional Lebesgue measure is written as $dx$.
For a metric $\mathsf{d:\mathsf{X}\times\mathsf{X}}\rightarrow\mathbb{R}^+$ on $\mathsf{X}$ and a function $\varphi:\mathsf{X}\rightarrow\mathbb{R}$,
$\textrm{Lip}_{\mathsf{d}}(\mathsf{X})$ are the Lipschitz functions (with finite Lipschitz constants), that is for every $(x,w)\in\mathsf{X}\times\mathsf{X}$, $|\varphi(x)-\varphi(w)|\leq \|\varphi\|_{\textrm{Lip}}\mathsf{d}(x,w)$.
$\mathscr{P}(\mathsf{X})$  denotes the collection of probability measures on $(\mathsf{X},\mathcal{X})$.
For a finite measure $\mu$ on $(\mathsf{X},\mathcal{X})$
and a $\varphi\in\mathcal{B}_b(\mathsf{X})$, the notation $\mu(\varphi)=\int_{\mathsf{X}}\varphi(x)\mu(dx)$ is used.
For $(\mathsf{X}\times\mathsf{W},\mathcal{X}\vee\mathcal{W})$ a measurable space and $\mu$ a non-negative finite measure on this space,
we use the tensor-product of functions notation for $(\varphi,\psi)\in\mathcal{B}_b(\mathsf{X})\times\mathcal{B}_b(\mathsf{W})$,
$\mu(\varphi\otimes\psi)=\int_{\mathsf{X}\times\mathsf{Y}}\varphi(x)\psi(w)\mu(d(x,w))$.
Given a Markov kernel $K:\mathsf{X}\rightarrow\mathscr{P}(\mathsf{X})$ and a finite measure $\mu$, we use the notations
$
\mu K(dx') = \int_{\mathsf{X}}\mu(dx) K(x,dx')
$
and 
$
K(\varphi)(x) = \int_{\mathsf{X}} \varphi(x') K(x,dx'),
$
for $\varphi\in\mathcal{B}_b(\mathsf{X})$. 
The iterated kernel is $K^n(x_0,dx_n) = \int_{\mathsf{X}^{n-1}}\prod_{i=1}^n K(x_{i-1},dx_i)$.
For $A\in\mathcal{X}$, the indicator function is written as $\mathbb{I}_A(x)$. $\mathbb{Z}^+$ is the set of non-negative integers. For $(\mu,\nu)\in\mathscr{P}(\mathsf{X})\times\mathscr{P}(\mathsf{X})$, $\|\mu-\nu\|_{\textrm{tv}}:=\sup_{A\in\mathcal{X}}|\mu(A)-\nu(A)|$ is the total variation distance. $\mathcal{N}_d(\mu,\Sigma)$ is the $d-$dimensional Gaussian distribution with mean $\mu$ and covariance $\Sigma$, with corresponding Lebesgue density
written as $x\mapsto\phi_d(x;\mu,\Sigma)$. 
$\mathcal{U}_A$ denotes the uniform distribution on a measurable set $A$. $0_d$ denotes the $d$-dimensional column vector of zeros. $I_d$ denotes the $d\times d$ identity matrix. The transpose of a vector or matrix $x$ is denoted as $x^T$.
For a set $\mathsf{C} \subset\mathbb{R}^d$, 
the $\mathbb{L}^2(\mathsf{C})$ norm of $f$ is written as
$\|f\|_2 = \int_{\mathsf{C}} f(x)^2 dx$, 
and the space of square integrable functions on $\mathsf{C}$ is denoted by 
$\mathbb{L}^2(\mathsf{C}) = \{f:\mathsf{C} \rightarrow \bbR : \|f\|_2 < \infty \}$. 
For a vector $x\in \bbR^d$, its Euclidean norm is also written as $\|x\|_2$.

\subsection{Motivating Example}\label{sec:mot_ex}
\subsubsection{Problem Specification}
A particular Bayesian inverse problem associated to partial differential equations (PDEs) 
is now introduced as a motivating example.
The objective is to infer the permeability field associated to a porous medium, 
based on pressure measurements of the fluid flow governed by Darcy's law. 
This example is prototypical in the context of subsurface inversion, with applications
ranging from oil recovery to contaminant transport in groundwater \cite{tarantola, stuart}.

Let $\mathsf{C} \subset\mathbb{R}^D$ with the boundary $\partial \mathsf{C}$ 
convex and once continuously differentiable and suppose 
$f\in \mathbb{L}^2(\mathsf{C})$. 
Consider the following PDE for the pressure field $h$ on $\mathsf{C}$:
\begin{align}
-\nabla \cdot (\Phi \nabla h )  & = f, \quad \textrm{ on } \mathsf{C},      \label{eq:pde1} \\
h &= 0, \quad \textrm{ on } \partial \mathsf{C},
\nonumber 
\end{align}
where, for $t\in\mathsf{C}$, the permeability is 
$$
\Phi(t;X) := \bar{\Phi}(t) + \sum_{j=1}^d X_j\vartheta_j v_j(t).
$$
The known forcing $f$ can represent, e.g. injection and/or extraction of fluid from wells. 
In the above:
\begin{itemize}
\item{$X=(X_1,\dots,X_d)$, with $X_j \stackrel{\textrm{i.i.d.}}{\sim} \mathcal{U}_{[-1,1]}$. This determines the prior distribution for $X$ on the state space $\mathsf{X}=
[-1,1]^d$.}
\item{$\bar{\Phi}:\mathsf{C}\rightarrow\mathbb{R}$, and for $j\in\{1,\dots,d\}$, 
$v_j:\mathsf{C}\rightarrow\mathbb{R}$ with sup$_{t\in \mathsf{C}}|v_j(t)| \leq 1$, 
and $\vartheta_j\in\mathbb{R}^+$.}
\item{$h(\cdot;X)$ (or just $h(X)$) denotes the weak solution of \eqref{eq:pde1} for a given $X$. }
\end{itemize}
We remark that one can allow $d\rightarrow \infty$ in the above if
$\vartheta_j$ decay to zero sufficiently fast with $j$; 
see \cite{beskos2,beskos} and the references therein for further details.
The following will be assumed.
\begin{hypH}
\label{hyp:N}
There exists a $\Phi^{\star}>0$ such that 
$\inf_{t\in\mathsf{C}}\bar{\Phi}(t) \geq \sum_{j=1}^d \vartheta_j + \Phi^{\star}$. 
\end{hypH}
This assumption guarantees that for all $x \in \mathsf{X}$,
$\inf_{t\in\mathsf{C}}\Phi(t; x) \geq \Phi^{\star}$, 
hence there is a well-defined and unique weak solution $h(x) \in \mathbb{L}^2(\mathsf{C})$, 
and $\sup_{x\in \mathsf{X}} \|h(x)\|_2 \leq C$ for some $C>0$ \cite{ciarlet}.

Define the vector-valued function $G : \mathsf{X} \rightarrow \bbR^P$ by
\begin{align}\label{eqn:observation_function}
G: x \mapsto G(x) = [g_1(h(x)),\dots,g_P(h(x))],
\end{align}
where $g_p: \mathbb{L}^2(\mathsf{C}) \rightarrow \bbR$ 
are bounded linear functionals on $\mathbb{L}^2(\mathsf{C})$
for $p\in\{1,\dots, P\}$. 
It is assumed that the data $y\in\mathbb{R}^P$ take the form  
\begin{equation}\label{eq:data}
Y | (X=x) \sim \calN_P(G(x),\theta^{-1} I_P) \, ,
\end{equation}
where $\theta>0$ is a parameter that we will be interested in inferring. 
In fact, unbiased estimators are particularly useful in this context, 
and it will be considered in the numerical examples of Section \ref{sec:numerics}. 
We simplify notation by suppressing explicit dependence on parameter $\theta$ and data $y$, 
and write the un-normalized Lebesgue density of $X$ for fixed $y$ and $\theta$ as
\begin{align}\label{eq:unnol_exact}
\gamma(x) = \exp\Big\{-\frac{\theta}{2}\|y-G(x)\|_2^2\Big\}\mathbb{I}_{\mathsf{X}}(x),
\end{align}
and the normalized density as
$\pi(x) = \gamma(x)/ \int_{\mathsf X}\gamma(x)dx$. 
These densities will be written as $\gamma_{\theta}(x)$ and $\pi_{\theta}(x)$ 
when we consider inference for $\theta$. 
This posterior distribution is in general intractable due to the nonlinear dependence of $y$ on $x$,
even if the PDE were to admit an analytical solution, and one must resort to computationally intensive
inference methods such as MCMC.
A further complication is that the analytical solution of the PDE is in general not available, 
so one must resort to numerical approximations, which will be discussed in the next section.

\subsubsection{Discretization}\label{ssec:example}

For simplicity we present the case $D=1$ and $\mathsf{C}=[0,1]$,
but extension to higher dimensions is straightforward -- see e.g. \cite{beskos2, brenner}. 
The PDE problem at resolution level $l$ is solved using a finite element method (FEM) with piecewise linear shape functions on a uniform mesh of width $\Delta_l=2^{-(l+l_0)}$, 
for $l\in\{1,2,\dots\}$ and $l_0\geq 0$ a maximal mesh width. 
In particular, the finite element basis functions $\{\psi_i^l\}_{i=1}^{\Delta_l^{-1}-1}$ on level $l$ 
are defined as follows for $t_i = i~ 2^{-l}$:
$$
\psi_i^l(t) =  \left\{\begin{array}{ll}
\Delta_l^{-1}[t-(t_i-\Delta_l)],
& \textrm{if}~t\in[t_i-\Delta_l,t_i]\, , \\
\Delta_l^{-1}[t_i+\Delta_l-t], & \textrm{if}~t\in[t_i,t_i+\Delta_l]\, .
\end{array}\, \right  .
$$
To solve the PDE for a given $x\in \mathsf{X}$, 
$h_l(x)=\sum_{i=1}^{\Delta_l^{-1}-1}h_i^l(x) \psi_i^l$ is substituted into \eqref{eq:pde1}, 
and projected onto each basis element:
$$
-\Big\langle \nabla\cdot\Big({\Phi}\nabla\sum_{i=1}^{\Delta_l^{-1}-1}h_i^l(x) \psi_i^l \Big),\psi_j^l \Big\rangle = \langle f, \psi_j^l \rangle \, .
$$
We introduce the matrix $\bm{A}^l(x)$ with entries $A_{ij}^l(x) = \langle {\Phi}(x) \nabla\psi_i^l,\nabla\psi_j^l \rangle$, and vectors $\bm{h}^l(x), \bm{f}^l$ with entries $h_i^l(x)$ and $f_i^l=\langle f, \psi_i^l\rangle$, respectively.
Solving the discretized problem involves solving the following linear system 
\begin{equation}\label{eq:forward}
\bm{A}^l(x)\bm{h}^l(x) = \bm{f}^l \, .
\end{equation}

Define $G_l(x) = [g_1(h_l(x)),\dots,g_P(h_l(x))]$. 
We denote the corresponding approximated un-normalized density by 
\begin{equation}\label{eq:unnol}
\gamma_l(x) = \exp\Big\{-\frac{\theta}{2}\|y-G_l(x)\|_2^2\Big\}\mathbb{I}_{\mathsf{X}}(x),
\end{equation}
and the approximated normalized density by
$
\pi_l(x) = \gamma_l(x)/\int_{\mathsf{X}}\gamma_l(x)dx . 
$
We now present some fundamental convergence results relating to this approximation, 
which are crucial for the application of our proposed methodology.
\begin{prop}\label{prop:disc}
Assume (H\ref{hyp:N}). 
\begin{enumerate}
\item{For all $x \in \mathsf{X}$, 
$$
\lim_{l\rightarrow\infty}h_l(x)=h(x).
$$ 
In addition, there exists a $C\in(0,\infty)$ such that for every $(l,x)\in\mathbb{Z}^+\times\mathsf{X}$:
\begin{equation}\label{eq:hl}
\|h_l(x)\|_{\infty} \leq C \,  \qquad {\rm and} \qquad  
\|h_l(x) - h(x) \|_{2}^2 \leq C \Delta_l^{2\beta} \, 
\end{equation}
with $\beta=2$.}
\item{
For all $x \in \mathsf{X}$, 
$$
\lim_{l\rightarrow\infty}\gamma_l(x)=\gamma(x).
$$ 
In addition, there exists a $(C,\tilde{l})\in(0,\infty)\times\mathbb{Z}^+$ such that for every $(l,x)\in\{\tilde{l},\tilde{l}+1,\dots\}\times\mathsf{X}$:
\begin{equation}\label{eq:gaml}
|\gamma_l(x)-\gamma(x) |^2 \leq C \Delta_l^{2\beta} \, \qquad {\rm and} \qquad 
|\pi_l(x)-\pi(x) |^2 \leq C \Delta_l^{2\beta}
\end{equation}
with $\beta=2$.}
\end{enumerate}
\end{prop}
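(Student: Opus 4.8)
The plan is to treat the two parts in turn, reducing everything to classical finite element error analysis (Part 1) followed by elementary perturbation arguments (Part 2). For Part 1, I would first record the weak formulation: $h(x)\in H_0^1(\mathsf{C})$ solves $a_x(h(x),v)=\langle f,v\rangle$ for all $v\in H_0^1(\mathsf{C})$, where $a_x(u,v):=\langle \Phi(\cdot;x)\nabla u,\nabla v\rangle$, while $h_l(x)\in V_l:=\mathrm{span}\{\psi_i^l\}$ is the Galerkin solution $a_x(h_l(x),v)=\langle f,v\rangle$ for all $v\in V_l$ (equivalently, the solution of the linear system \eqref{eq:forward}). Assumption (H\ref{hyp:N}) yields, uniformly in $x\in\mathsf{X}$, both coercivity $a_x(u,u)\geq\Phi^{\star}\|\nabla u\|_2^2$ and continuity $a_x(u,v)\leq(\sup_t\bar\Phi(t)+\sum_j\vartheta_j)\|\nabla u\|_2\|\nabla v\|_2$, since $\Phi^{\star}\leq\Phi(t;x)\leq\sup_t\bar\Phi(t)+\sum_j\vartheta_j$ for every $(t,x)$. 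Testing the Galerkin equation with $v=h_l(x)$ and applying Poincar\'e's inequality gives an $H^1$ bound $\|h_l(x)\|_{H^1}\leq C\|f\|_2/\Phi^{\star}$ that is uniform in $(l,x)$; since $D=1$, the Sobolev embedding $H^1(\mathsf{C})\hookrightarrow\mathbb{L}^\infty(\mathsf{C})$ converts this into the uniform bound $\|h_l(x)\|_{\infty}\leq C$ claimed in \eqref{eq:hl}.

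For the $\mathbb{L}^2$ rate in \eqref{eq:hl}, I would invoke C\'ea's lemma, $\|h(x)-h_l(x)\|_{H^1}\leq(M/\Phi^{\star})\inf_{v\in V_l}\|h(x)-v\|_{H^1}$, together with the standard piecewise-linear interpolation estimate $\inf_{v\in V_l}\|h(x)-v\|_{H^1}\leq C\Delta_l|h(x)|_{H^2}$, and then upgrade to $\mathbb{L}^2$ by the Aubin--Nitsche duality argument, giving $\|h_l(x)-h(x)\|_2\leq C\Delta_l^2|h(x)|_{H^2}$; squaring produces the exponent $2\beta$ with $\beta=2$. Uniformity in $x$ requires a uniform-in-$x$ bound on $|h(x)|_{H^2}$, obtained from elliptic regularity under the smoothness of $\bar\Phi$ and the $v_j$ and the stated convexity/$C^1$ regularity of $\partial\mathsf{C}$; see \cite{ciarlet,brenner}. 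Pointwise convergence $h_l(x)\to h(x)$ is then immediate from the error bound.

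Part 2 follows from Part 1. Since each $g_p$ is a bounded linear functional on $\mathbb{L}^2(\mathsf{C})$, I have $|g_p(h_l(x))-g_p(h(x))|\leq\|g_p\|\,\|h_l(x)-h(x)\|_2\leq C\Delta_l^{\beta}$, hence $\|G_l(x)-G(x)\|_2\leq C\Delta_l^{\beta}$ uniformly in $x\in\mathsf{X}$. Writing $F(z)=\exp\{-\tfrac{\theta}{2}\|y-z\|_2^2\}$, which is globally Lipschitz on $\bbR^P$ (its gradient $\theta(y-z)F(z)$ is bounded), and noting $\gamma_l(x)=F(G_l(x))\mathbb{I}_{\mathsf{X}}(x)$, $\gamma(x)=F(G(x))\mathbb{I}_{\mathsf{X}}(x)$, gives $|\gamma_l(x)-\gamma(x)|\leq\|F\|_{\mathrm{Lip}}\|G_l(x)-G(x)\|_2\leq C\Delta_l^{\beta}$ for $x\in\mathsf{X}$ and $0$ otherwise, which is the first bound in \eqref{eq:gaml}. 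For the normalized densities, set $Z=\int_{\mathsf{X}}\gamma$ and $Z_l=\int_{\mathsf{X}}\gamma_l$; the decomposition $\pi_l-\pi=(\gamma_l-\gamma)/Z_l+\gamma(Z-Z_l)/(Z Z_l)$ together with $|Z-Z_l|\leq\mathrm{vol}(\mathsf{X})\sup_x|\gamma_l(x)-\gamma(x)|\leq C\Delta_l^{\beta}$ and $0\leq\gamma\leq1$ yields the second bound, provided $Z_l$ is bounded below. Since $Z>0$ ($\gamma$ being continuous and strictly positive on the compact $\mathsf{X}$) and $Z_l\to Z$, there is $\tilde l$ with $Z_l\geq Z/2$ for $l\geq\tilde l$; this is exactly the source of the threshold $\tilde l$ in the statement.

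The main obstacle is not the finite element machinery, which is textbook, but securing the uniform-in-$x$ regularity estimate $\sup_{x\in\mathsf{X}}|h(x)|_{H^2}<\infty$ underlying the optimal $\Delta_l^2$ rate: (H\ref{hyp:N}) guarantees uniform ellipticity but not, on its face, the coefficient smoothness needed for full $H^2$ regularity, so this step relies on additional (implicitly assumed) regularity of $\bar\Phi$ and the $v_j$, and it is here that care is needed to keep all constants independent of $x$.
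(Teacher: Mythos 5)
Your proof is correct and takes essentially the same route as the paper, whose own proof simply delegates part 1 to the standard FEM references \cite{brenner,ciarlet} and part 2 to the boundedness of the functionals $g_p$ together with \eqref{eq:hl} (citing \cite{beskos,ub_bip}); your C\'ea/Aubin--Nitsche derivation, the Lipschitz bound on $z\mapsto\exp\{-\tfrac{\theta}{2}\|y-z\|_2^2\}$, and the normalization perturbation producing the threshold $\tilde{l}$ are exactly the content of those citations. Your closing caveat about the uniform-in-$x$ $H^2$ regularity needed for the rate $\beta=2$ is fair, but the paper leaves that implicit in its references as well.
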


\begin{proof}
The first part in \eqref{eq:hl} is a standard result in finite element methods \cite{brenner, ciarlet}.
For the second, recall that for $p\in\{1,\dots,P\}$, each $g_p$ is a bounded linear functional.
Using this fact and \eqref{eq:hl}, it is straightforward to establish \eqref{eq:gaml}
-- see e.g.~\cite{beskos, ub_bip} and the references therein.
\end{proof}

\subsection{Unbiased Estimation}\label{sec:strategy}

\subsubsection{Overall Strategy}\label{sec:ov_app}
We now describe our strategy to construct unbiased estimators of $\pi(\varphi)$. 
Consider a positive probability mass function, $\mathbb{P}_L$, on $\mathbb{Z}^+$. It is known \cite{rhee,vihola} that if one can find a sequence of independent random variables $(\xi_l)_{l\geq 0}$ independent of $L\sim\mathbb{P}_L$ such that
\begin{align}
&\mathbb{E}[\xi_0]  = \pi_0(\varphi), \label{eq:ub_pi0}\\
&\mathbb{E}[\xi_l]  =  \pi_l(\varphi) - \pi_{l-1}(\varphi),\quad\forall l\in\mathbb{N}, \label{eq:ub_inc}\\
&\sum_{l\in\mathbb{Z}^+}\frac{\mathbb{E}[\xi_l^2]}{\mathbb{P}_L(l)}  < +\infty, \label{eq:main_cond}
\end{align}
then 
\begin{equation}\label{eq:single_term}
\widehat{\pi(\varphi)}_S := \frac{\xi_L}{\mathbb{P}_L(L)}
\end{equation}
is an unbiased and finite variance estimator of $\pi(\varphi)$. 
This is the `single term' estimator as discussed by \cite{rhee,vihola}, and alternatives such as the `independent sum' estimator are also possible.
In this latter case, if one can construct independent random variables $(\xi_l)_{l\geq 0}$ that are independent of $L\sim\mathbb{P}_L$,
which satisfy \eqref{eq:ub_pi0}-\eqref{eq:ub_inc} and additionally that
\begin{equation}\label{eq:coup_sum_cond}
\sum_{l\in\mathbb{Z}^+}\frac{\mathbb{V}\textrm{ar}[\xi_l] + (\pi_{l}(\varphi)-\pi(\varphi))^2}{\overline{\mathbb{P}}_L(l)} < +\infty \, ,
\end{equation}
where $\overline{\mathbb{P}}_L(l)=\sum_{k\geq l}\mathbb{P}_L(k)$, 
then
\begin{equation}\label{eq:coupled_sum}
\widehat{\pi(\varphi)}_I := \sum_{l=0}^L\frac{\xi_l}{\overline{\mathbb{P}}_L(l)}
\end{equation}
is also an unbiased estimator of $\pi(\varphi)$ with finite variance. 
Typically, one will run $N\in\mathbb{N}$ independent replicates of either \eqref{eq:single_term} or \eqref{eq:coupled_sum} and then use the average
$$
\frac{1}{N}\sum_{i=1}^N \Big(\widehat{\pi(\varphi)}_k\Big)^i,
$$
where $k\in\{S,I\}$ and $\Big(\widehat{\pi(\varphi)}_k\Big)^i$ represents the $i^{th}-$independent replicate of the estimate. 

We note that this idea was mentioned in \cite{rssb_disc} 
and also used in various different ways in \cite{ub_vihola, ub_pf, ub_bip}. 
The main point of these schemes is that one can completely remove the discretization bias (represented by $l$)
associated to for example Euler discretizations of stochastic differential equations or
FEM discretizations 
of PDEs, 
whilst only working with biased versions of $\pi$, denoted as $\pi_l$, with $l<\infty$. 
In addition, the method is completely parallelizable,
as one can run each replicate independently.

We remark that the condition  \eqref{eq:ub_inc} is not a necessary one (in the case of \eqref{eq:single_term}, but is needed for \eqref{eq:coupled_sum}), but is certainly sufficient. In many contexts, satisfying \eqref{eq:ub_inc} is not trivial as exact simulation from any of the distributions $(\pi_l)_{l\in\mathbb{Z}^+}$ is often not possible. 
In the work of \cite{glynn2,jacob1} (see also \cite{heng_jacob_2019,jacob2}), 
the authors consider a methodology to unbiasedly estimate $\pi_l(\varphi)$ for each $l$,
which we shall build upon. 
In order to satisfy \eqref{eq:main_cond} (or \eqref{eq:coup_sum_cond}), it will typically not be sufficient (or at least efficient) to run two \emph{independent} unbiased MCMC algorithms to estimate $\pi_l(\varphi)$ and $\pi_{l-1}(\varphi)$ respectively. 
To see this, let $l\in\mathbb{N}$ be given and 
consider the easier situation where one can sample exactly from $\pi_l$ and $\pi_{l-1}$. 
Then an unbiased estimator 
of $[\pi_l-\pi_{l-1}](\varphi)$ is given by 
$$
\xi_l = \frac{1}{N}\sum_{n=1}^N\{\varphi(X_n)-\varphi(W_{n})\},
$$
where $(X_n)_{n\in\mathbb{N}}$ are i.i.d.~from $\pi_l$ and $(W_n)_{n\in\mathbb{N}}$ are i.i.d.~from $\pi_{l-1}$. 
In this case, we have
$$
\mathbb{E}[\xi_l^2] = \frac{\mathbb{V}\textrm{ar}_{\pi_l}[\varphi(X)]+\mathbb{V}\textrm{ar}_{\pi_{l-1}}[\varphi(W)]}{N} + [\pi_l-\pi_{l-1}](\varphi)^2.
$$
In practice, it will typically be difficult to choose $\mathbb{P}_L$ so that the estimator would satisfy \eqref{eq:main_cond}
and have finite expected cost, unless one takes $N=\mathcal{O}(2^l)$. 
We shall introduce a novel solution to circumvent this difficulty. 
To motivate our approach, we begin by recalling the methodology in \cite{glynn2,jacob1}.

\subsubsection{Unbiased Markov chain Monte Carlo}\label{sec:ub_mcmc}

We consider the unbiased estimation of $\pi_l(\varphi)$ with $l\in\mathbb{Z}^+$ fixed. 
Suppose we have a $\pi_l-$invariant, ergodic MCMC kernel $K_{l}:\mathsf{X}\rightarrow\mathscr{P}(\mathsf{X})$ 
and an initial distribution $\nu_l\in\mathscr{P}(\mathsf{X})$. 
Let $\tilde{\nu}_l\in\mathscr{P}(\mathsf{X}^2)$ be a coupling of $\nu_l$, i.e. 
$\int_{w\in\mathsf{X}}\tilde{\nu}_l(d(x,w))=\nu_l(dx)$ and $\int_{x\in\mathsf{X}}\tilde{\nu}_l(d(x,w))=\nu_l(dw)$. 
The idea is to run a Markov chain $(Z_{n,l})_{n\in\mathbb{Z}^+}=(X_{n,l},W_{n,l})_{n\in\mathbb{Z}^+}$ 
on $\mathsf{X}\times\mathsf{X}$, initialized from  
\begin{align}\label{eqn:initial_shifted}
\check{\nu}_{l}(d(x,w)):=\int_{\mathsf{X}^2}\tilde{\nu}_l(d(x',w'))K_{l}(x',dx)\delta_{\{w'\}}(dw),
\end{align}
and evolving according to a coupled transition kernel 
$\check{K}_{l}:\mathsf{X}\times\mathsf{X}\rightarrow\mathscr{P}(\mathsf{X}\times\mathsf{X})$,
satisfying 
$$\int_{w' \in \mathsf{X}}\check{K}_{l}((x,w),d(x',w')) = K_{l}(x,dx'),\quad 
\int_{{x}' \in \mathsf{X}}\check{K}_{l}((x,w),d(x',w')) = K_{l}(w,dw').$$
The process is simulated as follows.
\begin{enumerate}
\item{Sample $Z_{0,l}'=(X_{0,l}',W_{0,l}')$ from $\tilde{\nu}_l(\cdot)$.}
\item{Generate $X_{0,l} | X_{0,l}'$ according to $K_l(X_{0,l}',\cdot)$ and set $W_{0,l}=W_{0,l}'$.}
\item{For $n\geq 1$, generate $Z_{n,l} | Z_{n-1,l}$ according to $\check{K}_{l}(Z_{n-1,l},\cdot)$.}
\end{enumerate}
Marginally, the sequences of random variables 
$(X_{n,l})_{n\in\mathbb{Z}^+}$ and $(W_{n,l})_{n\in\mathbb{Z}^+}$ 
are coupled time-homogenous Markov chains with initial distributions 
$\nu K_{l}$ and $\nu$, respectively, and the same transition kernel $K_{l}$. 
Define the meeting time
$$
\tau_{l} := \inf\{n\geq 1:X_{n,l}=W_{n,l}\}.
$$
The coupled chains $(Z_{n,l})_{n\in\mathbb{Z}^+}$ should be constructed 
so that $\tau_{l}$ is almost surely finite, at the very least. 
In addition, we require that the chains remain faithful after meeting, that is 
$X_{n,l}=W_{n,l}$ for all $n\geq\tau_{l}$.  

Under fairly weak assumptions (e.g.~\cite{glynn2}), the following is an unbiased estimator of $\pi_l(\varphi)$ for any $k\in\mathbb{Z}^+$
\begin{equation}\label{eq:stan_est}
\widehat{\pi_l(\varphi)}_k := \varphi(X_{k,l}) + \sum_{n=k+1}^{\tau_{l}-1}\{\varphi(X_{n,l})-\varphi(W_{n,l})\}.
\end{equation}
We remark that a time-averaged extension is also possible; let 
$(m,k)\in\mathbb{Z}^+\times \mathbb{Z}^+$, with $m\geq k$, then one can also use (see \cite{jacob1}) the time-averaged estimator
\begin{equation}\label{eq:time_ave}
\widehat{\pi_l(\varphi)}_{T,k,m} := \frac{1}{m-k+1}\sum_{n=k}^m\varphi(X_{n,l}) + \sum_{n=k+1}^{\tau_{l}-1}\Big(1\wedge\frac{n-k}{m-k+1}\Big)\{\varphi(X_{n,l})-\varphi(W_{n,l})\},
\end{equation}
which recovers \eqref{eq:stan_est} in the case $m=k$.

We now describe a constructive procedure to generate such a coupled Metropolis--Hastings (MH) kernel $\check{K}_{l}$ \cite{jacob1}. 
Let $Q_l:\mathsf{X}\rightarrow\mathscr{P}(\mathsf{X})$ be a proposal Markov kernel, 
such that $\pi_l(dx)Q_l(x,dx')$ has a positive density $\pi_l(x)q_l(x,x')$ w.r.t.~a dominating measure. 
We define the MH acceptance probability for $(x,x')\in\mathsf{X}\times\mathsf{X}$ as 
$$\alpha_l(x,x') :=  1\wedge \frac{\pi_l(x')q_l(x',x)}{\pi_l(x)q_l(x,x')}.$$
For $(x,w)\in\mathsf{X}\times\mathsf{X}$, 
we define a maximal coupling of the proposal Markov kernels $Q_l(x,dx')$ and $Q_l(w,dw')$
\begin{align}\label{eqn:maximal_level}
\check{Q}_l((x,w),d(x',w')) &= 
S_l(x,w) \int_{u\in\mathsf{X}}
\frac{O_l((x,w),du)}{S_l(x,w)}\delta_{\{u\}^2}(d(x',w'))
+ (1-S_l(x,w))\times\notag \\  &\Big(
\frac{Q_l(x,dx')-O_l((x,w),dx')}
{1-S_l(x,w)}
\Big)\otimes\Big(
\frac{Q_l(w,dw')-O_l((x,w),dw')}
{1-S_l(x,w)}
\Big),
\end{align}
where $O_l((x,w),du):=Q_l(x,du)\wedge Q_l(w,du)$ denotes the overlapping kernel on $\mathsf{X}$
and $S_l(x,w):=\int_{\mathsf{X}}O_l((x,w),du)$ is the size of the overlap. 
Under this transition kernel, with probability $S_l(x,w)$, one simulates 
$X'=W'$ from the overlap $O_l((x,w),du)/S_l(x,w)$,
and with probability $1-S_l(x,w)$, $X'$ and $W'$ are simulated independently 
from the residuals required to ensure they retain the appropriate marginals. 
As this transition achieves the maximum probability of having $X'=W'$, 
$\check{Q}_l$ is known as a maximal coupling of $Q_l(x,dx')$ and $Q_l(w,dw')$. 
This coupling can be simulated using the algorithm of \cite{thor},  
assuming one can sample from the proposal kernels and evaluate their densities. 
One can then obtain a sample from the coupled MH kernel $\check{K}_{l}$ 
by accepting the proposals $X'$ and $W'$ with a common uniform random variable $U\sim\mathcal{U}_{[0,1]}$. 
If the proposal is a Gaussian random walk, then the idea of using maximal couplings within MH goes back to at least \cite{johnson_1996}.
We note that alternatives to a maximal coupling are possible and described in Section \ref{sec:algos}. 

We now outline the procedure required to compute the time-averaged estimator $\widehat{\pi_l(\varphi)}_{T,k,m}$ in \eqref{eq:time_ave}. 
\begin{enumerate}
\item{Sample $Z_{0,l}'=(X_{0,l}',W_{0,l}')$ from $\tilde{\nu}_{l}(\cdot)$.}
\item{Generate $X_{l} | X_{0,l}'$ according to $Q_l(X_{0,l}',\cdot)$ and $U\sim\mathcal{U}_{[0,1]}$. 
If $U<\alpha_l(X_{0,l}',X_l)$, set $X_{0,l}=X_l$, otherwise, set $X_{0,l}=X_{0,l}'$. 
Set $W_{0,l}=W_{0,l}'$ and $n=1$.}
\item{Generate $(X_l,W_l) | (X_{n-1,l}, W_{n-1,l})$ according to $\check{Q}_l((X_{n-1,l}, W_{n-1,l}),\cdot)$ and $U\sim\mathcal{U}_{[0,1]}$. 
\begin{itemize}
\item{
If
$U<\alpha_l(X_{n-1,l},X_l)$, set $X_{n,l}=X_l$, otherwise, set $X_{n,l}=X_{n-1,l}$.}
\item{
If
$U<\alpha_l(W_{n-1,l},W_l)$, set $W_{n,l}=W_l$, otherwise, set $W_{n,l}=W_{n-1,l}$.}
\end{itemize}
}
\item{If $X_{n,l}=W_{n,l}$ and $n\geq m$ stop, otherwise set $n=n+1$ and return to step 3.}
\end{enumerate}
Assuming that the resulting coupled kernel $\check{K}_{l}$ costs twice as much as $K_l$, 
the above procedure requires $(2\tau_{l}-1)\vee (m+\tau_l-1)$ applications of $K_l$.

\subsubsection{Unbiased Estimation of Increments}\label{sec:ub_inc_gen}

We now describe, abstractly, how one can obtain a sequence of independent random variables $(\xi_l)_{l\geq 0}$ with the properties prescribed in Section \ref{sec:ov_app}. Concrete approaches are detailed in Section \ref{sec:algos}.

In the case of $\xi_0$, one can simply use the unbiased MCMC methodology described in Section \ref{sec:ub_mcmc}. 
Hence we consider how to unbiasedly estimate $[\pi_l-\pi_{l-1}](\varphi)$, for a fix $l\in\mathbb{N}$, using MCMC. 
For $s\in\{l,l-1\}$, suppose we have a $\pi_s-$invariant, ergodic MCMC kernel $K_{s}:\mathsf{X}\rightarrow\mathscr{P}(\mathsf{X})$ and 
an initial distribution $\nu_s\in\mathscr{P}(\mathsf{X})$. 
Let $\check{\nu}_{l,l-1}$ be a coupling of the distributions $\check{\nu}_{l}$ and $\check{\nu}_{l-1}$ defined in \eqref{eqn:initial_shifted}. 
We will generate a Markov chain $(Z_{n,l,l-1})_{n\in\mathbb{Z}^+}$ with
$$
Z_{n,l,l-1}=((X_{n,l},W_{n,l}),(X_{n,l-1},W_{n,l-1}))\in(\mathsf{X}\times\mathsf{X})\times(\mathsf{X}\times\mathsf{X})=:\mathsf{Z}, 
$$
for each $n\in\mathbb{Z}^+$. 
The Markov chain is such that, marginally, for each $s\in\{l,l-1\}$, 
the sequence of random variables $(X_{n,s})_{n\in\mathbb{Z}^+}$ and $(W_{n,s})_{n\in\mathbb{Z}^+}$ 
are time-homogenous Markov chains with initial distribution 
$\nu_s K_{s}$ and $\nu_s$, respectively, and the same transition kernel $K_{s}$. 
The four sequences will be constructed in a dependent manner 
in order to satisfy \eqref{eq:main_cond} or \eqref{eq:coup_sum_cond}. 
We denote the transition kernel for $(Z_{n,l,l-1})_{n\in\mathbb{Z}^+}$, as $\check{K}_{l,l-1}:\mathsf{Z}\rightarrow\mathscr{P}(\mathsf{Z})$.
Define the meeting time 
\begin{eqnarray*}
\tau_{s} & =  & \inf\{n\geq 1:X_{n,s}=W_{n,s}\} 
\end{eqnarray*}
 for $s\in\{l,l-1\}$. 
It is explicitly assumed that (at the very least) $(Z_{n,l,l-1})_{n\in\mathbb{Z}^+}$ is constructed so that the stopping time $\check{\tau}_{l,l-1}:=\tau_l\vee\tau_{l-1}$ is almost surely finite. In addition, the pair of chains on each level should be faithful, i.e. for $s\in\{l,l-1\}$, we have 
\begin{align}\label{eqn:faithfulness_bothlevels}
	X_{n,s}=W_{n,s}, \mbox{ for all } n\geq\tau_{s}.
\end{align}
Hence for time $n\geq\check{\tau}_{l,l-1}$, $Z_{n,l,l-1}$ only has a distinct state on each level. 
We will give explicit examples of Markov kernels which satisfy these constraints in Section \ref{sec:algos}. 
Note that we do not require the pairs $(Z_{n,l})_{n\in\mathbb{Z}^+}=(X_{n,l},W_{n,l})_{n\in\mathbb{Z}^+}$ 
or $(Z_{n,l-1})_{n\in\mathbb{Z}^+}=(X_{n,l-1},W_{n,l-1})_{n\in\mathbb{Z}^+}$ 
to be Markov chains with exactly the properties considered in Section \ref{sec:ub_mcmc}.

One can estimate $[\pi_l-\pi_{l-1}](\varphi)$ as follows, for any $k\in\mathbb{Z}^+$:
\begin{equation}\label{eq:stan_est_inc}
\reallywidehat{[\pi_l-\pi_{l-1}](\varphi)}_k := 
\widehat{\pi_l(\varphi)}_k - \widehat{\pi_{l-1}(\varphi)}_k,
\end{equation}
where $\widehat{\pi_s(\varphi)}_k$ is computed using \eqref{eq:stan_est} 
based on the pair of chains $(X_{n,s},W_{n,s})_{n\in\mathbb{Z}^+}$ on level $s\in\{l,l-1\}$. 
One can also employ time-averaging, for $(m,k)\in\mathbb{Z}^+\times \mathbb{Z}^+$ satisfying $m\geq k$:
\begin{align}\label{eq:time_ave_inc}
\reallywidehat{[\pi_l-\pi_{l-1}](\varphi)}_{T,k,m} &:= 
\widehat{\pi_l(\varphi)}_{T,k,m} - \widehat{\pi_{l-1}(\varphi)}_{T,k,m},
\end{align}
where $\widehat{\pi_s(\varphi)}_{T,k,m}$ is computed using \eqref{eq:time_ave} 
based on the pair of chains $(X_{n,s},W_{n,s})_{n\in\mathbb{Z}^+}$ on level $s\in\{l,l-1\}$. 
The steps to compute the estimator \eqref{eq:time_ave_inc} are outlined below.
\begin{enumerate}
\item{Sample $Z_{0,l,l-1}$ from $\check{\nu}_{l,l-1}(\cdot)$. Set $n=1$.}
\item{Generate $Z_{n,l,l-1}|Z_{n-1,l,l-1}$ according to $\check{K}_{l,l-1}(Z_{n-1,l,l-1},\cdot)$.
}
\item{If $X_{n,l}=W_{n,l}$, $X_{n,l-1}=W_{n,l-1}$ and $n\geq m$ stop, otherwise set $n=n+1$ and return to step 2.}
\end{enumerate}
Assuming the cost of $\check{K}_{l,l-1}$ is two times that of running both $K_l$ and $K_{l-1}$, 
and $K_l$ costs twice as much as $K_{l-1}$, then the cost of the above procedure requires 
\begin{align}\label{eqn:cost_increment}
\frac{(2\tau_{l-1}-1)\vee(m+\tau_{l-1}-1)}{2} + (2\tau_{l}-1)\vee(m+\tau_{l}-1)
\leq 2\{(2\check{\tau}_{l,l-1}-1)\vee (m+\check{\tau}_{l,l-1}-1)\}
\end{align}
applications of the kernel $K_{l}$.

\subsubsection{Summary of Proposed Methodology}
We now consolidate the above discussion by summarizing our proposed methodology to unbiasedly estimate $\pi(\varphi)$.
We begin with the single term estimator $\widehat{\pi(\varphi)}_S$ in \eqref{eq:single_term}.
\begin{enumerate}
\item{Sample $L\sim\mathbb{P}_L$.}
\item{If $L=0$, generate a Markov chain $(Z_{n,0})_{n\in\mathbb{Z}^+}$ according to $\check{K}_0$ 
as described in Section \ref{sec:ub_mcmc} and compute the estimator 
$\widehat{\pi_0(\varphi)}_k$ in \eqref{eq:stan_est} or 
$\widehat{\pi_0(\varphi)}_{T,k,m}$ in \eqref{eq:time_ave}. }
\item{If $L>0$, generate a Markov chain $(Z_{n,L,L-1})_{n\in\mathbb{Z}^+}$ according to 
$\check{K}_{L,L-1}$ as described in Section \ref{sec:ub_inc_gen} and compute the estimator 
$\reallywidehat{[\pi_L-\pi_{L-1}](\varphi)}_k $ in \eqref{eq:stan_est_inc} or 
$\reallywidehat{[\pi_L-\pi_{L-1}](\varphi)}_{T,k,m}$ in \eqref{eq:time_ave_inc}.
}
\end{enumerate}
We then return the single term estimator
\begin{equation}\label{eq:basic_ub_est}
\widehat{\pi(\varphi)}_{S,k} := \frac{1}{\mathbb{P}_L(L)}\Big(\mathbb{I}_{\{0\}}(L)\widehat{\pi_0(\varphi)}_k + \mathbb{I}_{\mathbb{N}}(L)\reallywidehat{[\pi_L-\pi_{L-1}](\varphi)}_k \Big)
\end{equation}
or
\begin{equation}\label{eq:basic_ub_est_time}
\reallywidehat{\pi(\varphi)}_{S,T,k,m} := \frac{1}{\mathbb{P}_L(L)}\Big(\mathbb{I}_{\{0\}}(L)
\reallywidehat{\pi_0(\varphi)}_{T,k,m} + \mathbb{I}_{\mathbb{N}}(L)
\reallywidehat{[\pi_L-\pi_{L-1}](\varphi)}_{T,k,m} \Big),
\end{equation}
depending on whether one chooses the time-averaged estimator or not. 

For the independent sum estimator $\widehat{\pi(\varphi)}_I$ in \eqref{eq:coupled_sum}, the steps are quite similar. 
\begin{enumerate}
\item{Sample $L\sim\mathbb{P}_L$.}
\item{If $L=0$, generate a Markov chain $(Z_{n,0})_{n\in\mathbb{Z}^+}$ according to $\check{K}_0$ 
as described in Section \ref{sec:ub_mcmc} and compute the estimator 
$\widehat{\pi_0(\varphi)}_k$ in \eqref{eq:stan_est} or 
$\widehat{\pi_0(\varphi)}_{T,k,m}$ in \eqref{eq:time_ave}. }
\item{If $L>0$, for all $l\in\{1,\ldots,L\}$, 
generate a Markov chain $(Z_{n,l,l-1})_{n\in\mathbb{Z}^+}$ according to 
$\check{K}_{l,l-1}$ as described in Section \ref{sec:ub_inc_gen} and compute the estimator 
$\reallywidehat{[\pi_l-\pi_{l-1}](\varphi)}_k$ in \eqref{eq:stan_est_inc} or 
$\reallywidehat{[\pi_l-\pi_{l-1}](\varphi)}_{T,k,m}$ in \eqref{eq:time_ave_inc}.
}
\end{enumerate}
We then return the independent sum estimator
\begin{equation}\label{eq:coup_ub_est}
\reallywidehat{\pi(\varphi)}_{I,k} := \reallywidehat{\pi_0(\varphi)}_{k} + \sum_{l=1}^L \frac{1}{\overline{\mathbb{P}}_L(l)}\Big(\reallywidehat{[\pi_l-\pi_{l-1}](\varphi)}_k\Big)
\end{equation}
or
\begin{equation}\label{eq:coup_ub_est_time}
\reallywidehat{\pi(\varphi)}_{I,T,k,m} := \reallywidehat{\pi_0(\varphi)}_{T,k,m} + \sum_{l=1}^L \frac{1}{\overline{\mathbb{P}}_L(l)}\Big(\reallywidehat{[\pi_l-\pi_{l-1}](\varphi)}_{T,k,m}\Big).
\end{equation}
As noted, in practice, one can also run the given procedure $N$ times and use an average. 
For instance, in the context of the single term estimator \eqref{eq:basic_ub_est}, one would use 
\begin{align}\label{eqn:averaging_singleterm}
\reallywidehat{\pi(\varphi)}(N) = \frac{1}{N}\sum_{i=1}^N\Big(\widehat{\pi(\varphi)}_{S,k}\Big)^i
\end{align}
to estimate $\pi(\varphi)$, where $\Big(\widehat{\pi(\varphi)}_{S,k}\Big)^i$ denotes 
the $i^{th}-$independent replicate of the estimate. 

\subsection{Theoretical Results}\label{sec:theory}
\subsubsection{Assumptions and Results}
The main objective of this section is to establish, under assumptions, that 
\eqref{eq:basic_ub_est}-\eqref{eq:coup_ub_est_time} are unbiased and finite variance estimators of $\pi(\varphi)$. 
We first state the assumptions that we will rely on.  
In the following, we suppose that the quality of approximation of $\pi$ by $\pi_l$ is controlled by a scalar parameter $\Delta_l=2^{-l}$, 
which is consistent with the examples that are to be considered. 
For a constant $0<C<+\infty$ and a metric $\mathsf{d}$ on $\mathsf{X}$, we define the set
\begin{eqnarray*}
B(C,\Delta_l,\mathsf{d}) := \{(x_1,x_2,x_3,x_4)\in\mathsf{X}^4:\forall(i,j)\in\{1,\dots,4\}, \mathsf{d}(x_i,x_j)\leq C\Delta_l\}.
\end{eqnarray*}
We will make the following assumptions with $\mathsf{X}$ compact.

\begin{hypA}\label{ass:1}
There exist $(C,\rho)\in(0,\infty)\times(0,1)$ such that for any $n\in\mathbb{N}$
$$
\sup_{l\in\mathbb{Z}^+} \sup_{x\in \mathsf{X}} \|K_l^n(x,\cdot)-\pi_l(\cdot)\|_{\textrm{tv}} \leq C \rho^n.
$$
\end{hypA}
\begin{hypA}\label{ass:5}
There exist $(C,\rho)\in(0,\infty)\times(0,1)$ such that
for any $(l,n)\in\mathbb{Z}^+\times\mathbb{N}$
$$
\mathbb{E}[\mathbb{I}_{\{\tau_l> n\}}]\leq C\rho^n.
$$
\end{hypA}
\begin{hypA}\label{ass:3}
There exist a $C<\infty$ and a metric $\tilde{\mathsf{d}}:\mathsf{X}\times\mathsf{X}\rightarrow\mathbb{R}^+$ on $\mathsf{X}$, 
such that for any $(l,\varphi,(x,w))\in\mathbb{Z}^+\times\mathcal{B}_b(\mathsf{X})\cap\textrm{Lip}_{\tilde{\mathsf{d}}}(\mathsf{X})\times\mathsf{X}\times\mathsf{X}$
$$
|K_l(\varphi)(x)-K_l(\varphi)(w)| \leq C (\|\varphi\|_{\infty}\vee \|\varphi\|_{\textrm{Lip}})\tilde{\mathsf{d}}(x,w).
$$
\end{hypA}
\begin{hypA}\label{ass:2}
There exist $(C,\beta_1)\in(0,\infty)\times(0,\infty)$ such that for any $(l,\varphi)\in\mathbb{N}\times\mathcal{B}_b(\mathsf{X})$
\begin{enumerate}
\item{
$
|[\pi_l-\pi](\varphi)| \leq C \|\varphi\|_{\infty} \Delta_l^{\beta_1}.
$
}
\item{
$
\sup_{x\in\mathsf{X}}| K_l(\varphi)(x)-K_{l-1}(\varphi)(x) | \leq C  \|\varphi\|_{\infty} \Delta_l^{\beta_1}.
$
}
\end{enumerate}
\end{hypA}
\begin{hypA}\label{ass:4}
There exist $(C,\beta_2,\epsilon)\in(0,\infty)^3$, such that for the metric  $\tilde{\mathsf{d}}$ in (A\ref{ass:3}) and any $(l,n)\in\mathbb{N}\times\mathbb{N}$
\begin{eqnarray*}
\mathbb{E}[\mathbb{I}_{B(C,\Delta_l^{\beta_2},\tilde{\mathsf{d}})^c}(Z_{0,l,l-1})] & \leq & C\Delta_l^{\beta_2(2+\epsilon)}, \\
\mathbb{E}[\mathbb{I}_{B(C,\Delta_l^{\beta_2},\tilde{\mathsf{d}})^c\times B(C,\Delta_l^{\beta_2},\tilde{\mathsf{d}})}(Z_{n,l,l-1},Z_{n-1,l,l-1})] & \leq & C\Delta_l^{\beta_2(2+\epsilon)}.\\
\end{eqnarray*}
\end{hypA}

Our main result focusses upon \eqref{eq:basic_ub_est} as the proof of the other results are more-or-less a direct corollary of the first result.
The proofs of all results are in Appendix \ref{sec:proofs}.

\begin{theorem}\label{theo:main_res}
Assume (A\ref{ass:1}-\ref{ass:4}). Then there exists a choice of positive probability mass function $\mathbb{P}_L$, such that for 
the metric $\tilde{\mathsf{d}}$ in (A\ref{ass:3}) and
any $\varphi\in \mathcal{B}_b(\mathsf{X})\cap\textrm{\emph{Lip}}_{\tilde{\mathsf{d}}}(\mathsf{X})$,  
\eqref{eq:basic_ub_est} is an unbiased and finite variance estimator of $\pi(\varphi)$. 
\end{theorem}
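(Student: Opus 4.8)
The plan is to cast the problem into the single-term randomization framework of \eqref{eq:ub_pi0}--\eqref{eq:main_cond} and verify its three hypotheses. I would set $\xi_0 := \widehat{\pi_0(\varphi)}_k$ and, for $l \in \mathbb{N}$, $\xi_l := \reallywidehat{[\pi_l-\pi_{l-1}](\varphi)}_k$ as in \eqref{eq:stan_est_inc}, each generated from an independently simulated coupled chain, so that $(\xi_l)_{l\ge0}$ are independent of one another and of $L$. Once I establish $\mathbb{E}[\xi_0]=\pi_0(\varphi)$, $\mathbb{E}[\xi_l]=\pi_l(\varphi)-\pi_{l-1}(\varphi)$ for $l\ge1$, and $\sum_l \mathbb{E}[\xi_l^2]/\mathbb{P}_L(l)<\infty$ for a suitable $\mathbb{P}_L$, the result of \cite{rhee,vihola} immediately gives that \eqref{eq:basic_ub_est}$=\xi_L/\mathbb{P}_L(L)$ is unbiased with finite variance; here (A\ref{ass:2}.1) guarantees $\pi_l(\varphi)\to\pi(\varphi)$, so the telescoping identity $\mathbb{E}[\xi_0]+\sum_{l\ge1}\mathbb{E}[\xi_l]=\lim_l\pi_l(\varphi)=\pi(\varphi)$ makes the target correct.

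For unbiasedness of the increments, I would note that for each fixed $l$ and $s\in\{l,l-1\}$ the marginal pair $(X_{n,s},W_{n,s})_{n}$ is exactly a coupled chain of Section \ref{sec:ub_mcmc}, with correct marginals $\nu_sK_s$ and $\nu_s$ and a faithful meeting time $\tau_s$. Under (A\ref{ass:1}) (uniform geometric ergodicity) and (A\ref{ass:5}) (geometric meeting-time tails), the hypotheses of the unbiased-MCMC theorem of \cite{glynn2,jacob1} hold, so $\mathbb{E}[\widehat{\pi_s(\varphi)}_k]=\pi_s(\varphi)$ with finite variance; subtracting yields $\mathbb{E}[\xi_l]=\pi_l(\varphi)-\pi_{l-1}(\varphi)$, and the $l=0$ case gives $\mathbb{E}[\xi_0]=\pi_0(\varphi)$. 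The cross-level coupling does not enter here, since only the marginal laws determine these expectations.

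The crux is the second-moment bound. Writing $\Delta\varphi_{n,s}:=\varphi(X_{n,s})-\varphi(W_{n,s})$, one has
$$\xi_l = \big(\varphi(X_{k,l})-\varphi(X_{k,l-1})\big) + \sum_{n=k+1}^{\infty}\Big(\mathbb{I}_{\{n<\tau_l\}}\Delta\varphi_{n,l}-\mathbb{I}_{\{n<\tau_{l-1}\}}\Delta\varphi_{n,l-1}\Big).$$
I would split $\mathbb{E}[\xi_l^2]$ according to whether the four-component chain $Z_{n,l,l-1}$ remains in the ball $B=B(C,\Delta_l^{\beta_2},\tilde{\mathsf{d}})$ up to $\tau_l\vee\tau_{l-1}$. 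On this good event all four components are pairwise within $C\Delta_l^{\beta_2}$, so by Lipschitzness of $\varphi$ each $|\Delta\varphi_{n,s}|$ and the first-term difference are of order $\|\varphi\|_{\mathrm{Lip}}\Delta_l^{\beta_2}$; bounding the number of summands by $\tau_l\vee\tau_{l-1}$ and invoking (A\ref{ass:5}) to control its second moment gives a contribution $O(\Delta_l^{2\beta_2})$. On the complementary event I would use $|\xi_l|\le C\|\varphi\|_\infty(1+\tau_l\vee\tau_{l-1})$, apply Cauchy--Schwarz, control the fourth meeting-time moment by (A\ref{ass:5}), and bound the exit probability by summing the fresh-exit estimate (second line of (A\ref{ass:4})) against the tail $\mathbb{P}(\tau_l\vee\tau_{l-1}\ge n)\le C\rho^n$ over $n$, together with the initial estimate (first line of (A\ref{ass:4})); the $2+\epsilon$ margin is precisely what survives after Cauchy--Schwarz. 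The Lipschitz-in-space bound (A\ref{ass:3}) and the kernel-closeness bound (A\ref{ass:2}.2) are used to control the one-step first-term difference $\varphi(X_{k,l})-\varphi(X_{k,l-1})$ under the distinct dynamics $K_l,K_{l-1}$ and to propagate closeness forward. The net estimate is $\mathbb{E}[\xi_l^2]\le C\Delta_l^{\alpha}$ for some $\alpha>0$ depending on $\beta_1,\beta_2,\epsilon$.

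Finally, with $\Delta_l=2^{-l}$ and $\mathbb{E}[\xi_l^2]\le C2^{-\alpha l}$, choosing $\mathbb{P}_L(l)\propto 2^{-\eta l}$ for any $0<\eta<\alpha$ gives $\sum_l \mathbb{E}[\xi_l^2]/\mathbb{P}_L(l)\le C\sum_l 2^{-(\alpha-\eta)l}<\infty$ while keeping $\mathbb{P}_L$ a genuine positive probability mass function, so \eqref{eq:main_cond} holds and the claim follows. I expect the main obstacle to be this second-moment bound, and specifically the bookkeeping of the random, level-dependent truncation times $\tau_l,\tau_{l-1}$ simultaneously, the union bound over a random horizon so that the $2+\epsilon$ exponent in (A\ref{ass:4}) is not destroyed by Cauchy--Schwarz, and the forward propagation of ball-closeness under the mismatched kernels via (A\ref{ass:3}) and (A\ref{ass:2}.2).
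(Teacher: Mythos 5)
Your proposal is correct and reaches the theorem, but it takes a genuinely different route from the paper's on both halves of the argument. For unbiasedness, the paper (Proposition \ref{prop:ub_inc}) does not invoke the unbiased-MCMC theorem level-wise: it derives the martingale-plus-remainder decomposition \eqref{eq:mart_decomp} through the Poisson-equation solutions $\reallywidehat{\varphi}_s$ and applies the optional sampling theorem (in effect Wald's first identity for Markov chains), which has the advantage of setting up exactly the decomposition reused for the variance. Your level-wise appeal to \cite{glynn2,jacob1} followed by subtraction is legitimate, since only the within-level joint law determines $\mathbb{E}[\xi_l]$ and (A\ref{ass:1}), (A\ref{ass:5}) with bounded $\varphi$ supply the needed hypotheses; note however that your phrase ``exactly a coupled chain of Section \ref{sec:ub_mcmc}'' overstates matters---the paper explicitly warns at the end of Section \ref{sec:ub_inc_gen} that the within-level pairs need not be Markov chains of that form---but the Glynn--Rhee telescoping argument only requires the single-chain marginal laws $\nu_s K_s$ and $\nu_s$ with kernel $K_s$, faithfulness, and the tail bound (A\ref{ass:5}), all of which the construction guarantees, so this is a wording fix rather than a gap. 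For the second moment, the paper again works through the decomposition: Burkholder--Davis--Gundy plus a Wald identity for the martingale part (Lemma \ref{lem:martingale}), with the Poisson-equation regularity results (Lemmata \ref{lem:pois_lip} and \ref{lem:poisson_eq_cont}, resting on (A\ref{ass:1})--(A\ref{ass:3})) and the ball-propagation Lemma \ref{lem:good_prob} combining in Lemma \ref{lem:sec_mom} to give the sharp rate $\mathbb{E}[\xi_l^2]\leq C(\|\varphi\|_{\infty}\vee\|\varphi\|_{\textrm{Lip}})\Delta_l^{2(\beta_1\wedge\beta_2)}$ (and Lemma \ref{lem:pi_0} for the level-zero term). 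Your good-set/bad-set truncation is more elementary and bypasses the Poisson equation entirely---indeed it never really uses (A\ref{ass:2}).2 or (A\ref{ass:3}), since on the good event the ball membership already makes $\varphi(X_{k,l})-\varphi(X_{k,l-1})$ small---but it pays with a weaker exponent: after Cauchy--Schwarz on the bad event one gets roughly $\mathbb{P}(\mathrm{bad})^{1/2}\approx\Delta_l^{\beta_2}$, so your $\alpha$ is about $\beta_2$ rather than $2(\beta_1\wedge\beta_2)$ (your H\"older summation of the fresh-exit bound against the geometric tail of $\check{\tau}_{l,l-1}$ is precisely the paper's trick inside Lemma \ref{lem:martingale}). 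Any positive $\alpha$ suffices for the theorem as stated, so your proof closes; what the paper's sharper rate buys is the quantitative bound \eqref{eq:varbound} that underpins the finite-expected-cost analysis of Section \ref{ssec:pl}, where one needs $\eta\in(\omega,2\beta)$---a range your weaker rate would shrink or destroy.
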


\begin{rem}
It is straightforward to establish that one can find a $\mathbb{P}_L$, so that \eqref{eq:basic_ub_est_time} is also unbiased
with finite variance. The proof can be constructed via the technical results in the appendix.
\end{rem}

The following result can be deduced by observing \eqref{eq:coup_sum_cond} and using the technical results in the appendix.

\begin{cor}
Assume (A\ref{ass:1}-\ref{ass:4}). Then there exists a choice of positive probability mass function $\mathbb{P}_L$, such that for 
the metric $\tilde{\mathsf{d}}$ in (A\ref{ass:3}) and any $\varphi\in \mathcal{B}_b(\mathsf{X})\cap\textrm{\emph{Lip}}_{\tilde{\mathsf{d}}}(\mathsf{X})$,  
\eqref{eq:coup_ub_est} is an unbiased and finite variance estimator of $\pi(\varphi)$. 
\end{cor}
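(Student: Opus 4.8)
The plan is to deduce the result from the machinery behind Theorem \ref{theo:main_res}, exploiting the fact that the sufficient condition \eqref{eq:coup_sum_cond} for the independent sum estimator differs only mildly from the single term condition \eqref{eq:main_cond}. Recall that the proof of Theorem \ref{theo:main_res} already provides, for the increment estimators $\xi_l$ underlying \eqref{eq:coup_ub_est}, both the unbiasedness relations \eqref{eq:ub_pi0}--\eqref{eq:ub_inc} and a geometric second moment bound of the form $\mathbb{E}[\xi_l^2]\leq C\Delta_l^{\alpha}$, where $C<\infty$ and $\alpha>0$ are determined by the constants $\beta_1,\beta_2,\epsilon$ of (A\ref{ass:2}) and (A\ref{ass:4}) (here $\Delta_l=2^{-l}$, so $\Delta_l^{\alpha}$ is geometric in $l$). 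Since \eqref{eq:ub_pi0}--\eqref{eq:ub_inc} come for free, it remains only to exhibit a positive probability mass function $\mathbb{P}_L$ for which \eqref{eq:coup_sum_cond} holds.

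I would take $\mathbb{P}_L$ geometric, $\mathbb{P}_L(l)=(1-r)r^{l}$ with $r\in(0,1)$ to be fixed, so that $\overline{\mathbb{P}}_L(l)=r^{l}$, and split the sum in \eqref{eq:coup_sum_cond} into its variance and bias pieces. For the variance piece, the elementary inequality $\mathbb{V}\textrm{ar}[\xi_l]\leq\mathbb{E}[\xi_l^2]$ together with $\overline{\mathbb{P}}_L(l)\geq\mathbb{P}_L(l)$ gives $\mathbb{V}\textrm{ar}[\xi_l]/\overline{\mathbb{P}}_L(l)\leq\mathbb{E}[\xi_l^2]/\mathbb{P}_L(l)\leq C\Delta_l^{\alpha}/((1-r)r^{l})$, whose sum over $l$ converges whenever $r>2^{-\alpha}$. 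For the bias piece, part 1 of (A\ref{ass:2}) yields $(\pi_l(\varphi)-\pi(\varphi))^2\leq C^2\|\varphi\|_\infty^2\Delta_l^{2\beta_1}$, so that $(\pi_l(\varphi)-\pi(\varphi))^2/\overline{\mathbb{P}}_L(l)\leq C^2\|\varphi\|_\infty^2\Delta_l^{2\beta_1}/r^{l}$, whose sum converges whenever $r>2^{-2\beta_1}$.

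Choosing any $r\in(\max\{2^{-\alpha},2^{-2\beta_1}\},1)$ then makes both pieces finite, so \eqref{eq:coup_sum_cond} holds for this $\mathbb{P}_L$; by the argument recalled in Section \ref{sec:ov_app}, the estimator \eqref{eq:coup_ub_est} is therefore unbiased with finite variance for every $\varphi\in\mathcal{B}_b(\mathsf{X})\cap\textrm{Lip}_{\tilde{\mathsf{d}}}(\mathsf{X})$. I expect the only real obstacle to be bookkeeping rather than conceptual: one must read off from the proof of Theorem \ref{theo:main_res} the precise exponent $\alpha$ in $\mathbb{E}[\xi_l^2]\leq C\Delta_l^{\alpha}$ and confirm that the admissible interval for $r$ is nonempty, which it trivially is.

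No fresh probabilistic input beyond (A\ref{ass:1})--(A\ref{ass:4}) is required, since the coupling and regularity estimates are already subsumed in the second moment bound, and the bias bound is supplied directly by (A\ref{ass:2}). In this sense the corollary really is, as the authors indicate, a direct consequence of the single term analysis: the passage from \eqref{eq:main_cond} to \eqref{eq:coup_sum_cond} costs nothing on the variance term (because $\overline{\mathbb{P}}_L\geq\mathbb{P}_L$) and introduces only the additional, geometrically decaying, squared-bias term that a geometric $\mathbb{P}_L$ absorbs automatically.
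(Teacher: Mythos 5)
Your proposal is correct and takes essentially the same route as the paper's (deliberately terse) proof: verify \eqref{eq:ub_pi0}--\eqref{eq:ub_inc} via Proposition \ref{prop:ub_inc}, bound $\mathbb{V}\textrm{ar}[\xi_l]\leq\mathbb{E}[\xi_l^2]\leq C(\|\varphi\|_{\infty}\vee\|\varphi\|_{\textrm{Lip}})\Delta_l^{2(\beta_1\wedge\beta_2)}$ via Lemma \ref{lem:sec_mom} (so your $\alpha=2(\beta_1\wedge\beta_2)$), control the squared bias in \eqref{eq:coup_sum_cond} by part 1 of (A\ref{ass:2}), and pick $\overline{\mathbb{P}}_L$ decaying more slowly than these rates --- your geometric $\mathbb{P}_L(l)=(1-r)r^l$ is precisely the paper's family $\mathbb{P}_L(l)\propto\Delta_l^{\eta}$ with $r=2^{-\eta}$ and $\eta\in(0,2(\beta_1\wedge\beta_2)\wedge 2\beta_1)$. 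The only bookkeeping you leave implicit, the $l=0$ term of \eqref{eq:coup_sum_cond}, is immediate from Lemma \ref{lem:pi_0} together with the trivial bound $(\pi_0(\varphi)-\pi(\varphi))^2\leq 4\|\varphi\|_{\infty}^2$.
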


\begin{rem}
As for \eqref{eq:basic_ub_est_time},  one can find a $\mathbb{P}_L$, so that \eqref{eq:coup_ub_est_time} is also unbiased
with finite variance. 
\end{rem}

The main strategy of the proof is to establish a martingale plus remainder type decomposition for $\reallywidehat{\pi_0(\varphi)}_{k}$ and $\reallywidehat{[\pi_l-\pi_{l-1}](\varphi)}_k$. Given this, one can rely on the optional sampling theorem to establish that
the former quantities are unbiased estimates of $\pi_0(\varphi)$ and $[\pi_l-\pi_{l-1}](\varphi)$. One is then left to control the second moments of the decomposition, which can be achieved in a variety of ways; we rely on martingale methods.

\subsubsection{Discussion of Assumptions} 
\label{ssec:ass}

Assumption (A\ref{ass:1}) is a strong assumption, although reasonable as we only consider compact $\mathsf{X}$. It is verified for an example in \cite{mlmcmc}. Assumption (A\ref{ass:5}) has been considered in \cite{ub_grad} and it shown to hold
for a related context; see the results in \cite[Lemmata 14 \& 22]{ub_grad}. 
Assumption (A\ref{ass:3}) has been verified for an example in \cite{mlmcmc}.
Assumption (A\ref{ass:2}) 1.~relates to the rate at which the bias converges and can hold for inverse problems (see \cite{beskos}). 
Assumption (A\ref{ass:2}) 2.~can be achieved by considering an appropriate coupling of $(K_l(x,\cdot),K_{l-1}(x,\cdot))$ and is problem specific; 
see \cite{mlmcmc} for an example where this is verified.

Assumption (A\ref{ass:4}) appears to be quite non-standard. 
We first remark that assumptions of these type (not identical), 
can be verified in complex settings \cite{ub_grad}: 
(A\ref{ass:4}) is shown in \cite[Lemma 16]{ub_grad}. 
Secondly, if one can establish that the pairs $(X_{n,l},X_{n,l-1})_{n\in\mathbb{Z}^+}$ and $(W_{n,l},W_{n,l-1})_{n\in\mathbb{Z}^+}$ 
are (marginally) uniformly ergodic Markov chains with an invariant measure $\check{\pi}_{l,l-1}$, then it is sufficient to assume that
$$
\check{\pi}_{l,l-1}\Big((\varphi \otimes 1 - 1\otimes\varphi)^2\Big) \leq C\Delta_l^{2\beta_2}.
$$
This is because the proof that is used, turns these one-step type properties of the coupled Markov chains in (A\ref{ass:4}) into similar properties of the Markov chain at any time step (see Lemma \ref{lem:good_prob}). 
However, these properties are inherited directly from the invariant measure $\check{\pi}_{l,l-1}$ if such a quantity exists and 
the chain converges sufficiently fast to it.


\subsubsection{Implication of results and choice of $\bbP_L$}
\label{ssec:pl}

The discussion below relates to the single term estimator $\widehat{\pi(\varphi)}_{S,k}$ in \eqref{eq:basic_ub_est} 
with increments estimated using $\xi_l = \reallywidehat{[\pi_l-\pi_{l-1}](\varphi)}_{k}$ defined in \eqref{eq:stan_est_inc}, and is easily extended to the time-averaged estimator \eqref{eq:time_ave_inc}. 
The case of the independent sum estimators \eqref{eq:coup_ub_est}
and \eqref{eq:coup_ub_est_time} follows along the same lines and is thus omitted. 
The variance and expected cost of $\widehat{\pi(\varphi)}_{S,k}$ can be bounded as follows: 
\begin{eqnarray}\label{eq:var}
\mathbb{V}\textrm{ar}\left[\widehat{\pi(\varphi)}_{S,k}\right]&\leq & \sum_{l\in\mathbb{Z}^+}\frac{\bbE[\xi_l^2]}{\mathbb{P}_L(l)} \, , \\
{\rm Cost}\left[\widehat{\pi(\varphi)}_{S,k}\right] &\leq & C\sum_{l\in\mathbb{Z}^+} 
\bbE [\check{\tau}_{l,l-1}] {\rm Cost}(K_l) \mathbb{P}_L(l) \, ,
\label{eq:cost}
\end{eqnarray}
where $C>0$ is some constant (see e.g. \eqref{eqn:cost_increment}),    
${\rm Cost}(K_l)$ denotes the cost of an application of the marginal kernel $K_l$, 
$\check{\tau}_{l,l-1} =\tau_l\vee\tau_{l-1}$ is the stopping time of the Markov chain 
$(Z_{n,l,l-1})_{n\in\mathbb{Z}^+}$ with $\check{\tau}_{0,-1}:=\tau_0$. 
Averaging $N$ single term estimators as in \eqref{eqn:averaging_singleterm} would yield 
a variance of $\mathbb{V}\textrm{ar}[\widehat{\pi(\varphi)}_{S,k}] N^{-1}$ 
and expected cost of ${\rm Cost}[\widehat{\pi(\varphi)}_{S,k}] N$.

For the second moment of $\xi_l$ featuring in \eqref{eq:var}, Lemma \ref{lem:sec_mom} provides the bound 
\begin{equation}\label{eq:varbound}
\bbE[\xi_l^2] \leq \Delta_l^{2\beta},
\end{equation}
with
$\beta = \beta_1 \wedge\beta_2 > 0$, where 
$\beta_1$ and $\beta_{2}$ are given in Assumptions
(A\ref{ass:2}) and (A\ref{ass:4}), respectively, 
and $\tilde{\mathsf{d}}$ in Assumption (A\ref{ass:3}) is given by the standard Euclidean distance.
Using Assumption (A\ref{ass:5}), one can upper-bound the expected stopping time 
$\bbE [\check{\tau}_{l,l-1}]$ appearing in \eqref{eq:cost} by a constant that is independent of $l$. 
We assume furthermore that there exists $\omega<2\beta$ and $C_K>0$ such that the cost of $K_l$ satisfies
\begin{equation}\label{eq:costbound}
{\rm Cost}(K_l) \leq C_K\Delta_l^{-\omega} \, .
\end{equation}
Suppose the probability mass function on $\mathbb{Z}^+$ is of the form $\mathbb{P}_L(l) \propto \Delta_l^\eta$. 
Following \eqref{eq:varbound} and \eqref{eq:costbound}, 
there are constants $C_V,C_C >0$ such that 
the right-hand sides of 
\eqref{eq:var} and \eqref{eq:cost} are bounded above by 
$C_V \sum_{l\in \bbZ_+} \Delta_l^{2\beta-\eta}$ and 
$C_C \sum_{l\in \bbZ_+} \Delta_l^{\eta-\omega}$, respectively. 
Both quantities are finite for any $\eta \in (\omega,2\beta)$,
e.g. one can let $\eta = (2\beta + \omega)/2$.
The above discussion requires $\omega<2\beta$; 
see e.g. \cite{rhee,ub_bip,ub_vihola} for the case $\omega \geq 2\beta$.

We now consider the above discussion in the particular context of the 
motivating example in Section \ref{sec:mot_ex}. 
In this setting, Proposition \ref{prop:disc} provides $\beta_1=2$ in Assumption (A\ref{ass:2}).1.
As discussed in Section \ref{ssec:ass},
the rates for Assumptions (A\ref{ass:2}).2 and (A\ref{ass:4}) depend upon the particular kernels used
and are more difficult to establish theoretically, 
but the value of $\beta$ in \eqref{eq:varbound} can be estimated numerically. 
Evaluation of the density \eqref{eq:unnol} at level $l$ requires the solution of the 
tridiagonal linear system \eqref{eq:forward}, which has  
$\mathcal{O}(\Delta_l^{-1})$ degrees of freedom. 
Therefore \eqref{eq:costbound} holds with $\omega=1$, and we choose $\eta=5/2$.

\section{Specific Kernels}\label{sec:algos}

We consider various strategies to construct the kernels $\check{K}_{l,l-1}$, $l\in\mathbb{N}$ described in Section \ref{sec:ub_inc_gen}. We begin with Metropolis--Hastings algorithms in Sections \ref{sec:coupled_MH} and \ref{sec:coupled_proposal}, 
and consider the case of Hamiltonian Monte Carlo methods in Section \ref{sec:hmc}. 

\subsection{Coupled Metropolis--Hastings Kernels}\label{sec:coupled_MH}
We consider a collection of coupled MH kernels $\check{K}_{l,l-1}$ that are defined by the following simulation procedure. 
\begin{enumerate}
	\item Given current state $z_{l,l-1}=((x_{l},w_l),(x_{l-1},w_{l-1}))\in\mathsf{Z}$, generate proposal 
	$Z_{l,l-1}'=((X_{l}',W_l'),(X_{l-1}',W_{l-1}'))$ according to $\check{Q}_{l,l-1}(z_{l,l-1},\cdot)$.
	
	\item Generate $U\sim\mathcal{U}_{[0,1]}$ and for level $s\in\{l,l-1\}$:
	\begin{itemize}
		\item{If $U<\alpha_s(x_s,X_s')$, set $X_s^{\star}=X_s'$, otherwise set $X_s^{\star}=x_s$.}
		\item{If $U<\alpha_s(w_s,W_s')$, set $W_s^{\star}=W_s'$, 
		otherwise set $W_s^{\star}=w_s$.}
	\end{itemize}
	
	\item Return $Z_{l,l-1}^{\star}=((X_{l}^{\star},W_l^{\star}),(X_{l-1}^{\star},W_{l-1}^{\star}))$ as a sample according to  
	$\check{K}_{l,l-1}(z_{l,l-1},\cdot)$.	
\end{enumerate}

The notation $\check{Q}_{l,l-1}:\mathsf{Z}\rightarrow\mathscr{P}(\mathsf{Z})$ refers to a coupling of the proposal kernels $Q_{l}(x_{l},dx_l')$, $Q_{l}(w_{l},dw_l')$, $Q_{l-1}(x_{l-1},dx_{l-1}')$ and $Q_{l-1}(w_{l-1},dw_{l-1}')$, in the sense that generating 
$Z_{l,l-1}'=((X_{l}',W_l'),(X_{l-1}',W_{l-1}'))$ according to $\check{Q}_{l,l-1}(z_{l,l-1},\cdot)$, 
is marginally equivalent to 
\begin{align}\label{eqn:four_marginals}
X_l'\sim Q_l(x_l,\cdot), \quad W_l'\sim Q_l(w_l,\cdot), \quad  
X_{l-1}'\sim Q_{l-1}(x_{l-1},\cdot), \quad W_{l-1}'\sim Q_{l-1}(w_{l-1},\cdot).
\end{align}
We also require the coupled proposal kernel $\check{Q}_{l,l-1}$ to satisfy 
\begin{align}\label{eqn:positiveprob_meeting}
\check{Q}_{l,l-1}(z_{l,l-1},D\times D)>0,
\end{align}
where $D := \{(x,w)\in\mathsf{X}\times\mathsf{X}:x=w\}$ denotes the diagonal set, 
and the following faithfulness property 
\begin{align}\label{eqn:faithfulness_level}
x_s = w_s ~\Longrightarrow~ X_s' = W_s'\mbox{ almost surely}
\end{align}
for each level $s\in\{l,l-1\}$. 
The condition \eqref{eqn:positiveprob_meeting} requires the coupling mechanism to generate identical proposals 
on each level with positive probability. 
Under \eqref{eqn:faithfulness_level} and the use of a common uniform random variable in Step 2, 
the pair of MH chains on each level would be faithful, as required in \eqref{eqn:faithfulness_bothlevels}. 

In the following, we will consider $\mathsf{X}=\mathbb{R}^d$ and describe concrete examples 
of $\check{Q}_{l,l-1}$ for proposal kernels associated to random walk Metropolis--Hastings (RWMH)
\begin{align}\label{eqn:randomwalk_proposal}
	Q_s(x,dx') = \phi_d(x';x,\Sigma_s)dx', \quad (x,s)\in\mathsf{X}\times\mathbb{Z}^+,
\end{align}
and pre-conditioned Crank-Nicolson (pCN) \cite{pcn,neal}
\begin{align}\label{eqn:pcn_proposal}
	Q_s(x,dx') = \phi_d(x';\rho_sx,(1-\rho_s^2)\Sigma_s)dx', \quad(x,s)\in\mathsf{X}\times\mathbb{Z}^+,
\end{align}
where $\Sigma_s=\sigma_s\sigma_s^T$, $\sigma_s$ is an invertible $d\times d$ matrix 
and $\rho_s\in(-1,1)$. 

\subsection{Coupled Proposal Kernels}\label{sec:coupled_proposal}
\subsubsection{Independent Maximal Couplings}
A naive approach is to employ 
$$
\check{Q}_{l,l-1}^{(I)}(z_{l,l-1},dz_{l,l-1}') := \check{Q}_l(z_l,dz_l')\otimes \check{Q}_{l-1}(z_{l-1},dz_{l-1}'),
$$ 
which independently samples from the maximal coupling of the proposal kernels on each level given in \eqref{eqn:maximal_level}. 
Although it does satisfy the requirements \eqref{eqn:four_marginals}, \eqref{eqn:positiveprob_meeting} and \eqref{eqn:faithfulness_level}, this choice is unlikely to ensure that the conditions 
in \eqref{eq:main_cond} or \eqref{eq:coup_sum_cond} hold, as the proposals on levels $l$ and $l-1$ are sampled independently. 


\subsubsection{Four-Marginal Maximal Couplings}\label{sec:4max}
We now introduce an extension of the maximal coupling in \eqref{eqn:maximal_level} to the case of four marginals. 
For $z_{l,l-1}=((x_{l},w_l),(x_{l-1},w_{l-1}))\in\mathsf{Z}$, define the overlapping kernel on $\mathsf{X}$ as
$$
O_{l,l-1}(z_{l,l-1},du) := Q_l(x_l,du)\wedge Q_l(w_l,du) \wedge Q_{l-1}(x_{l-1},du)\wedge Q_{l-1}(w_{l-1},du),
$$
and the size of the overlap 
$
S_{l,l-1}(z_{l,l-1}) := \int_{\mathsf{X}}O_{l,l-1}(z_{l,l-1},du).
$
For each level $s\in\{l,l-1\}$, we define the two residual probability measures on $\mathsf{X}$
\begin{align*}
r_{s}(z_{l,l-1},du) := \frac{Q_s(x_s,du) - O_{l,l-1}(z_{l,l-1},du)}{1-S_{l,l-1}(z_{l,l-1})},\quad 
\widetilde{r}_{s}(z_{l,l-1},du) := \frac{Q_s(w_s,du) - O_{l,l-1}(z_{l,l-1},du)}{1-S_{l,l-1}(z_{l,l-1})}.
\end{align*}
To accommodate the event when the pair of chains on a level have met, i.e. $(x_s,w_s)\in D$, we define 
$$
R_s(z_{l,l-1},d(x_s',w_s')) := r_{s}(z_{l,l-1},dx_s')\Big(\mathbb{I}_D(x_s,w_s)\delta_{\{x_s'\}}(dw_s')
+ \mathbb{I}_{D^c}(x_s,w_s)\widetilde{r}_{s}(z_{l,l-1},dw_s')\Big).
$$
Writing 
\begin{equation}\label{eq:4way_residuals}
\check{R}_{l,l-1}(z_{l,l-1},dz_{l,l-1}')  :=  R_l(z_{l,l-1},d(x_l',w_l'))R_{l-1}(z_{l,l-1},d(x_{l-1}',w_{l-1}')),
\end{equation}
we may then define our coupled proposal kernel $\check{Q}_{l,l-1}^{(M)}:\mathsf{Z}\rightarrow\mathscr{P}(\mathsf{Z})$ as 
\begin{align}
&\check{Q}_{l,l-1}^{(M)}(z_{l,l-1},dz_{l,l-1}') \notag\\
:= ~&S_{l,l-1}(z_{l,l-1})\int_{\mathsf{X}}
\frac{O_{l,l-1}(z_{l,l-1},du)}{S_{l,l-1}(z_{l,l-1})}
\delta_{\{u\}^4}(dz_{l,l-1}') +  
(1-S_{l,l-1}(z_{l,l-1}))\check{R}_{l,l-1}(z_{l,l-1},dz_{l,l-1}').\label{eq:4way_maximal}
\end{align}
One can check that this satisfies the requirements in \eqref{eqn:four_marginals}, \eqref{eqn:positiveprob_meeting} and \eqref{eqn:faithfulness_level}. 
Moreover, $\check{Q}_{l,l-1}^{(M)}$ is a maximal coupling as it achieves the maximum probability of having identical proposals
$X_{l}'=W_l'=X_{l-1}'=W_{l-1}'$, 
which is given by the size of the overlap $S_{l,l-1}(z_{l,l-1})$.

Algorithm \ref{alg:4way_maximalcoupling} provides a method to
sample from \eqref{eq:4way_maximal}, assuming that we can sample
from the proposal transition kernels and evaluate their transition
densities. These assumptions clearly hold for the Gaussian proposal kernels in 
\eqref{eqn:randomwalk_proposal} and \eqref{eqn:pcn_proposal}. 
Step 1 can be seen as an attempt to sample $X_{l}'=W_l'=X_{l-1}'=W_{l-1}'$ 
from the overlap $O_{l,l-1}(z_{l,l-1},du)/S_{l,l-1}(z_{l,l-1})$, 
and if this fails, Step 2 corresponds to a rejection sampler 
to sample from the residuals \eqref{eq:4way_residuals}. 
The four cases considered in Step 2 are needed to ensure faithfulness property in \eqref{eqn:faithfulness_level}.

The coupling $\check{Q}_{l,l-1}^{(M)}$ in \eqref{eq:4way_maximal} can be readily 
employed on RWMH and pCN proposals, and more general proposal transitions outside the Gaussian 
family. Next we present an alternative coupling that is specific to the Gaussian case. 

\begin{algorithm}
\textbf{Input}: transition kernels $Q_{s}(x_{s},dx_{s}')$ and $Q_{s}(w_{s},dw_{s}')$ 
for level $s\in\{l,l-1\}$.
\begin{enumerate}
\item Sample $U\sim Q_{l}(x_{l},\cdot)$. With probability 
\[
\min\left\{ 1,\frac{q_{l}(w_{l},U)}{q_{l}(x_{l},U)},\frac{q_{l-1}(x_{l-1},U)}{q_{l}(x_{l},U)},\frac{q_{l-1}(w_{l-1},U)}{q_{l}(x_{l},U)}\right\} ,
\]
output $((X_{l}',W_{l}'),(X_{l-1}',W_{l-1}'))=((U,U),(U,U))$.
\item Otherwise 
\begin{enumerate}
\item Set $X_l'=U$. If $x_{l}=w_{l}$, set $W_l'=U$.
\item If $x_{l}\neq w_{l}$, propose $U_{l}\sim Q_{l}(w_{l},\cdot)$ and with probability 
\[
1-\min\left\{ 1,\frac{q_{l}(x_{l},U_l)}{q_{l}(w_{l},U_{l})},\frac{q_{l-1}(x_{l-1},U_{l})}{q_{l}(w_{l},U_{l})},\frac{q_{l-1}(w_{l-1},U_{l})}{q_{l}(w_{l},U_{l})}\right\},
\]
set $W_{l}'=U_{l}$; otherwise repeat until acceptance. 

\item Propose $U_{l-1}\sim Q_{l-1}(x_{l-1},\cdot)$ and with probability 
\[
1-\min\left\{ 1,\frac{q_{l}(x_{l},U_{l-1})}{q_{l-1}(x_{l-1},U_{l-1})},\frac{q_{l}(w_{l},U_{l-1})}{q_{l-1}(x_{l-1},U_{l-1})},\frac{q_{l-1}(w_{l-1},U_{l-1})}{q_{l-1}(x_{l-1},U_{l-1})}\right\},
\]
set $X_{l-1}'=U_{l-1}$; otherwise repeat until acceptance. If $x_{l-1}=w_{l-1}$, set $W_{l-1}'=X_{l-1}'$.

\item If $x_{l-1}\neq w_{l-1}$, propose $U_{l-1}\sim Q_{l-1}(w_{l-1},\cdot)$ and with probability 
\[
1-\min\left\{ 1,\frac{q_{l}(x_{l},U_{l-1})}{q_{l-1}(w_{l-1},U_{l-1})},\frac{q_{l}(w_{l},U_{l-1})}{q_{l-1}(w_{l-1},U_{l-1})},\frac{q_{l-1}(x_{l-1},U_{l-1})}{q_{l-1}(w_{l-1},U_{l-1})}\right\},
\]
set $W_{l-1}'=U_{l-1}$; otherwise repeat until acceptance. 
\end{enumerate}
\end{enumerate}

\textbf{Output}: Sample $Z_{l,l-1}'=((X_{l}',W_{l}'),(X_{l-1}',W_{l-1}'))$ from $\check{Q}_{l,l-1}^{(M)}(z_{l,l-1},\cdot)$.
\caption{A maximal coupling of four transition kernels} 
\label{alg:4way_maximalcoupling}
\end{algorithm}

\subsubsection{Synchronous Pairwise Reflection Maximal Couplings}\label{sec:4refmax}
We consider the case $\mathsf{X}=\mathbb{R}^d$ and proposal kernels of the form 
\begin{align}\label{eqn:gaussian_proposal}
	Q_s(x,dx') = \phi_d(x';\mu_s(x),\Sigma_s)dx', \quad (x,s)\in\mathsf{X}\times\mathbb{Z}^+,
\end{align}
where $\mu_s:\mathsf{X}\rightarrow\mathsf{X}$, $\Sigma_s=\sigma_s\sigma_s^T$ and $\sigma_s$ is an invertible $d\times d$ matrix. 
The following will exploit the fact that a sample $X'$ from $Q_s(x,\cdot)$ can be represented as 
$X'=\mu_s(x) + \sigma_sv_s$ with $v_{s}\sim\mathcal{N}_d(0_{d},I_{d})$. 
As noted in \cite{jacob1}, the case of $v_{s}$ following a spherically symmetric distribution 
can also be accommodated.

The coupled proposal kernel $\check{Q}_{l,l-1}^{(R)}:\mathsf{Z}\rightarrow\mathscr{P}(\mathsf{Z})$ 
that we construct here is based on a synchronous coupling of the reflection maximal coupling 
in \cite{bourabee_eberle_zimmer} for the pair of proposals on each level. 
Algorithm \ref{alg:coupled4_reflectionmaximal} details how to obtain a sample 
$Z_{l,l-1}'=((X_{l}',W_{l}'),(X_{l-1}',W_{l-1}'))$ from $\check{Q}_{l,l-1}^{(R)}(z_{l,l-1},\cdot)$, 
which will satisfy the requirements in \eqref{eqn:four_marginals}, \eqref{eqn:positiveprob_meeting} and \eqref{eqn:faithfulness_level}. 
The synchronous use of $v_l=v_{l-1}=v$ in Step 1 induces a coupling 
between the proposals across levels. 
In Step 3, for each level $s\in\{l,l-1\}$, 
note that the event $\widetilde{v}_{s}=v_{s}+u_{s}$ yields identical 
proposals $X_s'=W_s'$, and this occurs with maximal probability given by the size of the overlap 
$S_s(x_s,w_s)$. 
When identical proposals are not possible, we take $\widetilde{v}_{s}$ as the reflection of 
$v_s$ with respect to the hyperplane orthogonal to $e_s$, and right between 
$\sigma_s^{-1}x_s$ and $\sigma_s^{-1}w_s$. 
Under this reflection coupling \cite{lindvall_rogers_1986}, one can show that 
$X_s' - W_s' = \varrho(v_s)(\mu_s(x_s) - \mu_s(w_s))$ with 
$\varrho(v_s) = 1 + 2 (v_s^T e_s)e_s$. 
Since $\varrho(v_s)\sim\mathcal{N}(1,4\|u_s\|_2^{-2})$, one has contraction of the proposals 
on each level with probability of almost $1/2$ when $\|u_s\|$ is large. 
If the proposals on each level are both accepted, contraction is desirable as it leads to states 
$X_{s}'$ and $W_{s}'$ that are closer. This in turn yields a higher probability of generating 
identical proposals in the next application of $\check{Q}_{l,l-1}^{(R)}$. 

As \eqref{eqn:gaussian_proposal} clearly includes 
\eqref{eqn:randomwalk_proposal} and \eqref{eqn:pcn_proposal} as special cases, 
the coupling $\check{Q}_{l,l-1}^{(R)}$ is applicable to 
both RWMH and pCN proposals. 
In the next section, we consider another construction for pCN that always induces 
contractive proposals.



\begin{algorithm}
\textbf{Input}: transition kernels $Q_{s}(x_{s},dx_{s}')=\phi_d(x_{s}';\mu_s(x_{s}),\Sigma_{s})dx_{s}',$ 
and $Q_{s}(w_{s},dw_s')=\phi_{d}(w_{s}';\mu_s(w_{s}),\Sigma_{s})dw_{s}'$ for level $s\in\{l,l-1\}$.

Sample $v\sim\mathcal{N}_d(0_{d},I_{d})$ and for level $s\in\{l,l-1\}$:
\begin{enumerate}
\item Set $v_s=v$ and $X_{s}'=\mu_s(x_{s})+\sigma_{s}v_s$.
\item Set $u_{s}=\sigma_{s}^{-1}(\mu_s(x_{s})-\mu_s(w_{s}))$ 
and $e_{s}=u_{s}/\|u_{s}\|_2$.
\item With probability $\min\left\{ 1,\phi
_d(v_{s}+u_{s};0_{d},I_{d})/\phi_d(v_{s};0_{d},I_{d})\right\}$, 
set $\widetilde{v}_{s}=v_{s}+u_{s}$; otherwise set $\widetilde{v}_{s}=v_{s}-2(v_{s}^Te_{s})e_{s}$. Set $W_{s}'=\mu_s(w_{s})+\sigma_{s}\widetilde{v}_s$.
\end{enumerate}
\textbf{Output}: Sample $Z_{l,l-1}'=((X_{l}',W_{l}'),(X_{l-1}',W_{l-1}'))$ from $\check{Q}_{l,l-1}^{(R)}(z_{l,l-1},\cdot)$.
\caption{Synchronous pairwise reflection maximal couplings}
\label{alg:coupled4_reflectionmaximal}
\end{algorithm}

\subsubsection{Synchronous Pre-conditioned Crank Nicolson}\label{sec:sync_coup}

We consider a coupled proposal kernel $\check{Q}_{l,l-1}^{(S)}:\mathsf{Z}\rightarrow\mathscr{P}(\mathsf{Z})$ for pCN \eqref{eqn:pcn_proposal}. 
To obtain a sample $Z_{l,l-1}'=((X_{l}',W_{l}'),(X_{l-1}',W_{l-1}'))$ from $\check{Q}_{l,l-1}^{(S)}(z_{l,l-1},\cdot)$,
we simulate $v\sim\mathcal{N}_d(0_d,I_d)$ and take 
\begin{align*}
	X_{s}'=\rho_{s}x_{s}+\sqrt{1-\rho_{s}^{2}}\sigma_{s}v,\quad 
	W_{s}'=\rho_{s}w_{s}+\sqrt{1-\rho_{s}^{2}}\sigma_{s}v,
\end{align*}
for both levels $s\in\{l,l-1\}$. The synchronous use of $v$ guarantees contraction of the proposals on each level, 
i.e. $\|X_{s}' - W_{s}'\|_2 = \rho_s \| x_s - w_s \|_2$ for $\rho_s\in(-1,1)$, 
and induces dependencies between the pairs of proposals across levels. 
The latter is crucial in our context as we want \eqref{eq:main_cond} or \eqref{eq:coup_sum_cond} to hold. 
Although this synchronous coupling satisfies the requirements in \eqref{eqn:four_marginals} and \eqref{eqn:faithfulness_level}, 
it is not possible to have identical proposals as needed in \eqref{eqn:positiveprob_meeting}. 
A solution is to consider a mixture of $\check{Q}_{l,l-1}^{(S)}$ and 
either $\check{Q}_{l,l-1}^{(M)}$ or $\check{Q}_{l,l-1}^{(R)}$ that allow identical proposals to occur. 
More precisely, we take the coupled proposal kernel in Section \ref{sec:coupled_MH} as the mixture kernel
\begin{align}\label{eqn:pcn_coupledkernel}
	\check{Q}_{l,l-1}(z_{l,l-1},dz_{l,l-1}') = \kappa\check{Q}_{l,l-1}^{(S)}(z_{l,l-1},dz_{l,l-1}') + 
	(1-\kappa)\check{Q}_{l,l-1}^{(p)}(z_{l,l-1},dz_{l,l-1}'), \quad p\in\{M,R\},
\end{align}
which will satisfy \eqref{eqn:four_marginals}, \eqref{eqn:positiveprob_meeting} and \eqref{eqn:faithfulness_level} 
for any choice of $\kappa\in(0,1)$. 

\subsection{Hamiltonian Monte Carlo}\label{sec:hmc}
We restrict ourselves to the case $\mathsf{X}=\mathbb{R}^d$. Hamiltonian Monte Carlo (HMC) \cite{duane} 
considers the following auxiliary target distribution on $(\mathbb{R}^{2d},B(\mathbb{R}^{2d}))$ 
for each $l\in\mathbb{Z}^+$
\begin{align}\label{eqn:extended_target}
 \bar{\pi}_l(d(x,v)) := \pi_l(x)\phi_d(v;0_d,I_d)d(x,v),
\end{align}
where $d(x,v)$ denotes the Lebesgue measure on $\mathbb{R}^{2d}$. 
We will assume that the target density $x\mapsto\pi_l(x)$ has a well-defined gradient, 
and write the Hamiltonian corresponding to \eqref{eqn:extended_target} as 
$H_l(x,v)=-\log\pi_{l}(x)+\frac{1}{2}\|v\|_2^{2}$. 

Given a current position $x\in\mathbb{R}^d$, HMC samples an initial velocity $v\sim\mathcal{N}_d(0_{d},I_{d})$ and 
generates a proposal by discretizing the Hamiltonian dynamics associated to $H_l$ 
using a leapfrog integrator. Given a stepsize $\varepsilon_l>0$ and a number of steps $M_l\in\mathbb{N}$, this
numerical scheme initializes at $(x_{0},v_{0})=(x,v)\in\mathbb{R}^{2d}$ and iterates for $m\in\{0,\ldots,M_l-1\}$
\begin{align}\label{eqn:leapfrog}
v_{m+1/2} = v_{m}+\frac{\varepsilon_l}{2}\nabla\log\pi_{l}(x_{m}),\quad
x_{m+1} = x_{m}+\varepsilon_l v_{m+1/2},\quad
v_{m+1} = v_{m+1/2}+\frac{\varepsilon_l}{2}\nabla\log\pi_{l}(x_{m+1}).
\end{align}
As the Hamiltonian is not exactly conserved under the leapfrog integrator, 
the proposal $(x',v')=(x_{M_l},v_{M_l})$ is then subjected to a Metropolis--Hastings accept-reject step, i.e. 
for $U\sim\mathcal{U}_{[0,1]}$, if  
\begin{align}\label{eqn:acceptreject_hmc}
U \leq \bar{\alpha}_{l}\left\{ (x,v),(x',v')\right\} :=1\wedge\exp\{H_l(x,v)-H_l(x',v')\},
\end{align}
we output $(X^{\star},V^{\star})=(x',v')$, otherwise we output $(X^{\star},V^{\star})=(x,v)$. 
The resulting transition $X^{\star}\sim P_l(x,\cdot)$ on the position coordinate defines a HMC kernel $P_l$ at level $l$.

Following \cite{heng_jacob_2019}, one can construct a faithful coupling of $P_l(x,\cdot)$ and $P_l(w,\cdot)$ 
for $(x,w)\in\mathbb{R}^d\times\mathbb{R}^d$, by employing a common velocity $v$ to initialize \eqref{eqn:leapfrog} and 
a common uniform random variable $U$ in the accept-reject step \eqref{eqn:acceptreject_hmc}. 
If the initial positions $x$ and $w$ are in a region where $\pi_l$ is log-concave,  
and the integration time $\varepsilon_l M_l$ is appropriately chosen, 
this can lead to contractive proposals which are then accepted with high probability for small $\varepsilon_l$. 
The resulting coupled HMC kernel $\check{P}_l((x,w),\cdot)$ cannot be employed within the framework of
Section \ref{sec:ub_mcmc}, as it does not allow chains to meet. 
To circumvent this issue, \cite{heng_jacob_2019} considered a mixture kernel
\begin{align}\label{eqn:coupled2_hmc}
	\check{K}_{l}((x,w),d(x',w')) = 
	\kappa\check{P}_l((x,w),d(x',w')) 
	+ (1-\kappa)\check{K}_{l}^{(M)}((x,w),d(x',w')),
\end{align}
where $\kappa\in(0,1)$ and $\check{K}_{l}^{(M)}$ denotes a coupled RWMH kernel based on 
the maximal coupling in \eqref{eqn:maximal_level}. 
Although the latter enables meetings, the marginal kernel induced by \eqref{eqn:coupled2_hmc} 
is not the HMC kernel $P_l$, but remains close if $\kappa$ is close to one.

We now extend the work of \cite{heng_jacob_2019} to our context. 
For $z_{l,l-1}=((x_{l},w_l),(x_{l-1},w_{l-1}))\in\mathsf{Z}$, 
one can also construct a faithful coupling of $P_l(x_l,\cdot)$, $P_l(w_l,\cdot)$, 
$P_{l-1}(x_{l-1},\cdot)$ and $P_{l-1}(w_{l-1},\cdot)$ by using 
a common initial velocity in the leapfrog integrators and a common uniform random variable for all 
accept-reject steps. Let $\check{P}_{l,l-1}(z_{l,l-1},\cdot)$ denote the resulting coupled HMC kernel 
on levels $l$ and $l-1$. In addition to the above-mentioned behaviour for the pairs on each level, 
this coupling also induces dependencies between the pairs across levels, 
which are crucial for the conditions in \eqref{eq:main_cond} or \eqref{eq:coup_sum_cond} to hold. 
Analogous to \eqref{eqn:coupled2_hmc}, we take 
\begin{align}\label{eqn:coupled4_hmc}
	\check{K}_{l,l-1}(z_{l,l-1},dz_{l,l-1}') = 
	\kappa\check{P}_{l,l-1}(z_{l,l-1},dz_{l,l-1}') 
	+ (1-\kappa)\check{K}_{l,l-1}^{(M)}(z_{l,l-1},dz_{l,l-1}'),
\end{align}
where $\kappa\in(0,1)$ and $\check{K}_{l,l-1}^{(M)}$ denotes a coupled RWMH kernel based on 
the maximal couplings in Section \ref{sec:4max} or Section \ref{sec:4refmax}. 
We note that the mixture kernel \eqref{eqn:coupled4_hmc} satisfies the properties stated in Section \ref{sec:ub_inc_gen}.  
An algorithmic description of how to sample from it is provided in Algorithm \ref{alg:coupled4_mixturehmc}. 

\begin{algorithm}
\textbf{Input}: current state $z_{l,l-1}=((x_{l},w_l),(x_{l-1},w_{l-1}))\in\mathsf{Z}$, 
leapfrog stepsize $\varepsilon_l>0$, number of leapfrog steps $M_l\in\mathbb{N}$, mixing probability $\kappa\in(0,1)$ 
and coupled RWMH kernel $\check{K}_{l,l-1}^{(M)}$. 
\\

With probability $\kappa$, 
\begin{enumerate}
	\item Sample initial velocity $v\sim\mathcal{N}_d(0_{d},I_{d})$ and $U\sim\mathcal{U}_{[0,1]}$.
	
	\item For $s\in\{l,l-1\}$, run leapfrog integrator \eqref{eqn:leapfrog} on level $s$ with initial condition $(x_s,v)$ to obtain 
	proposal $(x_s',v_s')$.
	
	\item For $s\in\{l,l-1\}$, run leapfrog integrator \eqref{eqn:leapfrog} on level $s$ with initial condition $(w_s,v)$ to obtain 
	proposal $(w_s',\widetilde{v}_s')$.
	
	\item For $s\in\{l,l-1\}$, if $U \leq \bar{\alpha}_{l}\{ (x_s,v),(x_s',v_s') \}$, set $X_s^{\star}=x_s'$; otherwise set $X_s^{\star}=x_s$.
	
	\item For $s\in\{l,l-1\}$, if $U \leq \bar{\alpha}_{l}\{ (w_s,v),(w_s',\widetilde{v}_s') \}$, 
	set $W_s^{\star}=w_s'$; otherwise set $W_s^{\star}=w_s$.
\end{enumerate}
Otherwise, generate $Z_{l,l-1}^{\star}=((X_{l}^{\star},W_l^{\star}),(X_{l-1}^{\star},W_{l-1}^{\star}))$ 
according to $\check{K}_{l,l-1}^{(M)}(z_{l,l-1},\cdot)$.

\textbf{Output}: Sample $Z_{l,l-1}^{\star}=((X_{l}^{\star},W_l^{\star}),(X_{l-1}^{\star},W_{l-1}^{\star}))$ 
from $\check{K}_{l,l-1}(z_{l,l-1},\cdot)$ in \eqref{eqn:coupled4_hmc}.
\caption{Mixture of coupled HMC and coupled RWMH}
\label{alg:coupled4_mixturehmc}
\end{algorithm}

\section{Numerical Results}\label{sec:numerics}

Three numerical examples will be used to illustrate 
the properties of various algorithms and our theoretical results.  
The elliptic PDE problem introduced in Section \ref{sec:mot_ex}
will be considered in Section \ref{sec:bip}. 
In Section \ref{sec:analytical_case}, we begin with a case where an analytical solution of the PDE is tractable. 
Subsequently, in Section \ref{sec:bipfull} a particular example of 
the problem described in Section \ref{sec:mot_ex} is considered.
Finally, we examine a model from epidemiology in Section \ref{sec:epidemic}, 
to analyze COVID-19 infections in the UK.

Two quantities of interest will be used to illustrate our methodology. 
The first is the expected value corresponding to the choice of function $\varphi(x) = x$. 
The next quantity of interest is motivated by the estimation of 
parameters $\theta$, such as the precision of the observation model in \eqref{eq:data}. 
One approach is based on maximizing the marginal likelihood $Z(\theta)$, 
defined as the normalizing constant of \eqref{eq:unnol_exact}. 
We will compute the maximum likelihood estimator (MLE) $\theta_{\rm MLE}\in\arg\max Z(\theta)$ 
by employing a stochastic gradient algorithm \cite{gower, ub_bip,kushner}, given by the iterative scheme
\begin{align}\label{eq:sgd}
	\theta^{(i)} = \theta^{(i-1)} + \alpha_{i} \reallywidehat{\nabla_{\theta}\log Z(\theta^{(i-1)})},\quad i\geq 1,
\end{align}
where $(\alpha_i)_{i\in\mathbb{N}}$ is a sequence of learning rates and 
$\reallywidehat{\nabla_{\theta}\log Z(\theta)}$ denotes an unbiased estimator of 
\begin{align}\label{eqn:score_function}
	\nabla_{\theta}\log Z(\theta) = \int_\mathsf{X} ( \nabla_\theta \log \gamma_\theta(x) )\pi_\theta(x) dx.
\end{align}
Following convergence results in \cite{gower,kushner}, we select $\alpha_i=\alpha_1/i$ and choose $\alpha_1$ appropriately. 
We will rely on our methodology to obtain unbiased estimators of the score function \eqref{eqn:score_function} 
by choosing the function $\varphi_\theta(x) = \nabla_\theta \log \gamma_\theta(x)$.
Finally, to deal with parameters that have positivity constraints, 
we apply a logarithmic transformation before employing \eqref{eq:sgd}. 

\subsection{Elliptic Bayesian Inverse Problem}
\label{sec:bip}

\subsubsection{An Analytically Tractable Case}\label{sec:analytical_case}
We first consider an example where an analytical solution is available. 
The PDE on $\mathsf{C}=[0,2\pi]$ 
is defined by \eqref{eq:pde1} with constant diffusion coefficient $\Phi=1$,  
and forcing $f(t; X) = X_1 \sin(2t) + X_2 \sin(t)$.
The analytical solution is given by $h(t; X) = \frac{1}{4} X_1 \sin(2t) + X_2 \sin(t)$.
Furthermore, we assume a prior of $X \sim \calN_2(0_2, 16 I_2)$ on the state space $\mathsf X = \bbR^2$.
Although this setting extends beyond the theoretical framework we have considered, 
we expect our results to generalize. 
The observation functions \eqref{eqn:observation_function} are given by 
the Dirac delta functions $g_p = \delta_{t_p}$, where $t_p = 2\pi (2 p-1)/2P$ for $p\in\{1,\dots, P\}$ with $P=50$. 
We simulate observations $y\in\mathbb{R}^P$ from \eqref{eq:data} 
using $x=(2,-2)$ and $\theta=100$. 

Given a value of $\theta$ and data $y$, the posterior of $X$ in \eqref{eq:unnol_exact},
denoted as $\pi_\theta$, has the form $\calN_2(\mu_\theta, \Sigma_\theta)$, where 
$$
\mu_\theta = \theta \Sigma_\theta G^T y, \quad \Sigma_\theta^{-1} = \theta G^T G + 16^{-1} I_2. 
$$
In the preceding line, $G \in \bbR^{P\times 2}$ 
is the forward model matrix (such that $G(x) = G x$)
with entries $G_{p,1} = \frac{1}{4} \sin(2t_p)$ and $G_{p,2} = \sin(t_p)$ 
for $p\in\{1,\dots, P\}$. 
The above quantities of interest are also analytically tractable and used as ground truth.  
Firstly, for $\varphi(x)=x$, the expected value is $\pi_\theta(\varphi) = \mu_\theta$. 
Secondly, the score function \eqref{eqn:score_function} can be computed using the 
fact that the marginal likelihood satisfies $Z(\theta)=\phi_2(y;0_P,  16 GG^T + \theta^{-1} I_P)$. 

\begin{figure}[!htbp]
	\centering\includegraphics[width=0.49\textwidth]{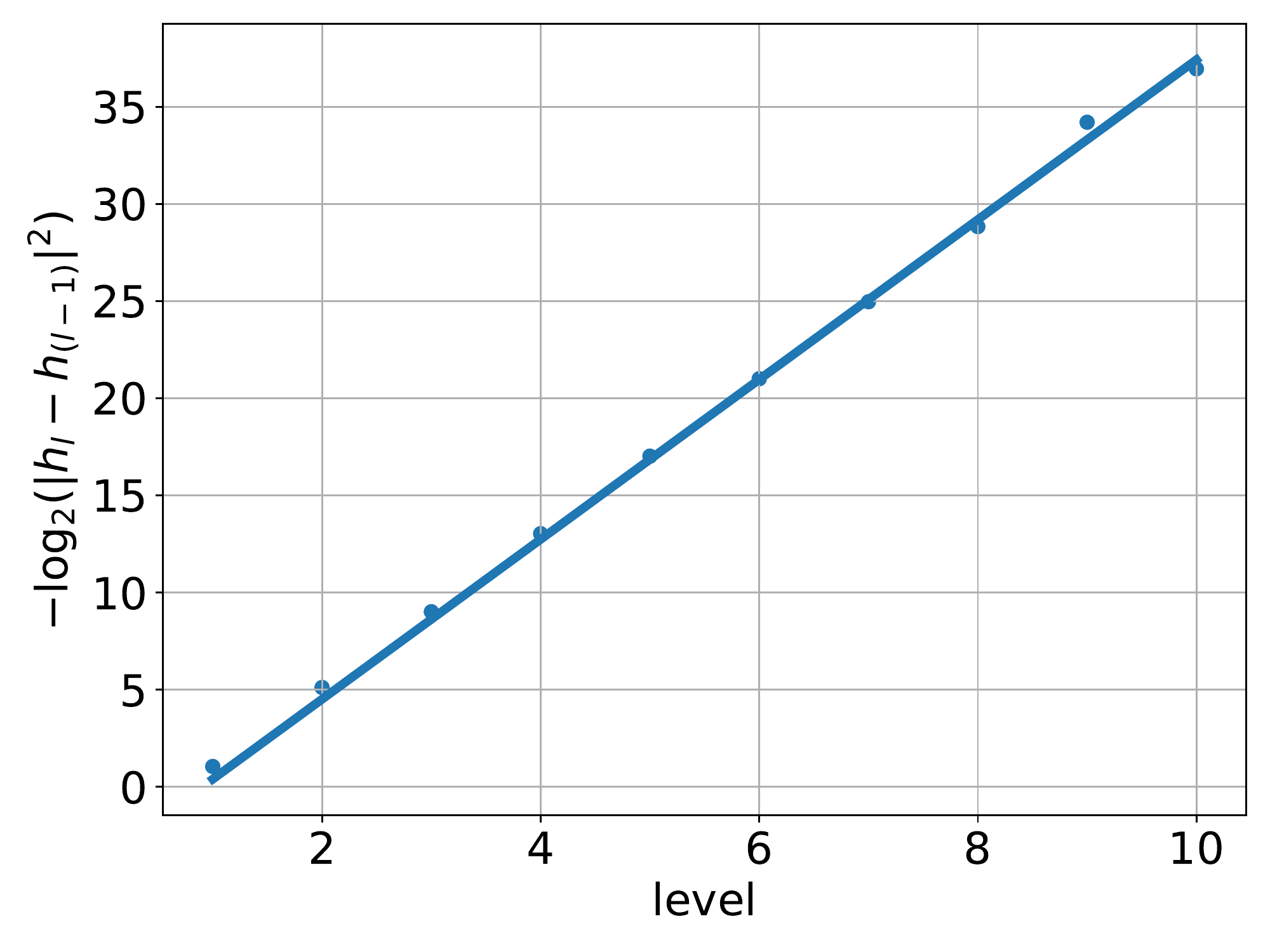}
	\includegraphics[width=0.49\textwidth]{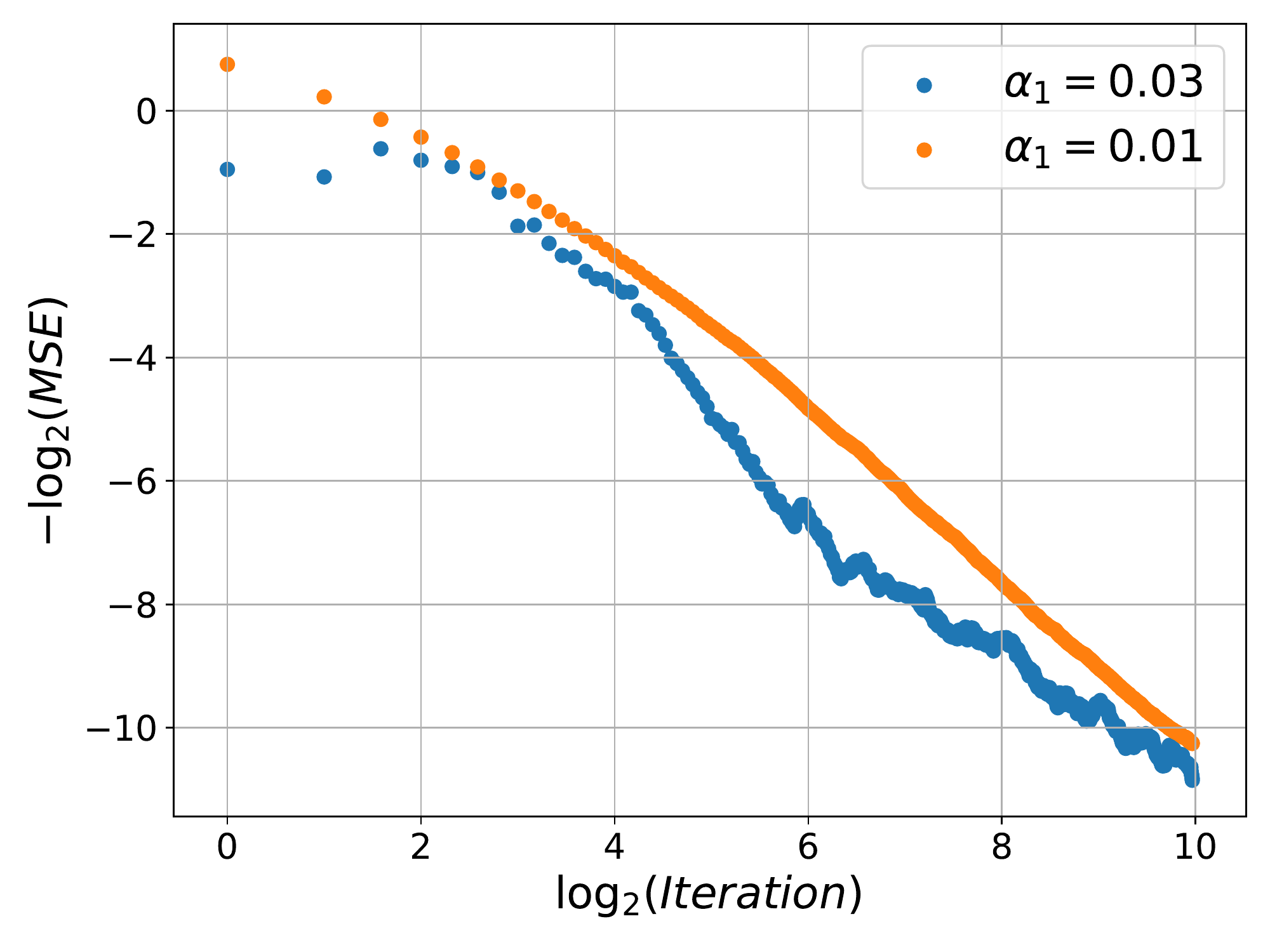}
	\caption{Toy model of Section \ref{sec:analytical_case}. 
	Left: error of forward model approximation $\|h_l - h_{l-1}\|_2^2$ against discretization level $l$  
	satisfies $\|h_l - h_{l-1}\|_2^2 \leq C \Delta_l^{2\beta}$ with a rate of $\beta=2$. 
	Right: convergence of stochastic gradient iterates $(\theta^{(i)})_{i\in\mathbb{Z}^+}$, 
	defined in \eqref{eq:sgd}, to the maximum likelihood estimator $\theta_{\rm MLE}$, 
	measured in terms of the mean squared error $\bbE[(\theta^{(i)}-\theta_{\rm MLE})^2]$ 
	that was estimated using $100$ independent realizations. 
	The learning rates considered here are $\alpha_i=\alpha_1/i$ with $\alpha_1\in\{0.01,0.03\}$. 
	}
	\label{fig:toy_conv}
\end{figure}

\begin{figure}[!htbp]
	\centering
	\includegraphics[width=0.49\textwidth]{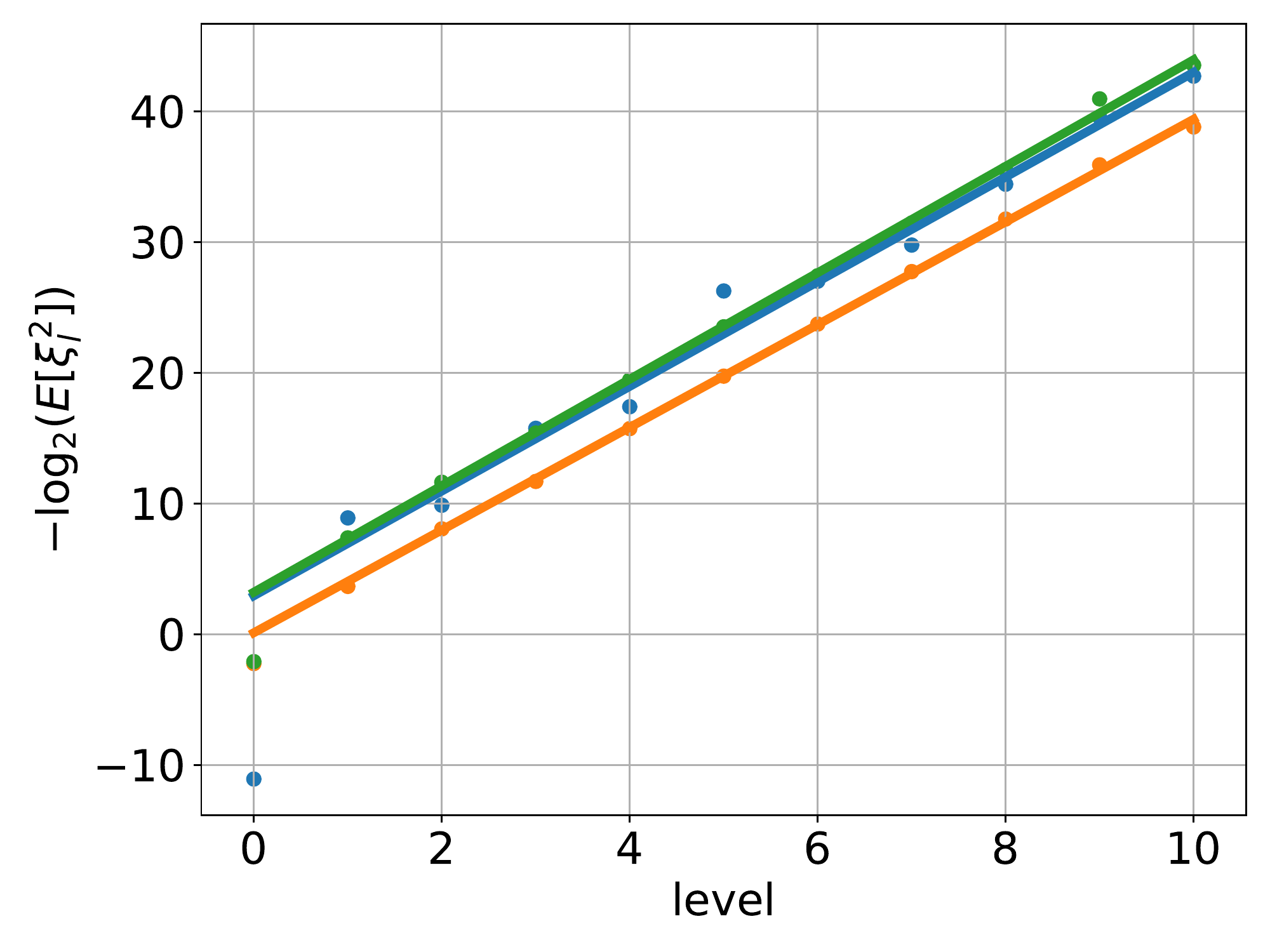}
	\includegraphics[width=0.49\textwidth]{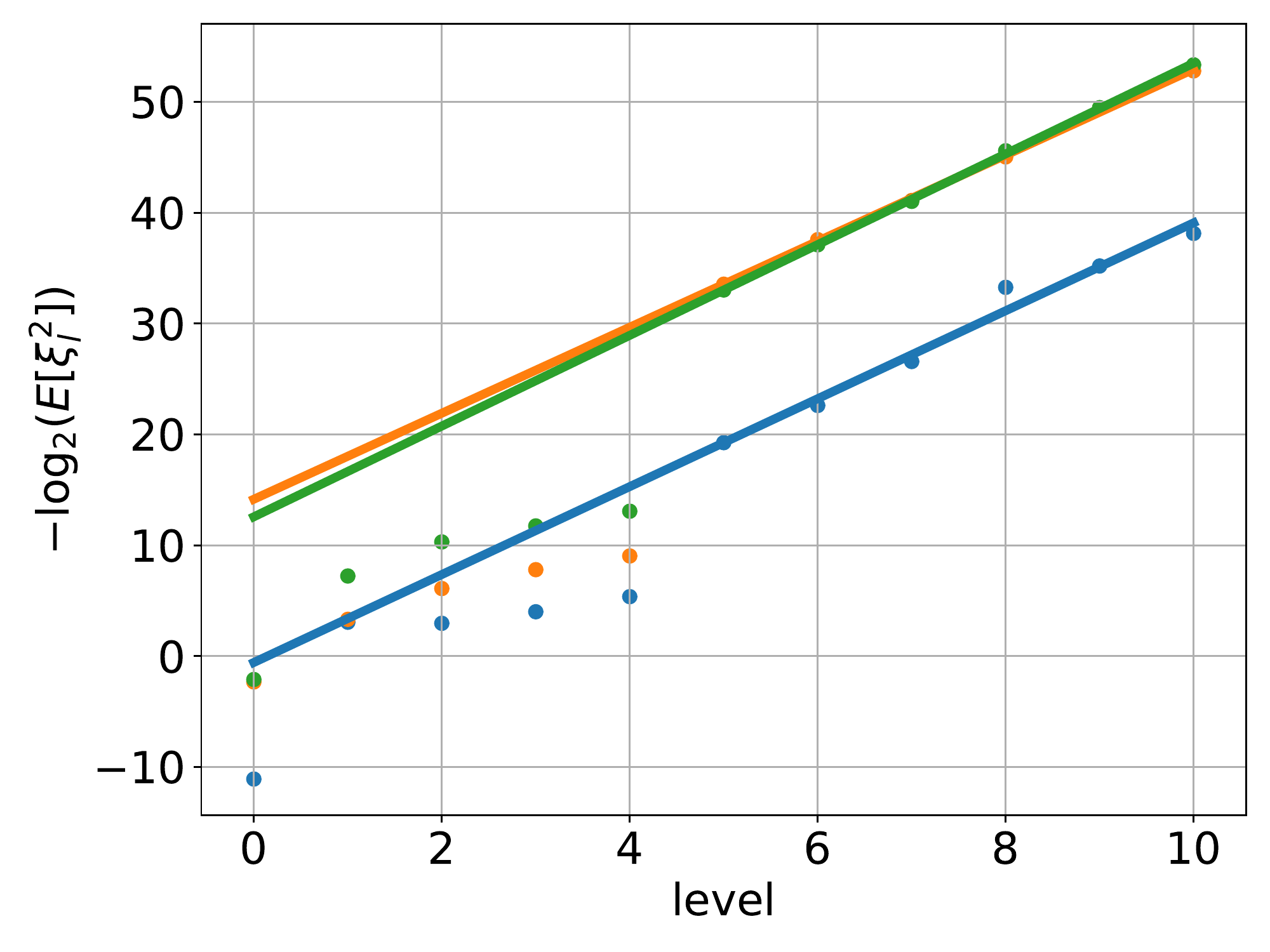}
	\caption{Toy model of Section \ref{sec:analytical_case} at data generating parameter $\theta=1$. 
	Second moment of the time-averaged estimator $\xi_l = \reallywidehat{[\pi_l-\pi_{l-1}](\varphi)}_{T,k,m}$ 
	in \eqref{eq:time_ave_inc} with $k=100$ and $m=1000$ 
	against discretization level $l$ 
	using either the mixture of coupled RWMH and HMC kernels in \eqref{eqn:coupled4_hmc} (\emph{left}), 
	or the coupled pCN kernel based on the reflection maximal coupling of Section \ref{sec:4refmax} 
	(\emph{right}). 
	Functions considered here are $\varphi(x) = \nabla_\theta \log \gamma_\theta(x)$ (\emph{blue}), 
	$\varphi(x)=x_1$ (\emph{orange}) and $\varphi(x)=x_2$ (\emph{green}). 
	In both cases, we have $\bbE [\xi_l^2] \leq C \Delta_l^{2\beta}$ with a rate of $\beta=2$.
	The second moment was estimated using $100$ independent realizations. 
	}
	\label{fig:toy_mse}
\end{figure}

\begin{figure}[!htbp]
	\centering	
	\includegraphics[width=0.49\textwidth]{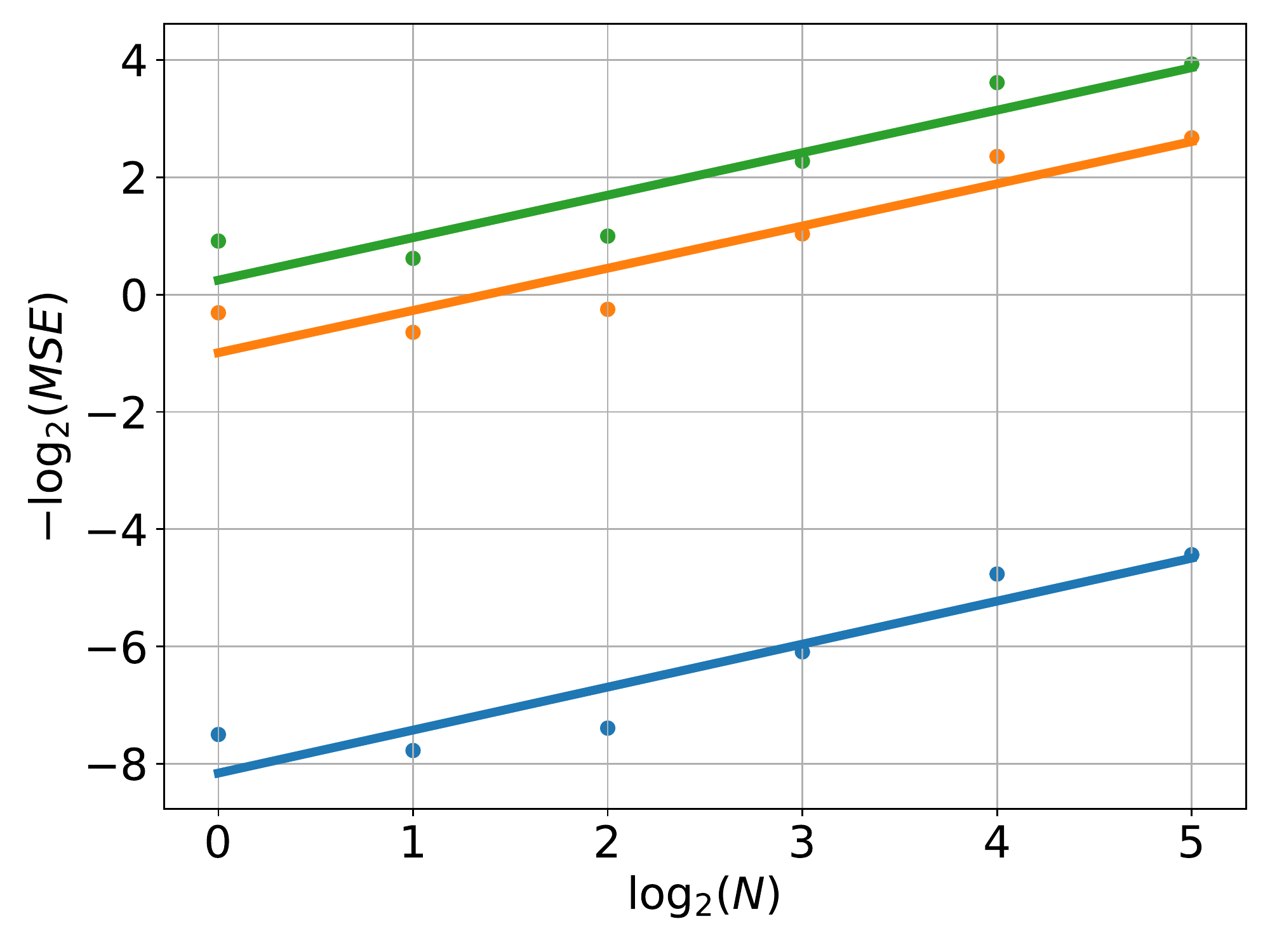}
	\includegraphics[width=0.49\textwidth]{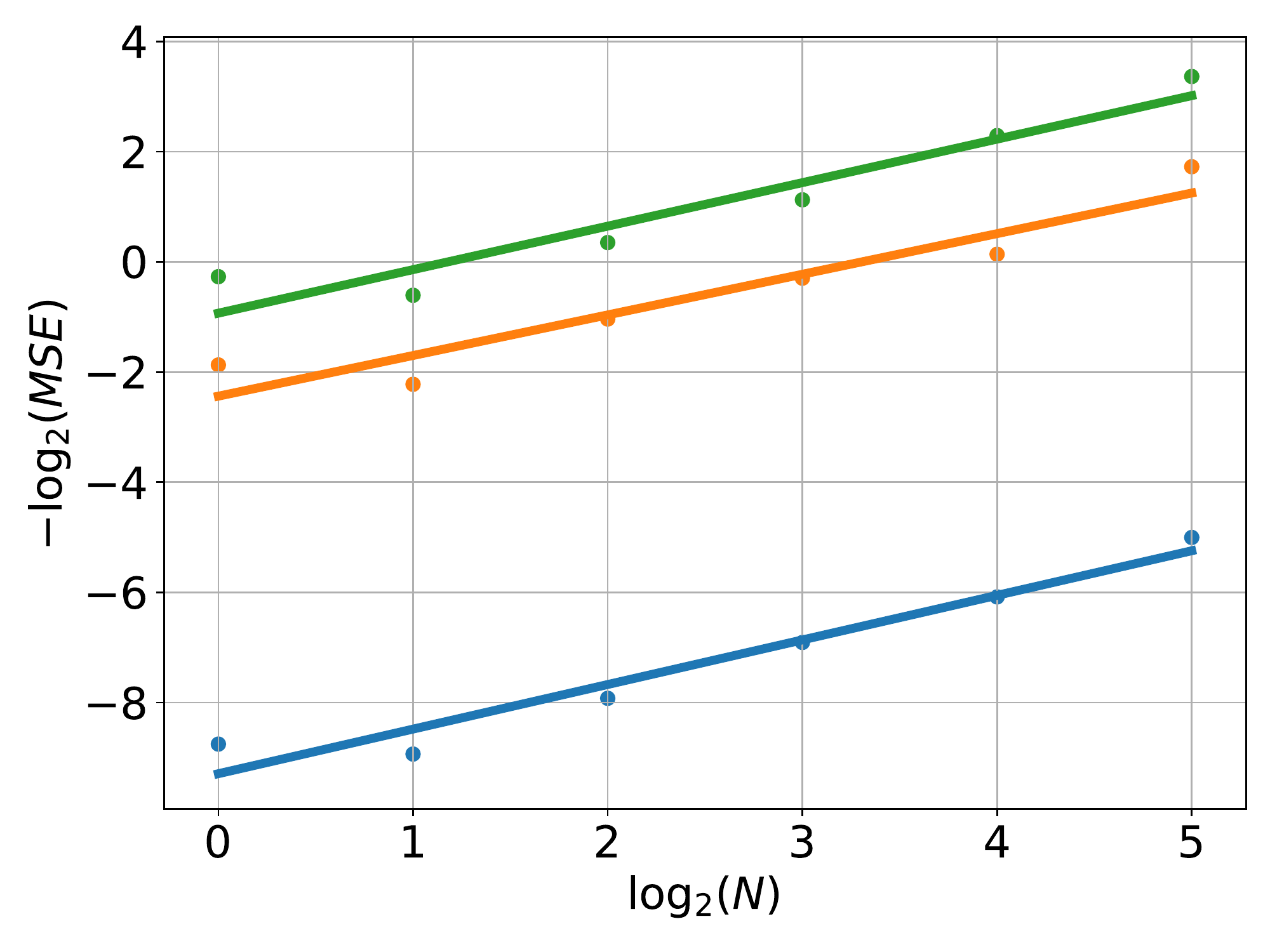}
	\caption{Toy model of Section \ref{sec:analytical_case} at data generating parameter $\theta=1$. 
	Mean squared error of averaged estimator $\bbE[(\reallywidehat{\pi(\varphi)}(N) - \pi(\varphi))^2]$
	against number of single term replicates $N$ using 
	either the mixture of coupled RWMH and HMC kernels in \eqref{eqn:coupled4_hmc} (\emph{left}), 
	or the coupled pCN kernel based on the reflection maximal coupling of Section \ref{sec:4refmax} 
	(\emph{right}). 
	Functions considered here are $\varphi(x) = \nabla_\theta \log \gamma_\theta(x)$ (\emph{blue}), 
	$\varphi(x)=x_1$ (\emph{orange}) and $\varphi(x)=x_2$ (\emph{green}). 	
	In both cases, we have the standard Monte Carlo rate 
	$\bbE[(\reallywidehat{\pi(\varphi)}(N) - \pi(\varphi))^2]\leq CN^{-1}$. 
	The mean squared error was estimated using the exact $\pi(\varphi)$ as ground truth and 
	$160$ independent realizations.}
	\label{fig:extraconv}
\end{figure}

To suit the domain under consideration, we take the mesh width of the FEM scheme 
in Section \ref{ssec:example} as $\Delta_l = 2 \pi \times 2^{-(l + l_0)}$ with $l_0=5$. 
Firstly, in the left panel of Figure \ref{fig:toy_conv}, 
we numerically verify that our approximation of the forward model indeed 
converges at the rate of $\beta = 2$. 
Next, we consider the time-averaged estimator 
$\xi_l = \reallywidehat{[\pi_l-\pi_{l-1}](\varphi)}_{T,k,m}$ in \eqref{eq:time_ave_inc} 
with $k=100$ and $m=1000$, 
and examine the rate at which its second moment converges to zero as $l$ increases 
in Figure \ref{fig:toy_mse}. 
These estimators are computed using a Markov chain $(Z_{n,l,l-1})_{n\in\mathbb{Z}^+}$ 
that is simulated using either the mixture of coupled RWMH and HMC kernels 
(Algorithms \ref{alg:coupled4_reflectionmaximal} and \ref{alg:coupled4_mixturehmc}) 
in \eqref{eqn:coupled4_hmc} (left panel), 
or a coupled pCN kernel based on the reflection maximal coupling of Section \ref{sec:4refmax} (right panel). 
The algorithmic settings of \eqref{eqn:coupled4_hmc} include a mixing probability of $\kappa=0.9$, 
stepsize of $\varepsilon_l = 0.1$ and $M_l = 10$ leapfrog steps for all $l\in\mathbb{Z}^+$ in the HMC kernels;  
and proposal covariance of $10^{-8}I_2$ for all $l\in\mathbb{Z}^+$ in the RWMH kernels. 
For pCN kernels, we took $\rho_l=0.95$ and $\sigma_l=4.0I_2$ for all $l\in\mathbb{Z}^+$.
To satisfy Assumption A5, we initialize $Z_{0,l,l-1}=((X_{0,l},W_{0,l}),(X_{0,l-1}, W_{0,l-1}))$ from 
a coupling $\check{\nu}_{l,l-1}$ that can be described by the following steps: 
1) sample $X_{0,l-1}'$ and $W_{0,l-1}$ from the prior $\calN_2(0, 16 I_2)$ independently;  
2) generate $X_{0,l}'|X_{0,l-1}' \sim \mathcal{N}_2(X_{0,l-1}', 2^{-(2l+1)}I_2)$ and 
$W_{0,l}|W_{0,l-1} \sim \mathcal{N}_2(W_{0,l-1}, 2^{-(2l+1)}I_2)$ independently; 
3) generate $(X_{0,l},X_{0,l-1})|(X_{0,l}',X_{0,l-1}')$ according to $K_{l,l-1}((X_{0,l}',X_{0,l-1}'),\cdot)$, the 
marginal kernel on $\mathsf{X}\times\mathsf{X}$ induced by $\check{K}_{l,l-1}$ for a pair across levels.

We can infer from both plots in Figure \ref{fig:toy_mse} that 
$\bbE [\xi_l^2] \leq C \Delta_l^{2\beta}$ with a rate of $\beta=2$, which matches that of the forward model approximation.  
Hence the condition in \eqref{eq:main_cond} ensuring unbiased and finite variance properties 
of the single term estimator $\reallywidehat{\pi(\varphi)}_{S,T,k,m}$ in \eqref{eq:basic_ub_est_time} 
can be verified. 
As the cost of the marginal kernel at level $l$ is of order $\Delta_l^{-\omega}$ with $\omega=1$, 
following the discussion in Section \ref{ssec:pl}, 
we select $\bbP_L(l) \propto \Delta_l^{\eta}$ with $\eta=5/2$ to ensure finite expected cost. 
Figure \ref{fig:extraconv} shows that by averaging $N\in\mathbb{N}$ independent replicates of the single term estimator, 
we obtain an unbiased estimator $\reallywidehat{\pi(\varphi)}(N)$ in \eqref{eqn:averaging_singleterm} 
that satisfies the standard Monte Carlo rate as $N\rightarrow\infty$. 

Lastly, in the right panel of Figure \ref{fig:toy_conv}, we illustrate convergence of the stochastic gradient algorithm 
\eqref{eq:sgd}, initialized at $\theta^{(0)}=1$, to the maximum likelihood estimator $\theta_{\rm MLE}$, for two sequences of learning rates.  
The MLE was computed numerically by maximizing the marginal likelihood $Z(\theta)$ 
with the exact score function \eqref{eqn:score_function}. 

\subsubsection{Example of Section \ref{sec:mot_ex}} 
\label{sec:bipfull}

The general case of Section \ref{sec:mot_ex} is now considered,
with unknown diffusion coefficient $\Phi$ and forcing $f(t)=100t$. 
The prior specification of $X=(X_1,X_2)$ is taken as $d=2$,
$\bar{\Phi}=0.15$, $\vartheta_1=1/10$, $\vartheta_2=1/40$, 
$v_1(t)=\sin(\pi t)$ and $v_2(t)=\cos(2\pi t)$. 
For this particular setting, the solution $h$ is continuous and 
hence pointwise observations are well-defined. 
The observation function $G(x)$ in \eqref{eqn:observation_function} 
is chosen as $g_p(h(X))=h(0.01+0.02(p-1);X)$ for $p\in\{1,\ldots,P\}$ with $P=50$. 
We employ the FEM scheme in Section \ref{ssec:example} with mesh width of 
$\Delta_l = 2^{-(l + l_0)}$ where $l_0=3$. 
Using a discretization level of $l=10$ to approximate $G(x)$ with $G_l(x)$, 
$x=(0.6,-0.4)$ and $\theta=1$,  
we simulate observations $y\in\mathbb{R}^P$ from \eqref{eq:data}. 

Figure \ref{fig:nl_conv} shows that the forward model approximation 
and the second moment of 
time-averaged estimator 
$\xi_l = \reallywidehat{[\pi_l-\pi_{l-1}](\varphi)}_{T,k,m}$ in \eqref{eq:time_ave_inc} 
with $k=100$ and $m=1000$ 
converges at the same rate of $\beta=2$ as $l$ increases. 
The estimators $(\xi_l)$ are computed using the reflection maximal coupling of 
pCN kernels in Section \ref{sec:4refmax}, with algorithmic parameters of 
$\rho_l=0.95$ and $\sigma_l=I_d$ for all $l\in\mathbb{Z}^+$. 
The Markov chain is initialized in a similar manner to Section \ref{sec:analytical_case}, 
with the exception of having $\mathcal{U}[-1,1]^2$ as the prior in this case. 
As before, we can select $\bbP_L(l) \propto \Delta_l^{\eta}$ with  $\eta=5/2$ to ensure that 
the single term estimator 
$\reallywidehat{\pi(\varphi)}_{S,T,k,m}$ in \eqref{eq:basic_ub_est_time} 
has finite variance and finite expected cost. 
The left panel of Figure \ref{fig:nl_mse} illustrates that averaging single term estimators 
yields a consistent estimator that converges at the standard Monte Carlo rate. 

Finally, we consider inference for $\theta$ in the Bayesian framework, 
under a prior $p(\theta)$ specified as a standard Gaussian prior on $\log \theta$. 
By adding the gradient of the prior density $\nabla_{\theta}\log p(\theta)$ to \eqref{eq:sgd}, 
we can run a stochastic gradient algorithm initialized at $\theta^{(0)}=0.1$ to compute 
the maximum a posteriori probability (MAP) estimator $\theta_{\rm MAP}\in \arg\max p(\theta)Z(\theta)$. 
The left panel of Figure \ref{fig:nl_mse} displays convergence of the stochastic iterates 
to $\theta_{\rm MAP}$. 
As competing algorithm, we consider the approach of \cite{ub_bip} that can also compute
unbiased estimators of the score function \eqref{eqn:score_function} 
using the algorithm in \cite{beskos2} instead of MCMC. 
The plot shows some gains over \cite{ub_bip} when the same learning rates are employed. 

\begin{figure}[!htbp]
	\centering\includegraphics[width=0.49\textwidth]{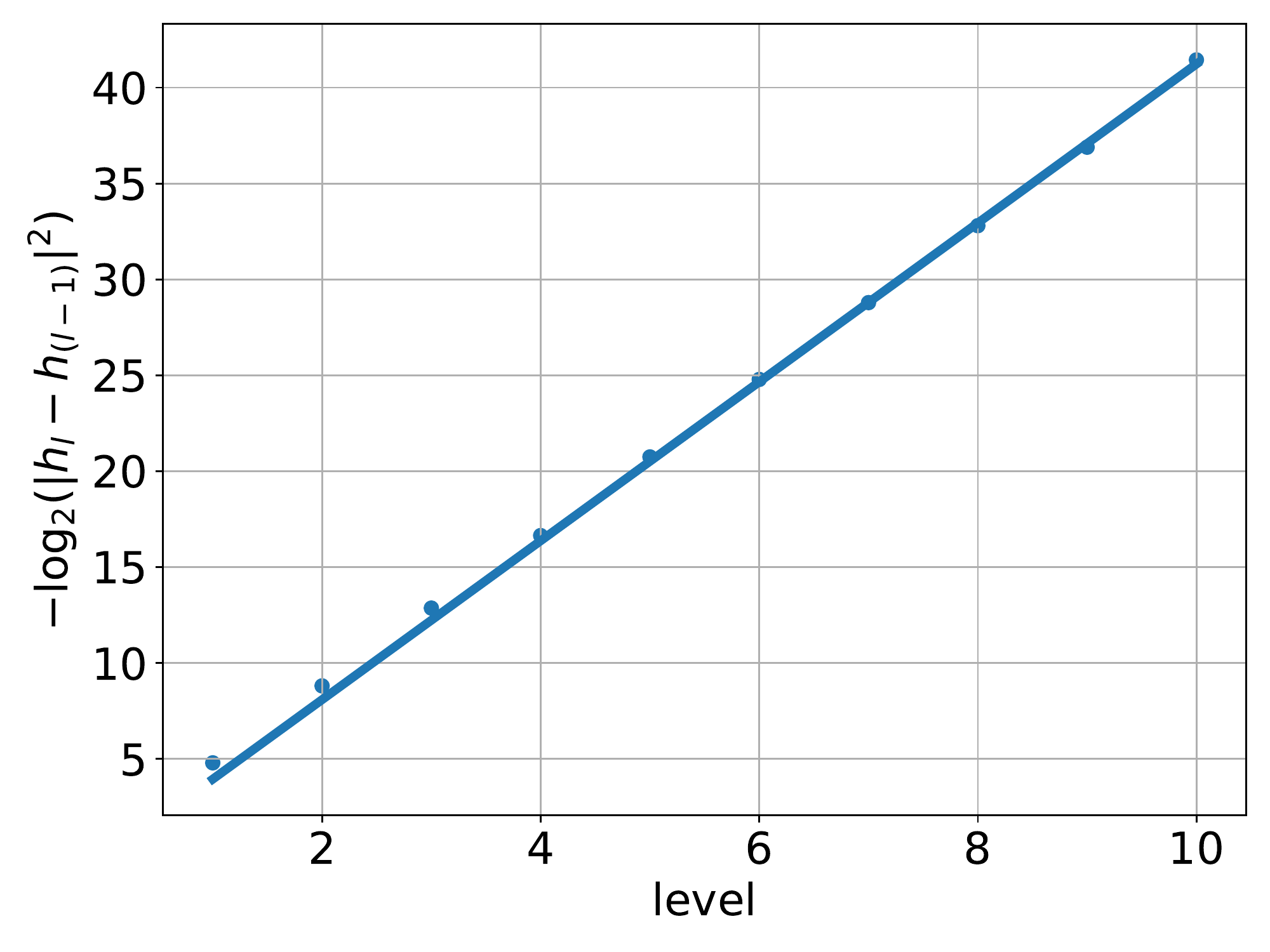}
	\includegraphics[width=0.49\textwidth]{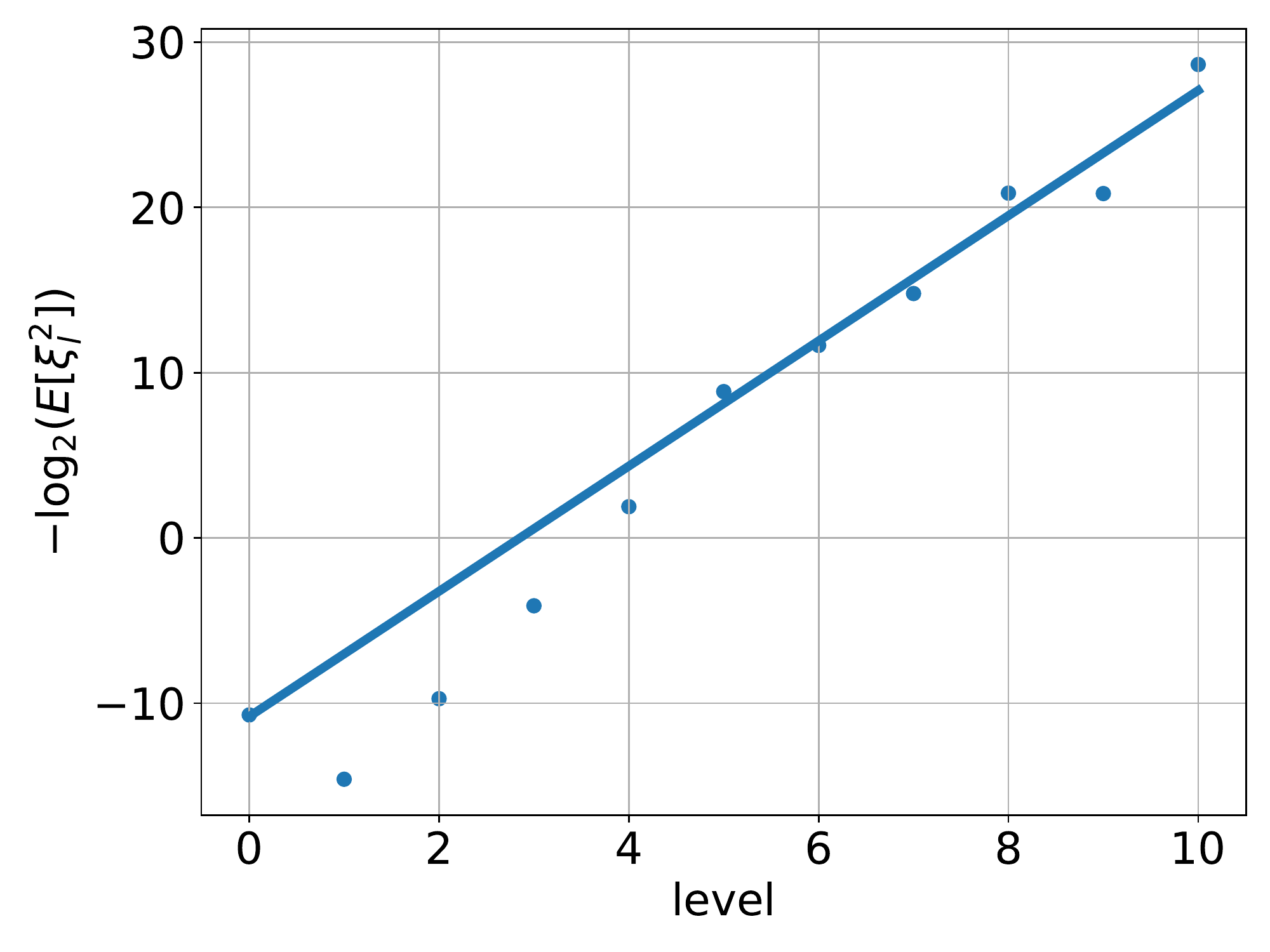}
	\caption{Elliptic Bayesian inverse problem of Sections \ref{sec:mot_ex} and \ref{sec:bipfull}. 
	Left: error of forward model approximation $\|h_l - h_{l-1}\|_2^2$ against discretization level $l$  
	satisfies $\|h_l - h_{l-1}\|_2^2 \leq C \Delta_l^{2\beta}$ with a rate of $\beta=2$. 
	Right: second moment of the time-averaged estimator  
	$\xi_l = \reallywidehat{[\pi_l-\pi_{l-1}](\varphi)}_{T,k,m}$ in \eqref{eq:time_ave_inc} 
	with $k=100$ and $m=1000$ 
	against discretization level $l$ satisfies $\bbE [\xi_l^2] \leq C \Delta_l^{2\beta}$ with a rate of $\beta=2$.
	The function considered here is $\varphi(x) = \nabla_\theta \log \gamma_\theta(x)$ with $\theta=0.1$.  
	The second moment was estimated using $100$ independent realizations. 
	}
	\label{fig:nl_conv}
\end{figure}

\begin{figure}[!htbp]
	\centering\includegraphics[width=0.49\textwidth]{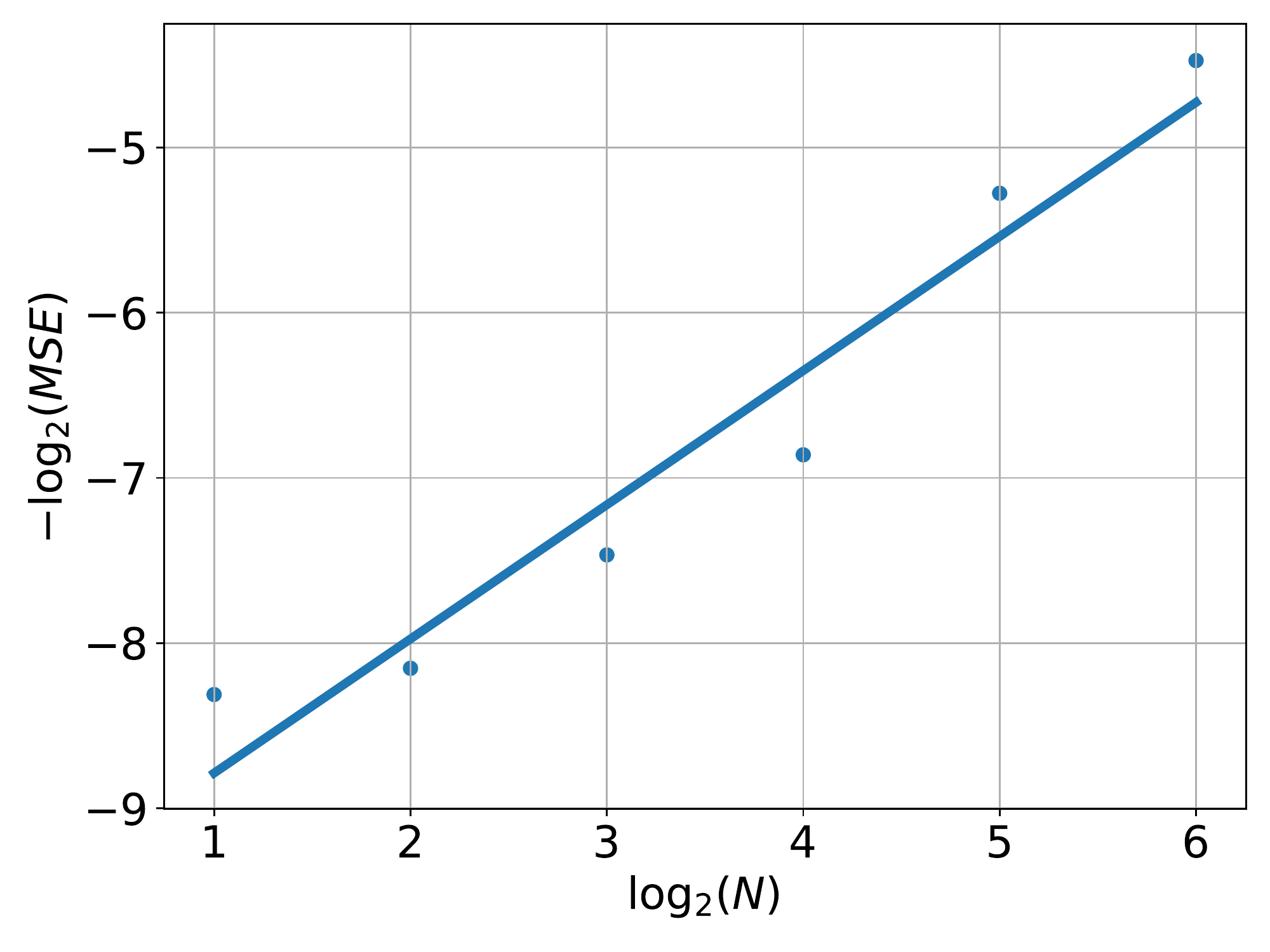}
	\includegraphics[width=0.49\textwidth]{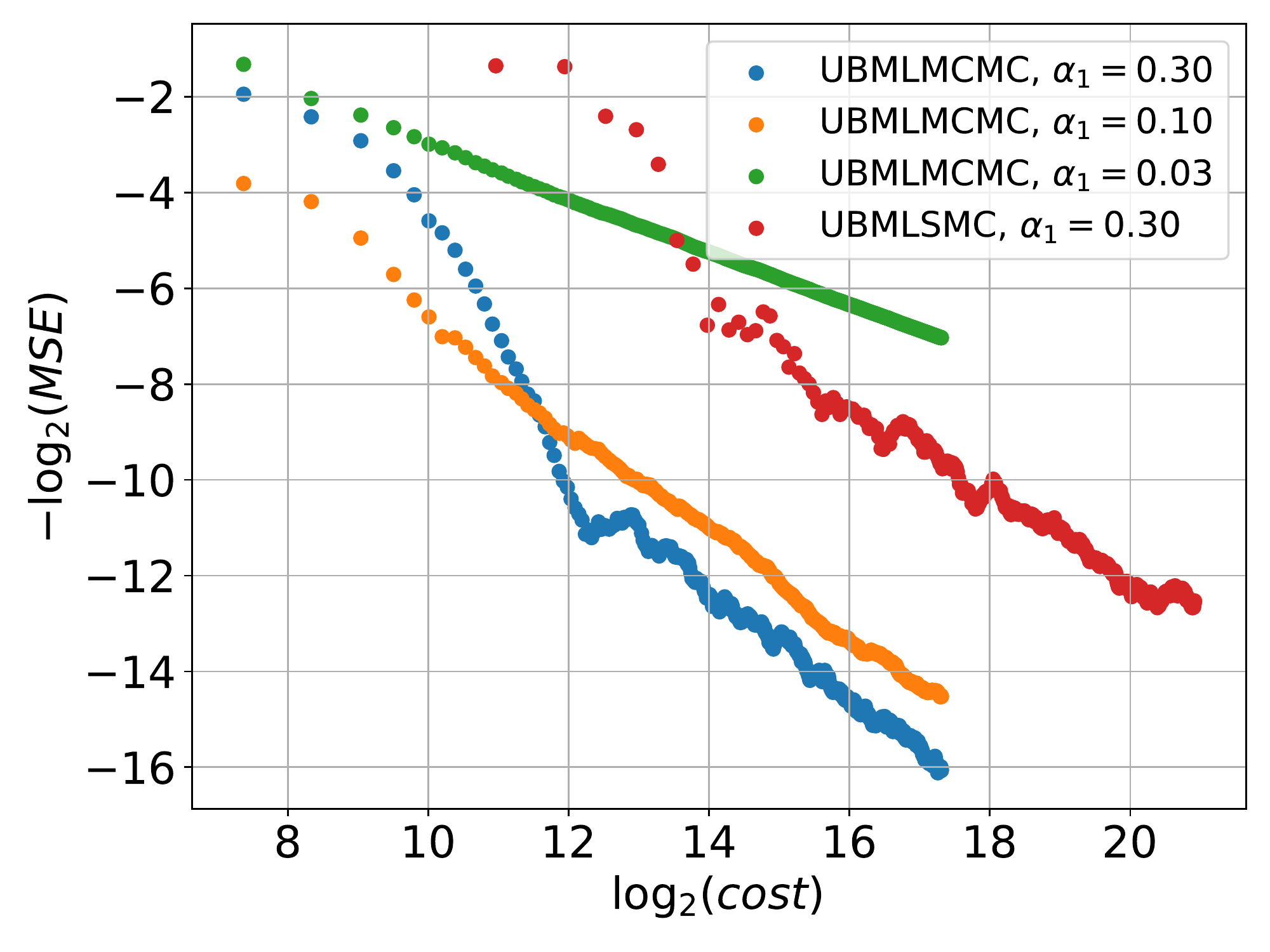}
	\caption{Elliptic Bayesian inverse problem of Sections \ref{sec:mot_ex} and \ref{sec:bipfull}. 
	Left: mean squared error of averaged estimator $\bbE[(\reallywidehat{\pi(\varphi)}(N) - \pi(\varphi))^2]$
	against number of single term replicates $N$ for the  
	function $\varphi(x) = \nabla_\theta \log \gamma_\theta(x)$ with $\theta = 0.3$.  
	The mean squared error was estimated using 
	an average of $10,000$ independent replicates with a minimal discretization level of $l_0=8$ 
	as ground truth for $\pi(\varphi)$, and $160$ independent realizations. 
	Right: convergence of stochastic gradient iterates $(\theta^{(i)})_{i\in\mathbb{Z}^+}$ 
	to the maximum a posteriori probability estimator $\theta_{\rm MAP}$.   
	The mean squared error $\bbE[(\theta^{(i)}-\theta_{\rm MAP})^2]$ was 
	estimated using a longer run of the stochastic gradient algorithm as ground truth for $\theta_{\rm MAP}$, 
	and $160$ independent realizations. 
	The learning rates considered here are $\alpha_i=\alpha_1/i$ with $\alpha_1\in\{0.03,0.10,0.30\}$. 	
	The red curve corresponds to a comparison with the unbiased MLSMC algorithm of \cite{ub_bip}. 	
	Here the cost is measured in terms of the number of forward model simulations, 
	weighted by the complexity associated to each discretization level, 
	and averaged over realizations.}
	\label{fig:nl_mse}
\end{figure}

\subsection{A Compartmental Model for COVID-19 in the UK}
\label{sec:epidemic}
Our final application concerns parameter inference for an epidemiological model to 
analyze COVID-19 infections in the UK. 
We consider a recently developed compartmental model \cite{sirx} for a closed population where 
$S(t)$ denotes the proportion that is susceptible to the disease, 
$I(t)$ denotes the proportion of infected individuals, 
$R(t)$ denotes the proportion of individuals who have recovered and are no longer part 
of the transmission process, 
and $\Xi(t)$ denotes the proportion of symptomatic and infected individuals who have been quarantined. 
The model dynamics are governed by the following system of ordinary differential equations
\begin{align}\label{eqn:ode_epi}
	&\frac{d}{dt}S(t) = -a S(t)I(t) - x_1 S(t), \quad 
	\frac{d}{dt}I(t) = a S(t)I(t) - (b+x_1+x_2) I(t),\\
	&\frac{d}{dt}R(t) = bI(t) + x_1 S(t), \quad 
	\frac{d}{dt}\Xi(t) = (x_1 + x_2) I(t).\notag
\end{align}
A unit of time in the model will represent the duration of a day. 
In Equation \eqref{eqn:ode_epi}, $a>0$ and $b>0$ denote the transmission rate and recovery rate, 
respectively. 
Following \cite{sirx}, we adopt the values $a = 0.775$ and $b = 0.125$ for COVID-19.
The parameter $x_1>0$ captures public containment policies or individual behavioural changes 
in response to the epidemic. 
Quarantine measures for symptomatic and infected individuals are described by 
the parameter $x_2>0$.  
We will also infer the time lapsed between the first infection and its reporting $x_3>0$. 
Letting time $t=0$ correspond to the first reported case on January 24, 2020, the initial condition 
is $(S(-x_3),I(-x_3),R(-x_3),\Xi(-x_3))=(1-1/N_{\rm pop}, 1/N_{\rm pop}, 0, 0)$, where 
$N_{\rm pop}= 66,650,000$ denotes the size of the UK population. 
Given parameters $x=(x_1,x_2,x_3)$, we will write the solution of \eqref{eqn:ode_epi} 
at time $t$ as $(S(t;x),I(t;x),R(t;x),\Xi(t;x))$. 
As our prior specification is given by $X_1\sim\cU_{[0.001,0.003]}$, $X_2\sim\cU_{[0.2,0.4]}$ and 
$X_3\sim\cU_{[5,25]}$ independently,  
we will work on the state-space $\mathsf X = [0.001,0.003] \times [0.2,0.4] \times [5,25]$.

To account for under-reporting, the observed proportion of daily confirmed cases $(Y_i)_{i=1}^P$ is modelled as 
\begin{align}\label{eq:obsmod}
	\log (Y_i) = \log (G_i(x)) - \Gamma_i,
\end{align}
for $i\in\{1,\ldots,P\}$, where 
\begin{align}\label{eqn:epi_observation_func}
	G_i(x) = a\int_{n-1+i}^{n+1} S(t;x)I(t;x)dt,
\end{align}
denotes the number of daily new infections under model \eqref{eqn:ode_epi}, 
and $(\Gamma_i)_{i=1}^P$ are independent gamma random variables with shape 
parameter $\theta_1>0$ and scale parameter $\theta_2>0$. 
We set $n=29$ to consider only $P=40$ observations $y = (y_i)_{i=1}^P$ 
from February 12, 2020, as earlier data seem to be unreliable. 
We note that the observation model in \eqref{eq:obsmod} differs from \cite{sirx} which 
adopted a least squares approach to infer parameters. 
Under the gamma likelihood, the unnormalized posterior density of $X$ given $y$ and $\theta=(\theta_1,\theta_2)$ is
\begin{align}\label{eqn:epi_posterior}
	\gamma_{\theta}(x)=\prod_{i=1}^P\frac{1}{\Gamma(\theta_1)\theta_2^{\theta_1}}
	(\log (G_i(x)/y_i))^{\theta_1-1} \exp(-\log (G_i(x)/y_i)/\theta_2)\mathbb{I}_{\mathsf{A}}(x),
\end{align}
where $\mathsf{A} = \{x \in \mathsf{X}: G_i(x) \geq y_i, i=1,\dots,P\}$. 

Any practical implementation of MCMC targeting \eqref{eqn:epi_posterior} 
would require an approximation of $G_i(x)$ in \eqref{eqn:epi_observation_func}. 
As it suffices to approximate $h(t;x)$ satisfying $(d/dt)h(t) = a S(t)I(t)$, 
we augment the system in \eqref{eqn:ode_epi} and 
employ a fourth-order Runge--Kutta numerical integrator \cite{rk} with 
stepsize $\Delta_l=0.1\times 2^{-1}$, for $l\in\mathbb{Z}^+$. 
The left panel of Figure \ref{fig:sir_conv} shows that 
the resulting approximation $h_l(t;x)$ converges to $h(t;x)$ at the expected rate of $\beta=4$.  
We can now apply our proposed methodology to approximate expectations. 
The right panel shows that the second moment of the time-averaged estimator 
$\xi_l = \reallywidehat{[\pi_l-\pi_{l-1}](\varphi)}_{T,k,m}$ in \eqref{eq:time_ave_inc}  
with $k=200$ and $m=2000$ also converges at the same rate. 
To compute the estimators $(\xi_l)$, we used the reflection maximal coupling of 
pCN kernels in Section \ref{sec:4refmax}, with algorithmic parameters of 
$\rho_l=0.95$ and $\sigma_l=I_d$ for all $l\in\mathbb{Z}^+$. 
The Markov chain is initialized in a similar manner to Section \ref{sec:analytical_case}, 
with the exception of truncating the prior to the subset $\mathsf{A}$ in this case. 
Since the cost of the marginal pCN kernel at level $l$ is of order $\Delta_l^{-\omega}$ with $\omega=1$, 
we choose $\bbP_L(l) \propto \Delta_l^{\eta}$ with $\eta=9/2$ to ensure that the single term estimator 
$\reallywidehat{\pi(\varphi)}_{S,T,k,m}$ in \eqref{eq:basic_ub_est_time} 
has finite variance and finite expected cost. 
The left panel of Figure \ref{fig:sir_mse} illustrates the impact of averaging independent replicates. 
As before, we compute the MLE of $\theta$ using the stochastic gradient algorithm \eqref{eq:sgd}, 
with initialization from $\theta^{(0)}=(1,1)$. 
Finally, we exploit the fitted model to infer the extent of under-reporting during 
the time period under consideration. 
In Figure \ref{fig:punchline}, we display the ratio of the total number of reported cases 
to the expected number of total infections under the posterior distribution \eqref{eqn:ode_epi} 
with $\theta=\theta_{\rm MLE}$.

\begin{figure}[!htbp]
	\centering\includegraphics[width=0.49\textwidth]{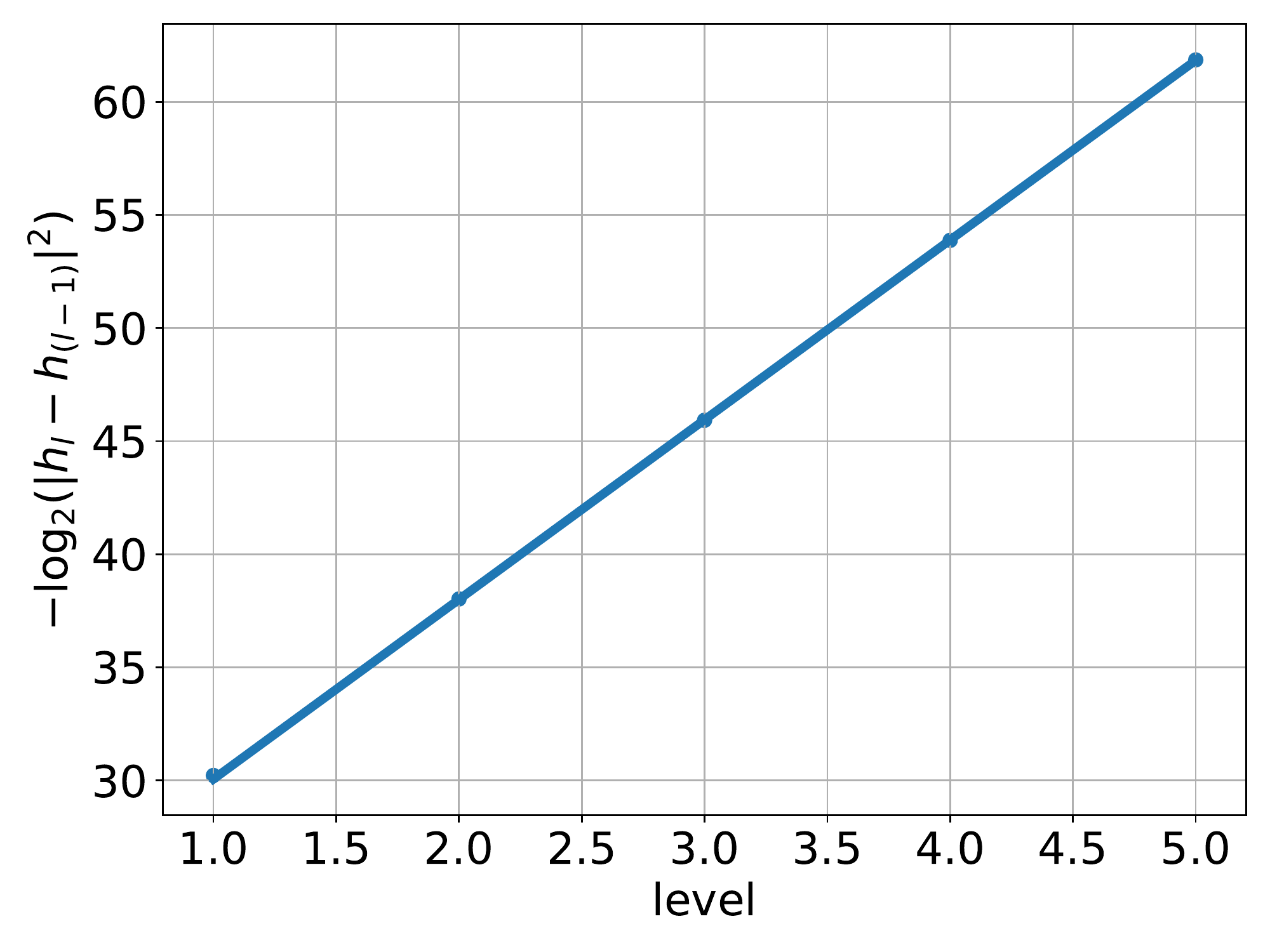}
	\includegraphics[width=0.49\textwidth]{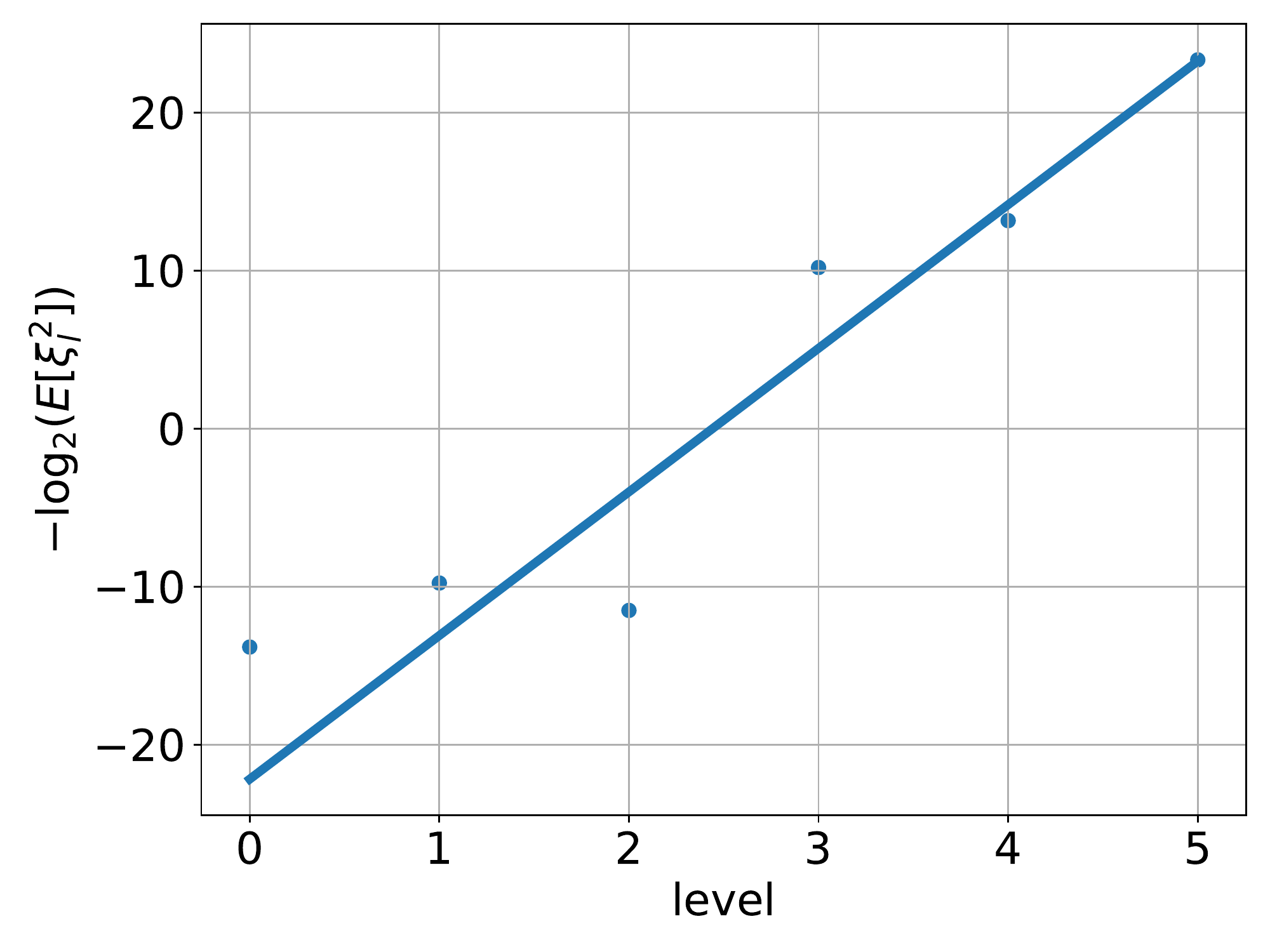}
	\caption{Compartmental model of Section \ref{sec:epidemic}. 
	Left: error of forward model approximation $|h_l(t;x) - h_{l-1}(t;x)|^2$ against discretization level $l$  
	satisfies $|h_l(t;x) - h_{l-1}(t;x)|^2 \leq C \Delta_l^{2\beta}$ with a rate of $\beta=4$ 
	at observation times $t$ and parameter $x=(0.002,0.3,15)$. 
	Right: second moment of the time-averaged estimator  
	$\xi_l = \reallywidehat{[\pi_l-\pi_{l-1}](\varphi)}_{T,k,m}$ in \eqref{eq:time_ave_inc} 
	with $k=200$ and $m=2000$ against discretization level $l$ satisfies $\bbE [\xi_l^2] \leq C \Delta_l^{2\beta}$ with a rate of $\beta=4$.
	The function considered here is $\varphi(x) = \nabla_\theta \log \gamma_\theta(x)$ with $\theta=(1,1)$.  
	The second moment was estimated using $100$ independent realizations.}
	\label{fig:sir_conv}
\end{figure}

\begin{figure}[!htbp]
	\centering\includegraphics[width=0.49\textwidth]{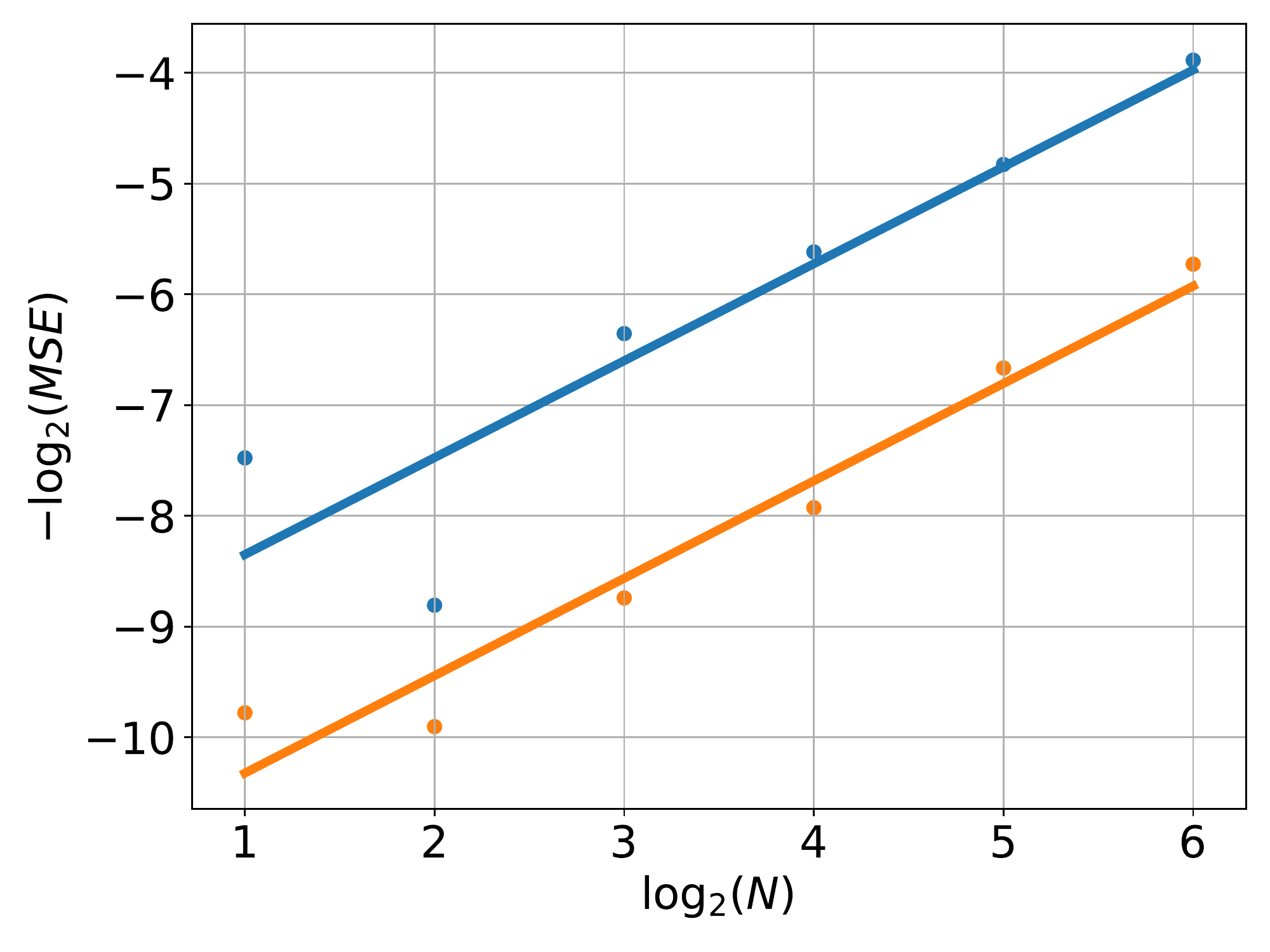}
	\includegraphics[width=0.49\textwidth]{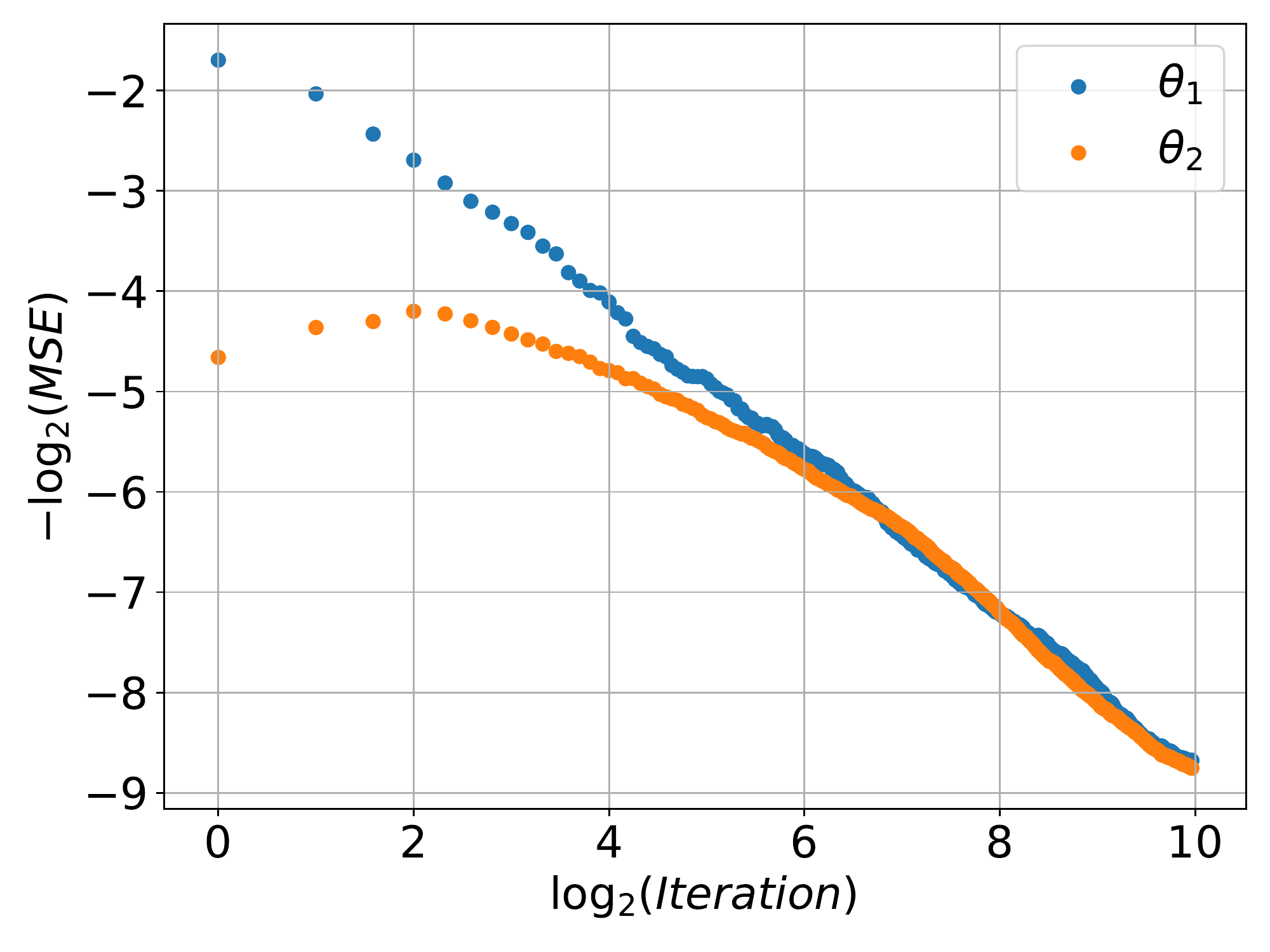}
	\caption{Compartmental model of Section \ref{sec:epidemic}. 
	Left: mean squared error of averaged estimator $\bbE[(\reallywidehat{\pi(\varphi)}(N) - \pi(\varphi))^2]$
	against number of single term replicates $N$ for the  
	functions $\varphi(x) = \partial_{\theta_1} \log \gamma_\theta(x)$ (\emph{blue}) and 
	$\varphi(x) = \partial_{\theta_2} \log \gamma_\theta(x)$ (\emph{orange}) with $\theta = (1,1)$. 
	The mean squared error was estimated using 
	an average of $10,000$ independent replicates with a discretization of $\Delta_l = 0.025 \times 2^{-l}$ 
	as ground truth for $\pi(\varphi)$, and $160$ independent realizations. 
	Right: convergence of stochastic gradient iterates $(\theta^{(i)})_{i\in\mathbb{Z}^+}$ 
	to the maximum likelihood estimator $\theta_{\rm MLE}$. 
	The mean squared error $\bbE[(\theta^{(i)}-\theta_{\rm MLE})^2]$ 
	was estimated using a longer run of the stochastic gradient algorithm as ground truth for $\theta_{\rm MLE}$, 
	and $100$ independent realizations. 
	The sequence of learning rates considered here is $\alpha_i=\alpha_1/i$ with $\alpha_1=0.01$. 
	}
	\label{fig:sir_mse}
\end{figure}

\begin{figure}[!htbp]
	\centering\includegraphics[width=0.49\textwidth]{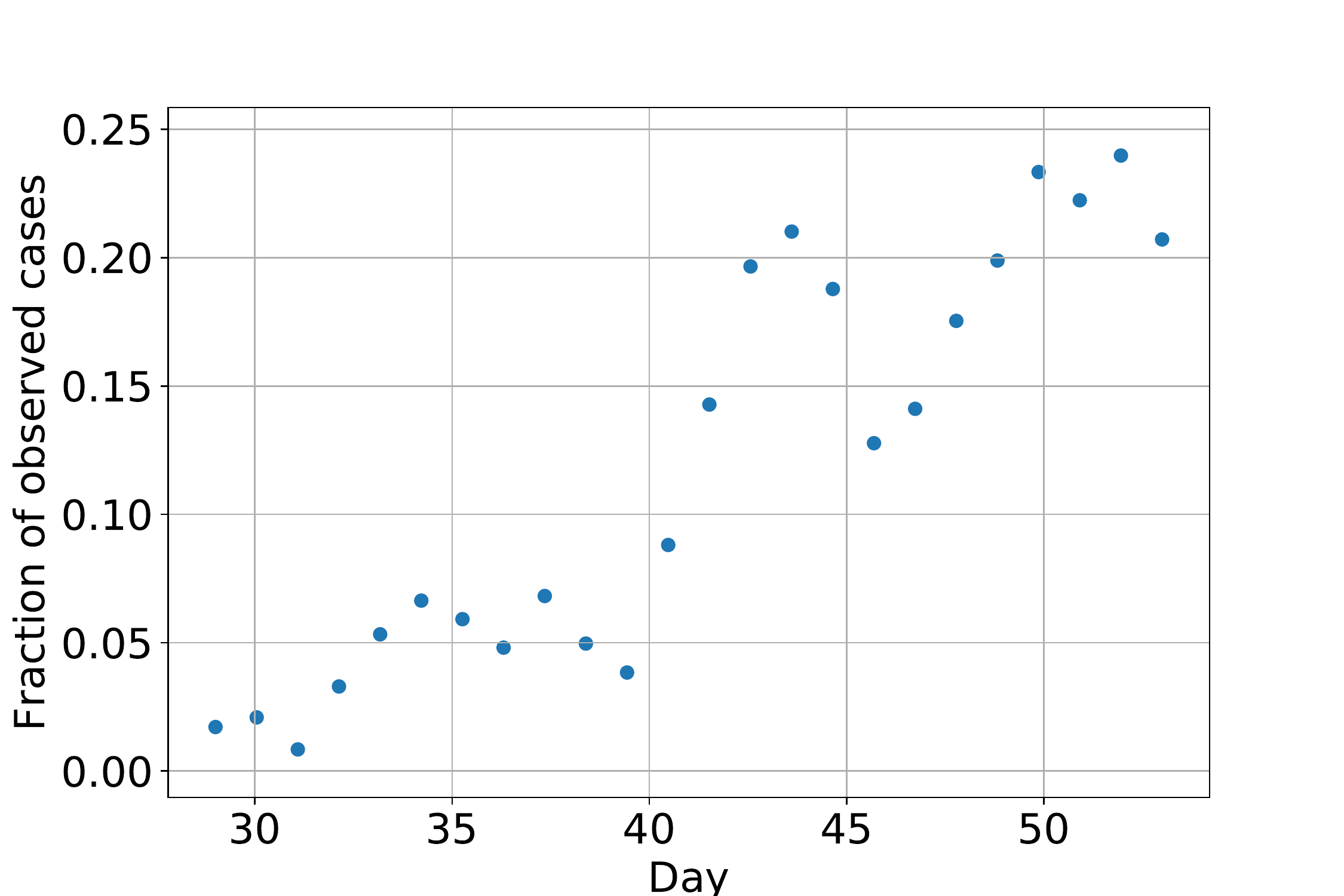}
	\caption{Compartmental model of Section \ref{sec:epidemic}. 
	Ratio of the total number of reported cases 
	$C_i = C_0 + \sum_{j=1}^i y_j$ ($C_0$ denotes the number of cases before February 12, 2020) 
	to the expected number of total infections $\bbE[  a \int_{-X_3}^{n+i}S(t; X)I(t; X) dt | y, \theta_{\rm MLE}]$ 
	on day $n+i$.}
	\label{fig:punchline}
\end{figure}

\subsubsection*{Acknowledgements}
JH was funded by CY Initiative of Excellence (grant ``Investissements
d'Avenir'' ANR-16-IDEX-0008). 
AJ was supported by KAUST baseline funding.
AT and KJHL were supported by The Alan Turing Institute under the 
EPSRC grant EP/N510129/1.

\appendix

\section{Proofs}\label{sec:proofs}

Throughout the appendix, $C$ is a finite constant that does not depend upon $l$ nor the time parameter of the Markov chain. The value
may change upon each appearance. 
For ease of notation only, we will
set $k=0$ from herein; the proofs, with some minor modifications, will hold for any $k\geq 0$. The appendix first gives the proof of Theorem \ref{theo:main_res} and then a collection of technical results which are used to achieve the proof.

\begin{proof}[Proof of Theorem \ref{theo:main_res}]
To prove the result we must verify for any $(l,\varphi)\in\mathbb{N}\times\mathcal{B}_b(\mathsf{X})\cap
\textrm{Lip}_{\tilde{\mathsf{d}}}(\mathsf{X})$
\begin{eqnarray}
\mathbb{E}[\reallywidehat{\pi_0(\varphi)}_{0}] & = & \pi_0(\varphi), \label{eq:theo1}\\
\mathbb{E}[\reallywidehat{[\pi_l-\pi_{l-1}](\varphi)}_0] & = & [\pi_l-\pi_{l-1}](\varphi),\label{eq:theo2}
\end{eqnarray}
and that
\begin{eqnarray}
\frac{1}{\mathbb{P}_L(0)}\mathbb{E}[\reallywidehat{\pi_0(\varphi)}_{0}^2] + 
\sum_{l\in\mathbb{N}} \frac{1}{\mathbb{P}_L(l)}\mathbb{E}[\{\reallywidehat{[\pi_l-\pi_{l-1}](\varphi)}_0\}^2] & < & +\infty.\label{eq:theo3}
\end{eqnarray}
Equations \eqref{eq:theo1} and \eqref{eq:theo2} are verified in Proposition \ref{prop:ub_inc}. For \eqref{eq:theo3}, $\mathbb{E}[\reallywidehat{\pi_0(\varphi)}_{0}^2]$ can be controlled by using Lemma \ref{lem:pi_0} and then by Lemma \ref{lem:sec_mom}, we have that
$$
\sum_{l\in\mathbb{N}} \frac{1}{\mathbb{P}_L(l)}\mathbb{E}[\{\reallywidehat{[\pi_l-\pi_{l-1}](\varphi)}_0\}^2] \leq C(\|\varphi\|_{\infty}\vee \|\varphi\|_{\textrm{Lip}})\sum_{l\in\mathbb{N}} \frac{1}{\mathbb{P}_L(l)}
\Delta_l^{2\big(\beta_1\wedge \beta_2 \big)}.
$$
The proof can now easily be completed.
\end{proof}

\begin{prop}\label{prop:ub_inc}
Assume (A\ref{ass:1}-\ref{ass:5}). Then we have for any $(l,\varphi)\in\mathbb{N}\times\mathcal{B}_b(\mathsf{X})$
\begin{eqnarray*}
\mathbb{E}[\reallywidehat{\pi_0(\varphi)}_{0}] & = & \pi_0(\varphi), \\
\mathbb{E}[\reallywidehat{[\pi_l-\pi_{l-1}](\varphi)}_0] & = & [\pi_l-\pi_{l-1}](\varphi).
\end{eqnarray*}
\end{prop}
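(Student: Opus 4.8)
The plan is to reduce everything to the single-level unbiasedness identity $\mathbb{E}[\widehat{\pi_s(\varphi)}_0]=\pi_s(\varphi)$, valid for each $s\in\mathbb{Z}^+$. The first claim is then the case $s=0$, and the second follows by linearity of expectation from $\reallywidehat{[\pi_l-\pi_{l-1}](\varphi)}_0 = \widehat{\pi_l(\varphi)}_0 - \widehat{\pi_{l-1}(\varphi)}_0$. The crucial structural observation is that $\widehat{\pi_s(\varphi)}_0$ depends only on the marginal pair $(X_{n,s},W_{n,s})_{n\in\mathbb{Z}^+}$, which by the construction in Section \ref{sec:ub_inc_gen} are time-homogeneous Markov chains with transition kernel $K_s$ and initial laws $\nu_s K_s$ and $\nu_s$, so that $X_{n,s}\sim\nu_s K_s^{n+1}$ and $W_{n,s}\sim\nu_s K_s^{n}$. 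Consequently the single-level argument of \cite{glynn2,jacob1} applies on each level separately, and the cross-level dependence of the joint chain plays no role in unbiasedness.

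First I would use the faithfulness property \eqref{eqn:faithfulness_bothlevels} to express the truncated estimator (with $k=0$) as an infinite series,
\[
\widehat{\pi_s(\varphi)}_0 = \varphi(X_{0,s}) + \sum_{n=1}^{\infty}\{\varphi(X_{n,s}) - \varphi(W_{n,s})\}\mathbb{I}_{\{\tau_s > n\}},
\]
which holds almost surely because $X_{n,s}=W_{n,s}$ for all $n\geq\tau_s$, making the omitted terms vanish. To move the expectation inside the sum I would establish absolute summability: since $\varphi\in\mathcal{B}_b(\mathsf{X})$ gives $|\varphi(X_{n,s}) - \varphi(W_{n,s})|\leq 2\|\varphi\|_{\infty}$, and since Assumption (A\ref{ass:5}) provides $\mathbb{P}(\tau_s > n)=\mathbb{E}[\mathbb{I}_{\{\tau_s > n\}}]\leq C\rho^n$, we obtain
\[
\sum_{n=1}^{\infty}\mathbb{E}\big[|\varphi(X_{n,s}) - \varphi(W_{n,s})|\,\mathbb{I}_{\{\tau_s > n\}}\big] \leq 2\|\varphi\|_{\infty}\sum_{n=1}^{\infty} C\rho^n < \infty,
\]
so Fubini's theorem permits interchanging expectation and summation.

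Next I would evaluate the series by telescoping. Using $\mathbb{E}[\varphi(X_{n,s})]=\nu_s K_s^{n+1}(\varphi)$ and $\mathbb{E}[\varphi(W_{n,s})]=\nu_s K_s^{n}(\varphi)$, together with the a.s.\ identity $\{\varphi(X_{n,s})-\varphi(W_{n,s})\}\mathbb{I}_{\{\tau_s>n\}}=\varphi(X_{n,s})-\varphi(W_{n,s})$, the partial sums collapse:
\[
\mathbb{E}[\varphi(X_{0,s})] + \sum_{n=1}^{m}\big\{\nu_s K_s^{n+1}(\varphi) - \nu_s K_s^{n}(\varphi)\big\} = \nu_s K_s^{m+1}(\varphi).
\]
Letting $m\to\infty$ and invoking the uniform ergodicity in Assumption (A\ref{ass:1}), which gives $|\nu_s K_s^{m+1}(\varphi)-\pi_s(\varphi)|\leq 2\|\varphi\|_{\infty}\,C\rho^{m+1}\to 0$, yields $\mathbb{E}[\widehat{\pi_s(\varphi)}_0]=\pi_s(\varphi)$. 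Taking $s=0$ establishes the first identity, and subtracting the identities for $s=l$ and $s=l-1$ establishes the second.

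The telescoping and the ergodic limit are routine; the main obstacle is the interchange of expectation and the infinite sum, i.e.\ securing absolute summability before applying Fubini. This is exactly where the geometric tail control of the meeting time in (A\ref{ass:5}) is indispensable, in combination with the boundedness of $\varphi$: without an integrable (here, exponentially light) tail for $\tau_s$, the series need not converge absolutely and the naive interchange would be unjustified.
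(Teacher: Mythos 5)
Your proof is correct, but it takes a genuinely different route from the paper's. You reduce the increment estimator via \eqref{eq:stan_est_inc} to two single-level estimators, rewrite each as the a.s.\ finite series $\varphi(X_{0,s}) + \sum_{n\geq 1}\{\varphi(X_{n,s})-\varphi(W_{n,s})\}\mathbb{I}_{\{\tau_s>n\}}$, justify the expectation--sum interchange by the geometric meeting-time tails of (A\ref{ass:5}) together with $\|\varphi\|_{\infty}<\infty$, and then telescope the marginal laws $\nu_s K_s^{n+1}$ and $\nu_s K_s^{n}$, finishing with the uniform ergodicity of (A\ref{ass:1}) --- i.e.\ the classical argument of \cite{glynn2}, applied level by level; as you correctly observe, only the marginal pairs matter for unbiasedness, so the cross-level coupling plays no role at this stage. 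The paper instead introduces the Poisson-equation solutions $\reallywidehat{\varphi}_s$ in \eqref{eq:pois_eq}, derives the martingale-plus-remainder decomposition \eqref{eq:mart_decomp}, and applies the optional sampling theorem at the stopping time $\check{\tau}_{l,l-1}$ --- in effect deriving Wald's first equality for Markov chains, as its own remark notes. For the unbiasedness statement in isolation your argument is more elementary and self-contained: no Poisson equation, no optional sampling, and the only delicate point (integrability permitting the interchange) is handled in a transparent Fubini step resting on exactly (A\ref{ass:1}) and (A\ref{ass:5}), which are indeed the only assumptions the proposition needs. What the paper's heavier machinery buys is reuse: the decomposition \eqref{eq:mart_decomp} is the backbone of the second-moment estimates (Lemmata \ref{lem:martingale} and \ref{lem:sec_mom}) required for the finite-variance conclusion of Theorem \ref{theo:main_res}, whereas your telescoping argument, while cleaner here, does not by itself provide any variance control.
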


\begin{proof}
The case of $\reallywidehat{\pi_0(\varphi)}_{0}$ is essentially that in \cite{glynn2}, so we focus on the second expectation.
We have the standard Martingale plus remainder decomposition:
$$
\reallywidehat{[\pi_l-\pi_{l-1}](\varphi)}_0 = \varphi(X_{0,l}) + \sum_{n=1}^{\check{\tau}_{l,l-1}-1}\{\varphi(X_{n,l})-\varphi(W_{n,l})\} - \Big(
\varphi(X_{0,l-1}) + \sum_{n=1}^{\check{\tau}_{l,l-1}-1}\{\varphi(X_{n,l-1})-\varphi(W_{n,l-1})\}\Big) 
$$
$$
= \varphi(X_{0,l}) + \sum_{n=1}^{\check{\tau}_{l,l-1}}\zeta_{l}(X_{n-1:n,l},W_{n-1:n,l}) + K_{l}(\reallywidehat{\varphi}_{l})(X_{0,l}) - K_{l}(\reallywidehat{\varphi}_{l})(W_{0,l}) -
$$
$$
\Big(\varphi(X_{0,l-1}) + \sum_{n=1}^{\check{\tau}_{l,l-1}}\zeta_{l-1}(X_{n-1:n,l-1},W_{n-1:n,l-1}) + K_{l-1}(\reallywidehat{\varphi}_{l-1})(X_{0,l-1}) - K_{l-1}(\reallywidehat{\varphi}_{l-1})(W_{0,l-1})\Big)
$$
where, for $s\in\{l,l-1\}$ and $(x_{n-1:n},w_{n-1:n})\in\mathsf{X}^2\times\mathsf{X}^2$
$$
\zeta_{s}(x_{n-1:n},w_{n-1:n}) := \reallywidehat{\varphi}_{s}(x_n) - K_{s}(\reallywidehat{\varphi}_{s})(x_{n-1}) -  
\Big(\reallywidehat{\varphi}_{s}(w_n) - K_{s}(\reallywidehat{\varphi}_{s})(w_{n-1})\Big)
$$
and 
\begin{equation}\label{eq:pois_eq}
\reallywidehat{\varphi}_{s}(x) = \sum_{n\in\mathbb{Z}^+}[K_{s}^n(\varphi)(x)-\pi_s(\varphi)]
\end{equation}
is well-defined for each $x\in\mathsf{X}$ and solves the Poisson equation, 
$$
\reallywidehat{\varphi}_{s}(x) - K_{s}(\reallywidehat{\varphi}_{s})(x) = \varphi(x) - \pi_s(\varphi).
$$
Set $(\mathcal{F}_n)_{n\in\mathbb{Z}^+}$ as the natural filtration generated by $(Z_{n,l,l-1})_{n\geq 0}$ and for $(n,s)\in\mathbb{N}\times\{l,l-1\}$
\begin{eqnarray*}
M_{n,s} & := &  \sum_{j=1}^{n}\zeta_{s}(X_{j-1:j,s},W_{j-1:j,s}) 
\end{eqnarray*}
with $M_{0,l}=M_{0,l-1}=0$. Then for each $s\in\{l,l-1\}$, $(M_{n,s},\mathcal{F}_n)_{n\in\mathbb{Z}^+}$ is a Martingale. 
Thus, we have
$$
\reallywidehat{[\pi_l-\pi_{l-1}](\varphi)}_0 
= \varphi(X_{0,l}) + M_{\check{\tau}_{l,l-1},l} + K_{l}(\reallywidehat{\varphi}_{l})(X_{0,l}) - K_{l}(\reallywidehat{\varphi}_{l})(W_{0,l}) -
$$
\begin{equation}\label{eq:mart_decomp}
\Big(\varphi(X_{0,l-1}) + M_{\check{\tau}_{l,l-1},l-1}+ K_{l-1}(\reallywidehat{\varphi}_{l-1})(X_{0,l-1}) - K_{l-1}(\reallywidehat{\varphi}_{l-1})(W_{0,l-1})\Big).
\end{equation}
Now taking expectations on both sides of the equation and applying the optional sampling theorem, we have
$$
\mathbb{E}[\reallywidehat{[\pi_l-\pi_{l-1}](\varphi)}_0] = 
\nu_l K_{l}(\varphi) + \nu_l K_{l}^2(\reallywidehat{\varphi}_{l}) - \nu _lK_{l}(\reallywidehat{\varphi}_{l}) -
\Big(\nu_{l-1} K_{l-1}(\varphi) + \nu_{l-1} K_{l-1}^2(\reallywidehat{\varphi}_{l-1} )- \nu_{l-1} K_{l-1}(\reallywidehat{\varphi}_{l-1})\Big).
$$
Hence, we have that
$$
\mathbb{E}[\reallywidehat{[\pi_l-\pi_{l-1}](\varphi)}_0]  =  [\pi_l-\pi_{l-1}](\varphi).
$$
\end{proof}

\begin{rem}
In the proof we have essentially derived the first Wald equality for Markov chains (see \cite{wald1}). The main point is to provide the Martingale plus remainder
decomposition \eqref{eq:mart_decomp}.
\end{rem}

The following two results are Lemmata A.1 and A.2.~in \cite{mlmcmc}. There is a slight addition, which can be deduced from the calculations in 
\cite[Lemma A.4.]{mlmcmc}.

\begin{lem}\label{lem:pois_lip}
Assume (A\ref{ass:1},\ref{ass:3}). Then there exists a $C\in(0,\infty)$, such that for any $(l,\varphi)\in\mathbb{Z}^+\times\mathcal{B}_b(\mathsf{X})\cap\textrm{\emph{Lip}}_{\tilde{\mathsf{d}}}(\mathsf{X})$
we have:
$$
|\reallywidehat{\varphi}_l(x)-\reallywidehat{\varphi}_l(w)|\vee 
|K_l(\reallywidehat{\varphi}_l)(x)-K_l(\reallywidehat{\varphi}_l)(w)|
\leq C(\|\varphi\|_{\infty}\vee \|\varphi\|_{\textrm{\emph{Lip}}}) \tilde{\mathsf{d}}(x,w).
$$
\end{lem}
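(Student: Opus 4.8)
The plan is to first reduce the statement to a Lipschitz bound on the Poisson solution $\widehat{\varphi}_l$ alone, since the bound on $K_l(\widehat{\varphi}_l)$ is then immediate. Indeed, once we know $\|\widehat{\varphi}_l\|_{\textrm{Lip}} \le C(\|\varphi\|_{\infty}\vee\|\varphi\|_{\textrm{Lip}})$ uniformly in $l$, a single application of (A\ref{ass:3}) to the bounded Lipschitz function $\widehat{\varphi}_l$ gives $|K_l(\widehat{\varphi}_l)(x) - K_l(\widehat{\varphi}_l)(w)| \le C(\|\widehat{\varphi}_l\|_{\infty}\vee\|\widehat{\varphi}_l\|_{\textrm{Lip}})\tilde{\mathsf{d}}(x,w)$, where the sup-norm is controlled by summing the geometric bound from (A\ref{ass:1}): $\|\widehat{\varphi}_l\|_{\infty} \le \sum_{n\ge 0}\|K_l^n(\varphi)-\pi_l(\varphi)\|_{\infty} \le \tfrac{C}{1-\rho}\|\varphi\|_{\infty}$. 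This is precisely the ``slight addition'' alluded to in the text, so the entire difficulty is concentrated in the Lipschitz bound on $\widehat{\varphi}_l$ itself.

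For that bound I would work from the Poisson series $\widehat{\varphi}_l = \sum_{n\ge 0}(K_l^n(\varphi)-\pi_l(\varphi))$ and estimate the increment term by term, $|\widehat{\varphi}_l(x)-\widehat{\varphi}_l(w)| \le \sum_{n\ge 0}|K_l^n(\varphi)(x)-K_l^n(\varphi)(w)|$. Since $K_l$ fixes additive constants, $K_l^n(\varphi)(x)-K_l^n(\varphi)(w) = K_l(\psi_{n-1})(x)-K_l(\psi_{n-1})(w)$ for the centered function $\psi_{n-1} := K_l^{n-1}(\varphi)-\pi_l(\varphi)$, and applying (A\ref{ass:3}) to $\psi_{n-1}$ peels off the outer kernel to give $|K_l^n(\varphi)(x)-K_l^n(\varphi)(w)| \le C(\|\psi_{n-1}\|_{\infty}\vee\|\psi_{n-1}\|_{\textrm{Lip}})\tilde{\mathsf{d}}(x,w)$. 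Here (A\ref{ass:1}) supplies the geometrically decaying $\|\psi_{n-1}\|_{\infty} \le C\rho^{n-1}\|\varphi\|_{\infty}$, while the remaining quantity $\|\psi_{n-1}\|_{\textrm{Lip}} = \|K_l^{n-1}(\varphi)\|_{\textrm{Lip}}$ is the Lipschitz seminorm of the iterate. An equivalent, cleaner way to organise the same estimate is to use the Poisson identity $\widehat{\varphi}_l = (\varphi-\pi_l(\varphi)) + K_l(\widehat{\varphi}_l)$ and apply (A\ref{ass:3}) once to $K_l(\widehat{\varphi}_l)$, obtaining the self-referential inequality $\|\widehat{\varphi}_l\|_{\textrm{Lip}} \le \|\varphi\|_{\textrm{Lip}} + C(\|\widehat{\varphi}_l\|_{\infty}\vee\|\widehat{\varphi}_l\|_{\textrm{Lip}})$, which one then solves for $\|\widehat{\varphi}_l\|_{\textrm{Lip}}$.

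The hard part is controlling the Lipschitz seminorms of the iterates: a naive iteration of (A\ref{ass:3}) only gives $\|K_l^{n}(\varphi)\|_{\textrm{Lip}} \le C(\|\psi_{n-1}\|_{\infty}\vee\|K_l^{n-1}(\varphi)\|_{\textrm{Lip}})$, which is not by itself summable, and the self-consistent inequality closes only once the effective one-step Lipschitz factor has been brought below one. The real content is therefore to combine (A\ref{ass:1}) and (A\ref{ass:3}) over several steps: after sufficiently many applications of $K_l$ the centered iterate $\psi_{n}$ has geometrically small sup-norm, and (A\ref{ass:3}) converts that smallness into a genuine contraction of the Lipschitz seminorm, so that $\sum_n \|\psi_n\|_{\textrm{Lip}}$ converges with the claimed bound. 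This multi-step contraction argument is exactly what underlies Lemmata A.1 and A.4 of \cite{mlmcmc}, whose conclusion I would import. Throughout, uniformity in $l$ of the final constant is inherited from the uniformity of $(C,\rho)$ in (A\ref{ass:1}) and of $C$ in (A\ref{ass:3}), and the bound on $K_l(\widehat{\varphi}_l)$ then follows from the first paragraph, completing the proof.
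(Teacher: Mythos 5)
Your proposal lands in the same place as the paper: the paper offers no self-contained proof of this lemma, stating only that it is Lemmata A.1 and A.2 of \cite{mlmcmc}, with a ``slight addition'' (the bound on $K_l(\reallywidehat{\varphi}_l)$) deducible from the calculations in \cite[Lemma A.4]{mlmcmc}. Since you also import the hard step from \cite{mlmcmc}, your route coincides with the paper's, and the scaffolding you build around the citation is sound and correctly placed: the series \eqref{eq:pois_eq} is well defined with $\|\reallywidehat{\varphi}_l\|_{\infty}\leq C\|\varphi\|_{\infty}/(1-\rho)$ by (A\ref{ass:1}), the $K_l(\reallywidehat{\varphi}_l)$ bound does reduce to the $\reallywidehat{\varphi}_l$ bound by one application of (A\ref{ass:3}), and you rightly identify that a naive term-by-term iteration of (A\ref{ass:3}) over the Poisson series is not summable.

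One substantive caveat on your heuristic for the imported step: the mechanism you describe --- that once $\|\psi_n\|_{\infty}$ is geometrically small, (A\ref{ass:3}) ``converts that smallness into a genuine contraction of the Lipschitz seminorm'' --- does not follow from (A\ref{ass:3}) as stated. The bound $\|K_l(\psi)\|_{\textrm{Lip}}\leq C(\|\psi\|_{\infty}\vee\|\psi\|_{\textrm{Lip}})$ carries a $\vee$ on the right-hand side: smallness of $\|\psi\|_{\infty}$ is simply discarded whenever the Lipschitz seminorm is the larger term, so the factor multiplying $\|\psi\|_{\textrm{Lip}}$ is always $C$, which may exceed one, and no number of iterations brings it below one. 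Indeed, the best one can extract from (A\ref{ass:1}) and (A\ref{ass:3}) alone, by balancing the crude iterate bound $\|K_l^n(\varphi)\|_{\textrm{Lip}}\leq C^n(\|\varphi\|_{\infty}\vee\|\varphi\|_{\textrm{Lip}})$ for small $n$ against the total-variation bound $|K_l^n(\varphi)(x)-K_l^n(\varphi)(w)|\leq C\rho^n\|\varphi\|_{\infty}$ for large $n$, is a H\"older estimate in $\tilde{\mathsf{d}}(x,w)$ with exponent strictly less than one --- not the Lipschitz conclusion claimed. The full Lipschitz statement rests on additional structure of the kernels exploited concretely in \cite{mlmcmc}, not on the paper's abstract assumptions; since you explicitly defer to that reference for the conclusion, your proof stands as a citation-based argument matching the paper's, but your sketch of ``why'' should not be read as a derivation from (A\ref{ass:1}) and (A\ref{ass:3}).
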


\begin{lem}\label{lem:poisson_eq_cont}
Assume (A\ref{ass:1},\ref{ass:2}). Then there exists a $C\in(0,\infty)$, such that for any $(l,\varphi)\in\mathbb{N}\times\mathcal{B}_b(\mathsf{X})$ we have:
$$
\sup_{x\in\mathsf{X}} |\reallywidehat{\varphi}_l(x)-\reallywidehat{\varphi}_{l-1}(x)| \vee
\sup_{x\in\mathsf{X}} |K_l(\reallywidehat{\varphi}_l)(x)-K_{l-1}(\reallywidehat{\varphi}_{l-1})(x)|
\leq C\|\varphi\|_{\infty} \Delta_l^{\beta_1},
$$
where $\beta_1$ is as in (A\ref{ass:2}).
\end{lem}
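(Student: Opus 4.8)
The plan is to avoid summing the defining series \eqref{eq:pois_eq} term by term. One can show, using (A\ref{ass:1}) for geometric decay $\|K_s^n(\varphi)-\pi_s(\varphi)\|_\infty\le C\rho^n\|\varphi\|_\infty$ together with the one-step perturbation estimate (A\ref{ass:2}).2, that the difference of the $n$-th terms is $O(\|\varphi\|_\infty\min\{\rho^n,\Delta_l^{\beta_1}\})$; but summing this over $n$ produces a spurious factor $\log(1/\Delta_l)\asymp l$, which is too weak for the claimed bound. Instead I would reformulate the difference of Poisson solutions as a single Poisson equation for $K_l$.

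First I would record two consequences of (A\ref{ass:1}): the series \eqref{eq:pois_eq} converges absolutely with $\|\reallywidehat{\varphi}_s\|_\infty\le C\|\varphi\|_\infty$ for $s\in\{l,l-1\}$, and each $\reallywidehat{\varphi}_s$ is $\pi_s$-centered, $\pi_s(\reallywidehat{\varphi}_s)=0$. Subtracting the two Poisson equations $(\mathrm{Id}-K_s)\reallywidehat{\varphi}_s=\varphi-\pi_s(\varphi)$ for $s=l$ and $s=l-1$ and rearranging gives
\[
(\mathrm{Id}-K_l)\bigl(\reallywidehat{\varphi}_l-\reallywidehat{\varphi}_{l-1}\bigr)=h,\qquad h:=(K_l-K_{l-1})\reallywidehat{\varphi}_{l-1}+[\pi_{l-1}-\pi_l](\varphi).
\]

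The key step is to verify that $\pi_l(h)=0$. Using invariance $\pi_lK_l=\pi_l$ and the identity $K_{l-1}\reallywidehat{\varphi}_{l-1}=\reallywidehat{\varphi}_{l-1}-\varphi+\pi_{l-1}(\varphi)$, one computes $\pi_l\bigl((K_l-K_{l-1})\reallywidehat{\varphi}_{l-1}\bigr)=\pi_l(\varphi)-\pi_{l-1}(\varphi)$, which exactly cancels the second term of $h$. Consequently $\reallywidehat{h}_l:=\sum_{n\ge0}K_l^n h$ is a well-defined, $\pi_l$-centered solution of the same Poisson equation, and (A\ref{ass:1}) yields $\|\reallywidehat{h}_l\|_\infty\le C\|h\|_\infty$. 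Since $\reallywidehat{\varphi}_l-\reallywidehat{\varphi}_{l-1}$ and $\reallywidehat{h}_l$ solve the same equation they differ by a constant, equal to $\pi_l(\reallywidehat{\varphi}_l-\reallywidehat{\varphi}_{l-1})=-\pi_l(\reallywidehat{\varphi}_{l-1})=[\pi_{l-1}-\pi_l](\reallywidehat{\varphi}_{l-1})$ because $\reallywidehat{\varphi}_{l-1}$ is $\pi_{l-1}$-centered; by (A\ref{ass:2}).1 this constant is itself $O(\|\varphi\|_\infty\Delta_l^{\beta_1})$.

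It then remains to bound $\|h\|_\infty$. The second term of $h$ is $O(\|\varphi\|_\infty\Delta_l^{\beta_1})$ by (A\ref{ass:2}).1 (using $\Delta_{l-1}=2\Delta_l$), and the first term is $O(\|\varphi\|_\infty\Delta_l^{\beta_1})$ by (A\ref{ass:2}).2 applied to the bounded function $\reallywidehat{\varphi}_{l-1}$. Combining gives the first claim, $\sup_x|\reallywidehat{\varphi}_l(x)-\reallywidehat{\varphi}_{l-1}(x)|\le C\|\varphi\|_\infty\Delta_l^{\beta_1}$. For the second claim I would split $K_l(\reallywidehat{\varphi}_l)-K_{l-1}(\reallywidehat{\varphi}_{l-1})=K_l(\reallywidehat{\varphi}_l-\reallywidehat{\varphi}_{l-1})+(K_l-K_{l-1})\reallywidehat{\varphi}_{l-1}$; the first summand is controlled by the first claim and the contraction $\|K_l\psi\|_\infty\le\|\psi\|_\infty$, and the second again by (A\ref{ass:2}).2. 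The main obstacle is genuinely this reformulation: one must recognize that the naive term-by-term estimate is lossy, and the payoff of passing to the single Poisson equation is precisely the cancellation $\pi_l(h)=0$, which restores the full geometric factor $(1-\rho)^{-1}$ in place of the spurious $l$.
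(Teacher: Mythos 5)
Your proof is correct, but it takes a genuinely different route from the paper's. The paper in fact gives no argument for this lemma: it invokes Lemmata A.1--A.2 of \cite{mlmcmc}, whose proof works directly on the series \eqref{eq:pois_eq}, term by term. Your preliminary claim that any term-by-term treatment is doomed to a spurious factor $\log(1/\Delta_l)\asymp l$ is overstated: the loss only afflicts the crude bound $\min\{\rho^n,\Delta_l^{\beta_1}\}$. Writing $K_l^n(\varphi)-K_{l-1}^n(\varphi)=\sum_{q=0}^{n-1}K_l^q(K_l-K_{l-1})\bigl(K_{l-1}^{n-1-q}(\varphi)-\pi_{l-1}(\varphi)\bigr)$ (the perturbation $K_l-K_{l-1}$ annihilates constants) and centering the outer iterate $K_l^q$ with respect to $\pi_l$ as well, the $n$-th term of the difference of the two series is of order $\|\varphi\|_{\infty}\Delta_l^{\beta_1}(n\rho^{n-1}+\rho^n)$, which sums over $n$ to the clean rate $C\|\varphi\|_{\infty}\Delta_l^{\beta_1}$; this double-centered telescoping is essentially what the cited reference does. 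Your alternative -- subtracting the two Poisson equations to get $(\mathrm{Id}-K_l)(\reallywidehat{\varphi}_l-\reallywidehat{\varphi}_{l-1})=h$ with $h=(K_l-K_{l-1})\reallywidehat{\varphi}_{l-1}+[\pi_{l-1}-\pi_l](\varphi)$, verifying the cancellation $\pi_l(h)=0$, inverting $(\mathrm{Id}-K_l)$ on $\pi_l$-centered functions via (A\ref{ass:1}), and pinning down the residual additive constant as $[\pi_{l-1}-\pi_l](\reallywidehat{\varphi}_{l-1})=O(\|\varphi\|_{\infty}\Delta_l^{\beta_1})$ -- checks out at every step: your computation of $\pi_l(h)$ is right; bounded solutions of $(\mathrm{Id}-K_l)v=0$ are indeed constant under (A\ref{ass:1}), since $v=K_l^nv\rightarrow\pi_l(v)$ pointwise; and the split $K_l(\reallywidehat{\varphi}_l)-K_{l-1}(\reallywidehat{\varphi}_{l-1})=K_l(\reallywidehat{\varphi}_l-\reallywidehat{\varphi}_{l-1})+(K_l-K_{l-1})(\reallywidehat{\varphi}_{l-1})$ handles the second claim. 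What your formulation buys is structural clarity: the geometric series appears exactly once, through $\|\sum_{n\geq 0}K_l^n h\|_{\infty}\leq C\|h\|_{\infty}/(1-\rho)$, with all level-dependence isolated in the one-step quantity $\|h\|_{\infty}$, whereas the telescoping proof must juggle two geometric tails inside a double sum. Two points of bookkeeping, neither fatal: bounds of the form $|[\pi_{l-1}-\pi_l](\psi)|\leq C\|\psi\|_{\infty}\Delta_l^{\beta_1}$ require a triangle inequality through $\pi$ using (A\ref{ass:2}).1 twice (which you implicitly do via $\Delta_{l-1}=2\Delta_l$); and at $l=1$ the term $[\pi_0-\pi](\cdot)$ is not literally covered by (A\ref{ass:2}).1, which is stated for $l\in\mathbb{N}$, but since $\Delta_1$ is a fixed constant this level is trivially absorbed into $C$.
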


\begin{lem}\label{lem:good_prob}
Assume (A\ref{ass:4}). Then there exists a $C\in(0,\infty)$, such that for any $(l,n)\in\mathbb{N}\times\mathbb{Z}^+$ we have:
\begin{eqnarray*}
\mathbb{E}[\mathbb{I}_{B(C,\Delta_l^{\beta_2},\tilde{\mathsf{d}})^c}(Z_{n,l,l-1})] 
& \leq & C(n+1)\Delta_l^{\beta_2(2+\epsilon)}, \\
\mathbb{E}[\tilde{\mathsf{d}}(X_{n,l},X_{n,l-1})^{2+\epsilon}]\vee
\mathbb{E}[\tilde{\mathsf{d}}(W_{n,l},W_{n,l-1})^{2+\epsilon}]
& \leq & C(n+1)\Delta_l^{\beta_2(2+\epsilon)},
\end{eqnarray*}
where $\beta_{2}$ and $\epsilon$ are as in (A\ref{ass:4}).
\end{lem}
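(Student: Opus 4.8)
The plan is to derive both displayed bounds from the one-step ``escape'' estimates in Assumption (A\ref{ass:4}) by a telescoping/induction argument, with the compactness of $\mathsf{X}$ entering only at the very end. Throughout, write $B := B(C,\Delta_l^{\beta_2},\tilde{\mathsf{d}})$ with $C$ the constant supplied by (A\ref{ass:4}), and abbreviate the ``bad event'' as $A_n := \{Z_{n,l,l-1}\in B^c\}$, so that the first quantity to be bounded is simply $\mathbb{P}(A_n)$.

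First I would establish $\mathbb{P}(A_n)\leq C(n+1)\Delta_l^{\beta_2(2+\epsilon)}$ by induction on $n$. The base case $n=0$ is exactly the first inequality in (A\ref{ass:4}). For the inductive step I would condition on the status of the chain at the previous time,
$$
\mathbb{P}(A_n) = \mathbb{P}(A_n\cap A_{n-1}^c) + \mathbb{P}(A_n\cap A_{n-1}) \leq \mathbb{P}(A_n\cap A_{n-1}^c) + \mathbb{P}(A_{n-1}).
$$
The event $A_n\cap A_{n-1}^c = \{Z_{n,l,l-1}\in B^c,\, Z_{n-1,l,l-1}\in B\}$ is precisely the one controlled by the second inequality in (A\ref{ass:4}), whence $\mathbb{P}(A_n\cap A_{n-1}^c)\leq C\Delta_l^{\beta_2(2+\epsilon)}$; combining with the inductive hypothesis $\mathbb{P}(A_{n-1})\leq Cn\,\Delta_l^{\beta_2(2+\epsilon)}$ closes the induction and gives the first assertion.

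For the moment bound I would split the expectation over $A_n^c$ and $A_n$. On $A_n^c$ the four coordinates of $Z_{n,l,l-1}=((X_{n,l},W_{n,l}),(X_{n,l-1},W_{n,l-1}))$ are pairwise within $C\Delta_l^{\beta_2}$ by the definition of $B$, so in particular $\tilde{\mathsf{d}}(X_{n,l},X_{n,l-1})\leq C\Delta_l^{\beta_2}$ and likewise for the $W$-coordinates. On $A_n$ I would use the trivial bound $\tilde{\mathsf{d}}\leq D$, where $D := \sup_{x,w\in\mathsf{X}}\tilde{\mathsf{d}}(x,w)<\infty$ is finite because $\mathsf{X}$ is compact and $\tilde{\mathsf{d}}$ is the (Euclidean) metric. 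This yields
$$
\mathbb{E}[\tilde{\mathsf{d}}(X_{n,l},X_{n,l-1})^{2+\epsilon}] \leq (C\Delta_l^{\beta_2})^{2+\epsilon} + D^{2+\epsilon}\,\mathbb{P}(A_n),
$$
and inserting the bound on $\mathbb{P}(A_n)$ just proved gives $C'(n+1)\Delta_l^{\beta_2(2+\epsilon)}$; the identical argument applied to $\tilde{\mathsf{d}}(W_{n,l},W_{n,l-1})$ completes the proof.

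I do not anticipate a genuine obstacle here: the entire content is the observation that the one-step estimates in (A\ref{ass:4}) chain into an $n$-step estimate at the cost of a factor linear in $n$, which is harmless because the time index at which these bounds are eventually applied is the almost surely finite meeting time $\check{\tau}_{l,l-1}$ of bounded expectation (controlled via (A\ref{ass:5})). The only point requiring care is the finiteness of the diameter $D$, which is exactly where the compactness of $\mathsf{X}$ is used.
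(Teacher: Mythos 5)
Your proof is correct and follows essentially the same route as the paper: induction on $n$ using the two one-step bounds in (A\ref{ass:4}) for the indicator estimate, then splitting the $(2+\epsilon)$-moment over the good set $B(C,\Delta_l^{\beta_2},\tilde{\mathsf{d}})$ and its complement, with compactness of $\mathsf{X}$ supplying the uniform bound on the bad set. Your decomposition $\mathbb{P}(A_n)\leq\mathbb{P}(A_n\cap A_{n-1}^c)+\mathbb{P}(A_{n-1})$ is exactly the paper's splitting of $\mathbb{I}_{B^c}(Z_{n,l,l-1})$ according to whether $Z_{n-1,l,l-1}$ lies in $B$ or $B^c$, so there is nothing to add.
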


\begin{proof}
The proof of the first statement is by induction. The first statement, which holds at step zero by assumption, so assuming the result for $n-1$:
$$
\mathbb{E}[\mathbb{I}_{B(C,\Delta_l^{\beta_2},\tilde{\mathsf{d}})^c}(Z_{n,l,l-1})] = 
$$
$$
\mathbb{E}[\mathbb{I}_{B(C,\Delta_l^{\beta_2},\tilde{\mathsf{d}})^c\times B(C,\Delta_l^{\beta_2},\tilde{\mathsf{d}})}(Z_{n,l,l-1},Z_{n-1,l,l-1})] +
\mathbb{E}[\mathbb{I}_{B(C,\Delta_l^{\beta_2},\tilde{\mathsf{d}})^c\times B(C,\Delta_l^{\beta_2},\tilde{\mathsf{d}})^c}(Z_{n,l,l-1},Z_{n-1,l,l-1})].
$$
Then applying (A\ref{ass:4}) along with the induction hypothesis one can conclude that:
$$
\mathbb{E}[\mathbb{I}_{B(C,\Delta_l^{\beta_2},\tilde{\mathsf{d}})^c}(Z_{n,l,l-1})]
\leq  C(n+1)\Delta_l^{\beta_2(2+\epsilon)}
$$
and hence the proof of the first statement is complete.

For the second statement, we consider only $\tilde{\mathsf{d}}(X_{n,l},X_{n,l-1})^{2+\epsilon}$ as the argument is the same for $\tilde{\mathsf{d}}(W_{n,l},W_{n,l-1})^{2+\epsilon}$. We have 
$$
\mathbb{E}[\tilde{\mathsf{d}}(X_{n,l},X_{n,l-1})^{2+\epsilon}] = \mathbb{E}[\tilde{\mathsf{d}}(X_{n,l},X_{n,l-1})^{2+\epsilon}\{
\mathbb{I}_{B(C,\Delta_l^{\beta_2},\tilde{\mathsf{d}})}(Z_{n,l,l-1}) + \mathbb{I}_{B(C,\Delta_l^{\beta_2},\tilde{\mathsf{d}})^c}(Z_{n,l,l-1})
\}].
$$
On $B(C,\Delta_l^{\beta_2},\tilde{\mathsf{d}})$, one has $\tilde{\mathsf{d}}(X_{n,l},X_{n,l-1})^{2+\epsilon}\leq C\Delta_l^{\beta_2(2+\epsilon)}$.
For the second
term on the R.H.S., $\mathsf{X}$ is compact and $\tilde{\mathsf{d}}(X_{n,l},X_{n,l-1})^{2+\epsilon}$ is bounded, 
so one can use the first part of the statement to deduce that
$$
\mathbb{E}[\tilde{\mathsf{d}}(X_{n,l},X_{n,l-1})^{2+\epsilon}] \leq C(n+1)\Delta_l^{\beta_2(2+\epsilon)}.
$$
\end{proof}

\begin{lem}\label{lem:martingale}
Assume (A\ref{ass:1}-\ref{ass:4}). Then there exists a $C\in(0,\infty)$, such that for any $(l,\varphi)\in\mathbb{N}\times\mathcal{B}_b(\mathsf{X})\cap\textrm{\emph{Lip}}_{\tilde{\mathsf{d}}}(\mathsf{X})$ we have:
$$
\mathbb{E}[\{M_{\check{\tau}_{l,l-1},l}-M_{\check{\tau}_{l,l-1},l-1}\}^2] \leq C(\|\varphi\|_{\infty}\vee \|\varphi\|_{\textrm{\emph{Lip}}})
\Delta_l^{2\big(\beta_1\wedge \beta_2\big)},
$$
where $\tilde{\mathsf{d}}$ is as (A\ref{ass:3}), $\beta_1$ is as (A\ref{ass:2}) and  $\beta_{2}$ 
is as in (A\ref{ass:4}).
\end{lem}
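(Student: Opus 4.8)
The plan is to treat $D_n := M_{n,l} - M_{n,l-1}$ as a single zero-mean martingale with respect to $(\mathcal{F}_n)_{n\in\mathbb{Z}^+}$, whose increments are $\Delta_j := \zeta_l(X_{j-1:j,l},W_{j-1:j,l}) - \zeta_{l-1}(X_{j-1:j,l-1},W_{j-1:j,l-1})$. By the faithfulness property \eqref{eqn:faithfulness_bothlevels}, $\zeta_s$ vanishes for $j>\tau_s$, so that $M_{\check{\tau}_{l,l-1},s}=M_{\tau_s,s}$ and $D_{\check{\tau}_{l,l-1}} = \sum_{j=1}^{\check{\tau}_{l,l-1}}\Delta_j$. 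Since martingale increments are orthogonal, the optional sampling theorem applied to the compensated process $D_n^2 - \sum_{j\le n}\mathbb{E}[\Delta_j^2\mid\mathcal{F}_{j-1}]$ gives
\begin{equation*}
\mathbb{E}[\{M_{\check{\tau}_{l,l-1},l}-M_{\check{\tau}_{l,l-1},l-1}\}^2] = \mathbb{E}\Big[\sum_{j=1}^{\check{\tau}_{l,l-1}}\Delta_j^2\Big],
\end{equation*}
the requisite integrability being verified a posteriori once the right-hand side is shown finite (one stops at $n\wedge\check{\tau}_{l,l-1}$, applies the finite-horizon identity, and passes to the limit by monotone convergence). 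It therefore suffices to bound $\mathbb{E}[\sum_{j=1}^{\check{\tau}_{l,l-1}}\Delta_j^2]$.

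Next I would decompose each increment. Writing $\widehat{\varphi}_s$ and $K_s(\widehat{\varphi}_s)$ for the Poisson solution and its one-step average, $\Delta_j = A_j - B_j$, where $A_j$ collects the time-$j$ terms,
\begin{equation*}
A_j = [\widehat{\varphi}_l(X_{j,l}) - \widehat{\varphi}_l(W_{j,l})] - [\widehat{\varphi}_{l-1}(X_{j,l-1}) - \widehat{\varphi}_{l-1}(W_{j,l-1})],
\end{equation*}
and $B_j$ collects the analogous time-$(j-1)$ terms built from $K_s(\widehat{\varphi}_s)$. For $A_j$, I add and subtract $\widehat{\varphi}_l(X_{j,l-1}) - \widehat{\varphi}_l(W_{j,l-1})$, splitting it into a cross-level \emph{Lipschitz} part $[\widehat{\varphi}_l(X_{j,l}) - \widehat{\varphi}_l(X_{j,l-1})] - [\widehat{\varphi}_l(W_{j,l}) - \widehat{\varphi}_l(W_{j,l-1})]$ and a same-point \emph{level-difference} part $[\widehat{\varphi}_l(X_{j,l-1})-\widehat{\varphi}_{l-1}(X_{j,l-1})] - [\widehat{\varphi}_l(W_{j,l-1})-\widehat{\varphi}_{l-1}(W_{j,l-1})]$. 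The first is bounded by $C(\|\varphi\|_\infty\vee\|\varphi\|_{\textrm{Lip}})[\tilde{\mathsf{d}}(X_{j,l},X_{j,l-1}) + \tilde{\mathsf{d}}(W_{j,l},W_{j,l-1})]$ through the Lipschitz estimate of Lemma \ref{lem:pois_lip}, and the second by $2C\|\varphi\|_\infty\Delta_l^{\beta_1}$ via the level-difference bound of Lemma \ref{lem:poisson_eq_cont}. An identical treatment of $B_j$, using the Lipschitz and level-difference bounds of Lemmata \ref{lem:pois_lip}--\ref{lem:poisson_eq_cont} for $K_s(\widehat{\varphi}_s)$, produces the same two kinds of terms at the shifted index $j-1$.

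Squaring and summing, the level-difference contributions are deterministic and yield $C(\|\varphi\|_\infty\vee\|\varphi\|_{\textrm{Lip}})\Delta_l^{2\beta_1}\,\mathbb{E}[\check{\tau}_{l,l-1}]$; since (A\ref{ass:5}) gives a geometric tail for $\tau_l$ and $\tau_{l-1}$ uniformly in $l$, $\mathbb{E}[\check{\tau}_{l,l-1}]\le\mathbb{E}[\tau_l]+\mathbb{E}[\tau_{l-1}]$ is bounded by a constant independent of $l$. The substantive work is to bound the distance contributions $\mathbb{E}[\sum_{j=1}^{\check{\tau}_{l,l-1}}\tilde{\mathsf{d}}(X_{j,l},X_{j,l-1})^2]$ (and the analogues for $W$ and for the index $j-1$). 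Writing this as $\sum_{j\ge1}\mathbb{E}[\mathbb{I}_{\{\check{\tau}_{l,l-1}\ge j\}}\,\tilde{\mathsf{d}}(X_{j,l},X_{j,l-1})^2]$, I would apply Hölder's inequality with exponents $(2+\epsilon)/2$ and $(2+\epsilon)/\epsilon$ to factor each summand into $\mathbb{E}[\tilde{\mathsf{d}}(X_{j,l},X_{j,l-1})^{2+\epsilon}]^{2/(2+\epsilon)}$ and $\mathbb{P}(\check{\tau}_{l,l-1}\ge j)^{\epsilon/(2+\epsilon)}$.

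By Lemma \ref{lem:good_prob} the first factor is $\le C(j+1)^{2/(2+\epsilon)}\Delta_l^{2\beta_2}$, and by (A\ref{ass:5}) the second is bounded by a geometric term $C\tilde{\rho}^{\,j}$ with $\tilde{\rho}=\rho^{\epsilon/(2+\epsilon)}\in(0,1)$; the series $\sum_j (j+1)^{2/(2+\epsilon)}\tilde{\rho}^{\,j}$ then converges, giving the clean bound $C\Delta_l^{2\beta_2}$. The main obstacle is precisely this last step: the per-step distance estimate of Lemma \ref{lem:good_prob} grows linearly in $n$, so one cannot simply sum it against $\mathbb{E}[\check{\tau}_{l,l-1}]$; the Hölder split, which trades the $(2+\epsilon)$-th moment of the distance against the geometric stopping-time tail, is what converts the linear growth into a convergent series while preserving the exponent $\Delta_l^{2\beta_2}$. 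Collecting the $\Delta_l^{2\beta_1}$ and $\Delta_l^{2\beta_2}$ terms and using $\Delta_l\le1$ to bound both by $\Delta_l^{2(\beta_1\wedge\beta_2)}$ completes the proof.
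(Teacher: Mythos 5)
Your proof is correct and follows essentially the same route as the paper's: reduce $\mathbb{E}[\{M_{\check{\tau}_{l,l-1},l}-M_{\check{\tau}_{l,l-1},l-1}\}^2]$ to $\mathbb{E}[\sum_{j\le\check{\tau}_{l,l-1}}\Delta_j^2]$ (the paper via Burkholder--Gundy--Davis and monotone convergence, you via increment orthogonality and optional sampling, which is equivalent here), split each increment by adding and subtracting into a cross-level Lipschitz part and a same-point level-difference part controlled by Lemmata \ref{lem:pois_lip} and \ref{lem:poisson_eq_cont}, and then handle the linearly growing $(2+\epsilon)$-moment bounds of Lemma \ref{lem:good_prob} by H\"older with exponents $\frac{2+\epsilon}{2}$ and $\frac{2+\epsilon}{\epsilon}$ against the geometric tail of (A\ref{ass:5}). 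The only differences are cosmetic: you interchange sum and expectation directly by Tonelli where the paper takes a detour through a Poisson-equation construction and the first Wald equality, and you apply H\"older only to the distance terms while the paper applies it to the whole squared increment.
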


\begin{proof}
Set, for $n\in\mathbb{Z}^+$
$$
\check{M}_{n,l,l-1} := M_{n,l}-M_{n,l-1}
$$
then $(\check{M}_{n,l,l-1},\mathcal{F}_n)_{n\in\mathbb{Z}^+}$ is a Martingale and moreover as $\check{\tau}_{l,l-1}$ is a $\mathcal{F}_n-$stopping time,
so is 
$$
(\check{M}_{\check{\tau}_{l,l-1}\wedge n,l,l-1},\mathcal{F}_n)_{n\in\mathbb{Z}^+}.
$$ 
Then, by the Burkholder-Gundy-Davis inequality, we have
$$
\mathbb{E}\left[\check{M}_{\check{\tau}_{l,l-1}\wedge n,l,l-1}^2\right] \leq \mathbb{E}\left[\sum_{j=1}^{\check{\tau}_{l,l-1}\wedge n}\{\zeta_{l}(X_{j-1:j,l},W_{j-1:j,l}) -\zeta_{l-1}(X_{j-1:j,l-1},W_{j-1:j,l-1}) \}^2\right].
$$
By (A\ref{ass:5}), $\check{\tau}_{l,l-1}$ is almost surely finite, so by the monotone convergence theorem:
\begin{equation}\label{eq:mart_dom}
\mathbb{E}\left[\check{M}_{\check{\tau}_{l,l-1},l,l-1}^2\right] \leq \mathbb{E}\left[\sum_{j=1}^{\check{\tau}_{l,l-1}}\{\zeta_{l}(X_{j-1:j,l},W_{j-1:j,l}) -\zeta_{l-1}(X_{j-1:j,l-1},W_{j-1:j,l-1}) \}^2\right].
\end{equation}
Then we can upper-bound the R.H.S.~to yield
$$
\mathbb{E}\left[\check{M}_{\check{\tau}_{l,l-1},l,l-1}^2\right] \leq C(T_1 + T_2)
$$
where
\begin{eqnarray}
T_1 & := & \mathbb{E}\Bigg[\sum_{n=1}^{\check{\tau}_{l,l-1}}\Big\{\reallywidehat{\varphi}_l(X_{n,l})-\reallywidehat{\varphi}_l(W_{n,l})- \reallywidehat{\varphi}_{l-1}(X_{n,l-1})+ \reallywidehat{\varphi}_{l-1}(W_{n,l-1})\Big\}^2\Bigg]\label{eq:mg_prf1}\\
T_2 & := & \mathbb{E}\Bigg[\sum_{n=1}^{\check{\tau}_{l,l-1}}
\Big\{K_l(\reallywidehat{\varphi}_l)(X_{n-1,l})-K_l(\reallywidehat{\varphi}_l)(W_{n-1,l})- K_{l-1}(\reallywidehat{\varphi}_{l-1})(X_{n-1,l-1})+ \nonumber \\ & & K_{l-1}(\reallywidehat{\varphi}_{l-1})(W_{n-1,l-1})\Big\}^2
\Bigg].\label{eq:mg_prf2}
\end{eqnarray}
To conclude the proof, one must appropriately upper-bound \eqref{eq:mg_prf1} and \eqref{eq:mg_prf2}. As the forms are very similar and due to the expression for $\reallywidehat{\varphi}_s$ (see \eqref{eq:pois_eq}), $s\in\{l,l-1\}$, we will give the proof for \eqref{eq:mg_prf1} only, as the proof for \eqref{eq:mg_prf2} is almost the same.

Define, for $z\in\mathsf{X}^4$
$$
\reallywidehat{\varphi}_{l,l-1}(z) := \sum_{n\in\mathbb{Z}^+}\check{K}_{l,l-1}^n\big(\reallywidehat{\psi}_{l,l-1}\big)(z)
$$
where $\reallywidehat{\psi}_{l,l-1}(x_l,w_{l},x_{l-1}, w_{l-1})=(\reallywidehat{\varphi}_l(x_{l})-\reallywidehat{\varphi}_l(w_{l})- \reallywidehat{\varphi}_{l-1}(x_{l-1})+ \reallywidehat{\varphi}_{l-1}(w_{l-1}))^2$. We note that by (A\ref{ass:5}), one can easily verify that $\reallywidehat{\varphi}_{l,l-1}$ is a well-defined function. Now, by the first Wald
equality for Markov chains, one has
$$
\mathbb{E}[T_1] =  \mathbb{E}[\check{K}_{l,l-1}(\reallywidehat{\varphi}_{l,l-1})(Z_{0,l,l-1})].
$$
This is because 
\begin{eqnarray*}
\mathbb{E}[\check{K}_{l,l-1}(\reallywidehat{\varphi}_{l,l-1})(Z_{\check{\tau}_{l,l-1},l,l-1})] & = & 0,\\
\check{\pi}_{l,l-1}(\reallywidehat{\varphi}_{l,l-1}) & = & 0,\\
\mathbb{E}[\check{\tau}_{l,l-1}] & < & +\infty,
\end{eqnarray*}
where $\check{\pi}_{l,l-1}$ is the invariant measure of $\check{K}_{l,l-1}$ (marginally, the invariant distribution of $(X_s,W_s)$ is $\pi_s(dx_s)\delta_{x_s}(dw_s)$ for $s\in\{l,l-1\}$). Now
\begin{equation} \label{eq:mg_prf3}
T_1 =  \mathbb{E}[\reallywidehat{\psi}_{l,l-1}(Z_{0,l,l-1})] +
 \mathbb{E}\Bigg[\sum_{n\in\mathbb{N}}\mathbb{E}[\reallywidehat{\psi}_{l,l-1}(Z_{n,l,l-1})|Z_{0,l,l-1}]\Bigg]. 
\end{equation}
We have
\begin{eqnarray}
 \mathbb{E}[\reallywidehat{\psi}_{l,l-1}(Z_{0,l,l-1})] & = &   \mathbb{E}\Big[\Big(
 \reallywidehat{\varphi}_l(X_{0,l})-\reallywidehat{\varphi}_l(W_{0,l})- \reallywidehat{\varphi}_{l-1}(X_{0,l-1})+ \reallywidehat{\varphi}_{l-1}(W_{0,l-1})
 \Big)^2\Big] \nonumber\\
 & \leq & C\Big\{\mathbb{E}\Big[\Big( \reallywidehat{\varphi}_l(X_{0,l}) - \reallywidehat{\varphi}_{l-1}(X_{0,l}) + \reallywidehat{\varphi}_{l-1}(X_{0,l}) - \reallywidehat{\varphi}_{l-1}(X_{0,l-1})\Big)^2\Big] + \nonumber\\
 & & \mathbb{E}\Big[\Big( \reallywidehat{\varphi}_l(W_{0,l}) - \reallywidehat{\varphi}_{l-1}(W_{0,l}) + \reallywidehat{\varphi}_{l-1}(W_{0,l}) - \reallywidehat{\varphi}_{l-1}(W_{0,l-1})\Big)^2\Big]\Big\} \nonumber\\
  & \leq & C(\|\varphi\|_{\infty}\vee \|\varphi\|_{\textrm{Lip}})\Big(\Delta_l^{2\beta_1} + 
  \mathbb{E}[\tilde{\mathsf{d}}(X_{0,l},X_{0,l-1})^{2}] + \Delta_l^{2\beta_1} +
\mathbb{E}[\tilde{\mathsf{d}}(W_{0,l},W_{0,l-1})^{2}]\Big) \nonumber\\
  & \leq & C(\|\varphi\|_{\infty}\vee \|\varphi\|_{\textrm{Lip}})\Delta_l^{ 2( \beta_1\wedge \beta_2)} 
  \label{eq:mg_prf4}
\end{eqnarray}
where we have applied the $C_2-$inequality and  Lemmata \ref{lem:pois_lip}-\ref{lem:poisson_eq_cont} to go to the third line and Lemma \ref{lem:good_prob} to go to the last line.
Then, we also have
\begin{eqnarray*}
\mathbb{E}\Bigg[\sum_{n\in\mathbb{N}}\mathbb{E}[\reallywidehat{\psi}_{l,l-1}(Z_{n,l,l-1})|Z_{0,l,l-1}]\Bigg]  & = & 
\mathbb{E}\Bigg[
\sum_{n\in\mathbb{N}}\mathbb{E}[\mathbb{I}_{\{\check{\tau}_{l,l-1}>n\}}\reallywidehat{\psi}_{l,l-1}(Z_{n,l,l-1})|Z_{0,l,l-1}]\Bigg] \\
& = & \sum_{n\in\mathbb{N}}\mathbb{E}[\mathbb{I}_{\{\check{\tau}_{l,l-1}>n\}}\reallywidehat{\psi}_{l,l-1}(Z_{n,l,l-1})]\\
& \leq & \sum_{n\in\mathbb{N}}\mathbb{E}[\mathbb{I}_{\{\check{\tau}_{l,l-1}>n\}}]^{\epsilon/(2+\epsilon)}
\mathbb{E}[\reallywidehat{\psi}_{l,l-1}(Z_{n,l,l-1})^{\frac{2+\epsilon}{2}}]^{2/(2+\epsilon)}\\
& \leq & C\sum_{n\in\mathbb{N}}(\rho^{\frac{\epsilon}{2+\epsilon}})^n\mathbb{E}[\reallywidehat{\psi}_{l,l-1}(Z_{n,l,l-1})^{\frac{2+\epsilon}{2}}]^{2/(2+\epsilon)}
\end{eqnarray*}
where we have used H\"older's inequality to go to the third line and (A\ref{ass:5}) to go to the fourth line.
Now, by 
similar calculations that lead to \eqref{eq:mg_prf4}, we have that
$$
\mathbb{E}[\reallywidehat{\psi}_{l,l-1}(Z_{n,l,l-1})^{\frac{2+\epsilon}{2}}]^{2/(2+\epsilon)} 
\leq  C(\|\varphi\|_{\infty}\vee \|\varphi\|_{\textrm{Lip}})(n+1)\Delta_l^{2\big(\beta_1\wedge \beta_2 
\big)}
$$
and hence that
\begin{equation}  \label{eq:mg_prf5}
\mathbb{E}\Bigg[\sum_{n\in\mathbb{N}}\mathbb{E}[\reallywidehat{\psi}_{l,l-1}(Z_{n,l,l-1})|Z_{0,l,l-1}]\Bigg] \leq 
C(\|\varphi\|_{\infty}\vee \|\varphi\|_{\textrm{Lip}})\Delta_l^{2\big(\beta_1\wedge  \beta_2 
\big)} \, .
\end{equation}
Then combining \eqref{eq:mg_prf4}-\eqref{eq:mg_prf5} with \eqref{eq:mg_prf3} yields
$$
T_1 \leq C(\|\varphi\|_{\infty}\vee \|\varphi\|_{\textrm{Lip}})\Delta_l^{2\big(\beta_1\wedge  \beta_2 
\big)} \, ,
$$
and hence the proof is concluded.
\end{proof}

\begin{lem}\label{lem:sec_mom}
Assume (A\ref{ass:1}-\ref{ass:4}). Then there exists a $C\in(0,\infty)$, such that for any $(l,\varphi)\in\mathbb{N}\times\mathcal{B}_b(\mathsf{X})\cap\textrm{\emph{Lip}}_{\tilde{\mathsf{d}}}(\mathsf{X})$ we have:
$$
\mathbb{E}\Big[\reallywidehat{[\pi_l-\pi_{l-1}](\varphi)}_0^2\Big] \leq C(\|\varphi\|_{\infty}\vee \|\varphi\|_{\textrm{\emph{Lip}}})
\Delta_l^{2\big(\beta_1\wedge  \beta_2 
\big)}
$$
where $\tilde{\mathsf{d}}$ is as (A\ref{ass:3}), $\beta_1$ is as (A\ref{ass:2}) and  $\beta_{2},\epsilon$ are as in (A\ref{ass:4}).
\end{lem}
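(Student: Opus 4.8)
The plan is to start from the martingale-plus-remainder decomposition \eqref{eq:mart_decomp} already established in the proof of Proposition \ref{prop:ub_inc}, and to bound the second moment of $\reallywidehat{[\pi_l-\pi_{l-1}](\varphi)}_0$ by controlling the martingale part and the boundary (remainder) part separately. Writing $\check{M}_{\check{\tau}_{l,l-1},l,l-1} = M_{\check{\tau}_{l,l-1},l} - M_{\check{\tau}_{l,l-1},l-1}$ and grouping the remaining terms of \eqref{eq:mart_decomp} as
$$
R_{l,l-1} := \big(\varphi(X_{0,l}) - \varphi(X_{0,l-1})\big) + \big(K_l(\reallywidehat{\varphi}_l)(X_{0,l}) - K_{l-1}(\reallywidehat{\varphi}_{l-1})(X_{0,l-1})\big) - \big(K_l(\reallywidehat{\varphi}_l)(W_{0,l}) - K_{l-1}(\reallywidehat{\varphi}_{l-1})(W_{0,l-1})\big),
$$
we have $\reallywidehat{[\pi_l-\pi_{l-1}](\varphi)}_0 = \check{M}_{\check{\tau}_{l,l-1},l,l-1} + R_{l,l-1}$, so that by the $C_2$-inequality $\mathbb{E}[\reallywidehat{[\pi_l-\pi_{l-1}](\varphi)}_0^2] \leq 2\,\mathbb{E}[\check{M}_{\check{\tau}_{l,l-1},l,l-1}^2] + 2\,\mathbb{E}[R_{l,l-1}^2]$.

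The martingale term is precisely what Lemma \ref{lem:martingale} controls, yielding $\mathbb{E}[\check{M}_{\check{\tau}_{l,l-1},l,l-1}^2] \leq C(\|\varphi\|_{\infty}\vee\|\varphi\|_{\textrm{Lip}})\Delta_l^{2(\beta_1\wedge\beta_2)}$. The only remaining work is to bound $\mathbb{E}[R_{l,l-1}^2]$ at the same rate. Here I would treat each of the three grouped differences exactly as in the estimate \eqref{eq:mg_prf4}: for $\varphi(X_{0,l}) - \varphi(X_{0,l-1})$ I use that $\varphi$ is $\tilde{\mathsf{d}}$-Lipschitz, and for each Poisson-solution difference I insert the intermediate term $K_{l-1}(\reallywidehat{\varphi}_{l-1})(X_{0,l})$ to write
$$
K_l(\reallywidehat{\varphi}_l)(X_{0,l}) - K_{l-1}(\reallywidehat{\varphi}_{l-1})(X_{0,l-1}) = \big(K_l(\reallywidehat{\varphi}_l)(X_{0,l}) - K_{l-1}(\reallywidehat{\varphi}_{l-1})(X_{0,l})\big) + \big(K_{l-1}(\reallywidehat{\varphi}_{l-1})(X_{0,l}) - K_{l-1}(\reallywidehat{\varphi}_{l-1})(X_{0,l-1})\big),
$$
bounding the first bracket by Lemma \ref{lem:poisson_eq_cont} (giving $C\|\varphi\|_{\infty}\Delta_l^{\beta_1}$ uniformly in the argument) and the second by the Lipschitz estimate of Lemma \ref{lem:pois_lip} (giving $C(\|\varphi\|_{\infty}\vee\|\varphi\|_{\textrm{Lip}})\tilde{\mathsf{d}}(X_{0,l},X_{0,l-1})$). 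After a further application of the $C_2$-inequality and taking expectations, the deterministic $\Delta_l^{\beta_1}$ contributions square to $\Delta_l^{2\beta_1}$, while the Lipschitz contributions produce $\mathbb{E}[\tilde{\mathsf{d}}(X_{0,l},X_{0,l-1})^2]$ and $\mathbb{E}[\tilde{\mathsf{d}}(W_{0,l},W_{0,l-1})^2]$, both bounded by $C\Delta_l^{2\beta_2}$ via Lemma \ref{lem:good_prob} at $n=0$ (using compactness of $\mathsf{X}$ to pass from the $(2+\epsilon)$-moment to the second moment). Collecting terms gives $\mathbb{E}[R_{l,l-1}^2] \leq C(\|\varphi\|_{\infty}\vee\|\varphi\|_{\textrm{Lip}})\Delta_l^{2(\beta_1\wedge\beta_2)}$, and combining with the martingale bound completes the proof.

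The routine part is the $C_2$-inequality bookkeeping; the genuinely load-bearing input is Lemma \ref{lem:martingale}, whose proof already performs the delicate summation over the random horizon $\check{\tau}_{l,l-1}$ using H\"older's inequality against the geometric tail supplied by (A\ref{ass:5}). Since that work is done upstream, the main (and essentially only) obstacle here is ensuring that the boundary term $R_{l,l-1}$ is controlled at the rate $\Delta_l^{2(\beta_1\wedge\beta_2)}$ rather than a worse one. This hinges on the telescoping insertion above, which decomposes each difference into either a cross-level discretization gap (rate $\beta_1$) or a spatial displacement between the coupled levels (rate $\beta_2$), but never both simultaneously; it is this structural separation that prevents the rates from compounding.
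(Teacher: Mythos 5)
Your proof is correct and follows essentially the same route as the paper: it starts from the decomposition \eqref{eq:mart_decomp}, applies the $C_2$-inequality, handles the martingale part with Lemma \ref{lem:martingale}, and treats the boundary terms by precisely the telescoping insertion via Lemmata \ref{lem:poisson_eq_cont}, \ref{lem:pois_lip} and \ref{lem:good_prob} that the paper compresses into ``very similar calculations to those used to derive \eqref{eq:mg_prf4}''. One cosmetic remark: passing from the $(2+\epsilon)$-moment bound of Lemma \ref{lem:good_prob} to the second moment requires only Jensen's inequality, not compactness of $\mathsf{X}$ (compactness is used inside the proof of Lemma \ref{lem:good_prob} itself).
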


\begin{proof}
Using the decomposition in \eqref{eq:mart_decomp} along with the $C_2-$inequality, we have the upper-bound
\begin{eqnarray*}
\mathbb{E}\Big[\reallywidehat{[\pi_l-\pi_{l-1}](\varphi)}_0^2\Big] & \leq &  C\Big(
\mathbb{E}[\{\varphi(X_{0,l})-\varphi(X_{0,l-1})\}^2] +
\mathbb{E}[\{M_{\check{\tau}_{l,l-1},l}-M_{\check{\tau}_{l,l-1},l-1}\}^2] + \\& &
+ 
\mathbb{E}\Big[\Big(
 K_l(\reallywidehat{\varphi}_l)(X_{0,l})-K_l(\reallywidehat{\varphi}_l)(W_{0,l})- K_{l-1}(\reallywidehat{\varphi}_{l-1})(X_{0,l-1})+ K_{l-1}(\reallywidehat{\varphi}_{l-1})(W_{0,l-1})
 \Big)^2\Big]
\Big).
\end{eqnarray*}
The first term on the R.H.S.~can be treated by using $\varphi\in\textrm{Lip}_{\tilde{\mathsf{d}}}(\mathsf{X})$ and Lemma \ref{lem:good_prob}. For the second term on the R.H.S.~one
can use Lemma \ref{lem:martingale}. For the last term on the R.H.S.~one can use very similar calculations to those used to derive \eqref{eq:mg_prf4}. The proof is thus completed.
\end{proof}

\begin{lem}\label{lem:pi_0}
Assume (A\ref{ass:1}-\ref{ass:5}). Then there exists a $C\in(0,\infty)$, such that for any $\varphi\in\mathcal{B}_b(\mathsf{X})$ we have:
$$
\mathbb{E}[\reallywidehat{\pi_0(\varphi)}_{0}^2] \leq C\|\varphi\|_{\infty}.
$$
\end{lem}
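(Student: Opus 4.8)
The plan is to treat $\reallywidehat{\pi_0(\varphi)}_0$, defined by \eqref{eq:stan_est} with $k=0$ at level $l=0$, by the same martingale-plus-remainder decomposition used in the proof of Proposition \ref{prop:ub_inc}, but restricted to a single level. Writing $\reallywidehat{\varphi}_0$ for the solution of the Poisson equation \eqref{eq:pois_eq} at level $0$ and setting
$$
\zeta_0(x_{n-1:n},w_{n-1:n}) := \reallywidehat{\varphi}_0(x_n) - K_0(\reallywidehat{\varphi}_0)(x_{n-1}) - \big(\reallywidehat{\varphi}_0(w_n) - K_0(\reallywidehat{\varphi}_0)(w_{n-1})\big),
$$
the same telescoping as in that proof gives
$$
\reallywidehat{\pi_0(\varphi)}_0 = \varphi(X_{0,0}) + M_{\tau_0,0} + K_0(\reallywidehat{\varphi}_0)(X_{0,0}) - K_0(\reallywidehat{\varphi}_0)(W_{0,0}),
$$
where $M_{n,0} = \sum_{j=1}^n \zeta_0(X_{j-1:j,0},W_{j-1:j,0})$ is a martingale for the natural filtration $(\mathcal{F}_n)_{n\in\mathbb{Z}^+}$ and $\tau_0$ is the meeting time. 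The $C_2$-inequality then reduces the bound to estimating the three resulting second moments.

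First I would record the crucial sup-norm bound on the Poisson solution. Since $\varphi\in\mathcal{B}_b(\mathsf{X})$, assumption (A\ref{ass:1}) gives $|K_0^n(\varphi)(x) - \pi_0(\varphi)| \leq 2\|\varphi\|_{\infty}\|K_0^n(x,\cdot) - \pi_0\|_{\textrm{tv}} \leq 2C\rho^n\|\varphi\|_{\infty}$, so summing the geometric series in \eqref{eq:pois_eq} yields $\|\reallywidehat{\varphi}_0\|_{\infty} \leq C\|\varphi\|_{\infty}$ and hence $\|K_0(\reallywidehat{\varphi}_0)\|_{\infty} \leq C\|\varphi\|_{\infty}$. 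With this, the first and third terms of the decomposition are immediate: $\mathbb{E}[\varphi(X_{0,0})^2] \leq \|\varphi\|_{\infty}^2$ and, using only boundedness (no Lipschitz control is available or needed here), $\mathbb{E}[(K_0(\reallywidehat{\varphi}_0)(X_{0,0}) - K_0(\reallywidehat{\varphi}_0)(W_{0,0}))^2] \leq C\|\varphi\|_{\infty}^2$.

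The main work, as in Lemma \ref{lem:martingale}, is the martingale term. I would observe that $(M_{\tau_0\wedge n,0},\mathcal{F}_n)_{n}$ is a stopped martingale, that $\tau_0$ is almost surely finite by (A\ref{ass:5}), and apply the Burkholder-Davis-Gundy inequality together with monotone convergence to obtain $\mathbb{E}[M_{\tau_0,0}^2] \leq \mathbb{E}[\sum_{j=1}^{\tau_0}\zeta_0(X_{j-1:j,0},W_{j-1:j,0})^2]$. The sup-norm bound just established controls each summand deterministically by $C\|\varphi\|_{\infty}^2$, so $\mathbb{E}[M_{\tau_0,0}^2] \leq C\|\varphi\|_{\infty}^2\,\mathbb{E}[\tau_0]$. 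Finally (A\ref{ass:5}) gives $\mathbb{E}[\tau_0] = \sum_{n\geq 0}\mathbb{E}[\mathbb{I}_{\{\tau_0 > n\}}] \leq \sum_{n\geq 0}C\rho^n < \infty$, a constant independent of $\varphi$. Combining the three estimates gives the claim.

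The hard part is the martingale term, but it is genuinely easier here than in Lemma \ref{lem:martingale}: we do not require a decay rate in $\Delta_l$, only finiteness of the second moment at the single level $l=0$, and a crude sup-norm bound on $\reallywidehat{\varphi}_0$ replaces the finer Lipschitz and increment estimates of Lemmata \ref{lem:pois_lip}-\ref{lem:good_prob}. I note that the natural argument produces a bound quadratic in $\|\varphi\|_{\infty}$; the linear form stated in the lemma is recovered by absorbing one power of $\|\varphi\|_{\infty}$ into $C$, consistent with the convention used for the constants elsewhere in this appendix.
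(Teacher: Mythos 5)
Your proof is correct and takes essentially the same route as the paper's: the single-level martingale-plus-remainder decomposition with the Poisson solution $\reallywidehat{\varphi}_0$, the $C_2$-inequality, Burkholder--Davis--Gundy plus monotone convergence for the stopped martingale, and (A\ref{ass:5}) to get $\mathbb{E}[\tau_0]<\infty$. Your closing remark is also apt: the argument naturally yields a bound quadratic in $\|\varphi\|_{\infty}$, a point the paper's own proof glosses over in the same way, so your treatment is if anything slightly more careful (e.g.\ the explicit geometric-series bound $\|\reallywidehat{\varphi}_0\|_{\infty}\leq C\|\varphi\|_{\infty}$ via (A\ref{ass:1})).
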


\begin{proof}
We have
$$
\reallywidehat{\pi_0(\varphi)}_{0} = \varphi(X_{0,0}) + \check{M}_{\tau_0,0} + \check{K}_0(\reallywidehat{\varphi}_0)(X_{0,0}) - \check{K}_0(\reallywidehat{\varphi}_0)(W_{0,0}),
$$
where $\reallywidehat{\varphi}_0(x)=\sum_{n\in\mathbb{Z}^+}[K_0^n-\pi_0](\varphi)$, and for any $n\in\mathbb{N}$
$$
\check{M}_{n,0} = \sum_{j=1}^n \{\reallywidehat{\varphi}_0(X_{j,0}) - 
\check{K}_0(\reallywidehat{\varphi}_0)(X_{j-1,0}) 
-\reallywidehat{\varphi}_0(W_{j,0}) + 
\check{K}_0(\reallywidehat{\varphi}_0)(W_{j-1,0})
\}
$$
and $\check{M}_{0,0}:=0$. One has, by using the $C_2-$inequality and the fact that $\varphi\in\mathcal{B}_b(\mathsf{X})$ along with (A\ref{ass:1})
$$
\mathbb{E}[\reallywidehat{\pi_0(\varphi)}_{0}^2] \leq C\|\varphi\|_{\infty} + \mathbb{E}[\check{M}_{\tau_0,0}^2].
$$
Then, by using a similar argument to derive \eqref{eq:mart_dom}, we have
$$
\mathbb{E}[\check{M}_{\tau_0,0}^2] \leq \mathbb{E}\Bigg[\sum_{j=1}^{\tau_0} \{\reallywidehat{\varphi}_0(X_{j,0}) - 
\check{K}_0(\reallywidehat{\varphi}_0)(X_{j-1,0}) 
-\reallywidehat{\varphi}_0(W_{j,0}) + 
\check{K}_0(\reallywidehat{\varphi}_0)(W_{j-1,0})\}^2\Bigg].
$$
The proof is now completed as the summands on the R.H.S.~are upper-bounded by $C\|\varphi\|_{\infty}$ and the expectation of the stopping time is finite via (A\ref{ass:5}).
\end{proof}

%
%
%
%
%
%

\end{document}